\newcommand{\later}[1]{{\color{red}\{todo: #1\}}}
\begin{document}

\title[Type-Directed Synthesis of Visualizations from Natural Language Queries]{Type-Directed Synthesis of Visualizations from Natural Language Queries}         


\author{Qiaochu Chen}
\affiliation{
  \institution{University of Texas at Austin} 
  \city{Austin}
  \state{Texas}
  \country{USA}
}
\email{qchen@cs.utexas.edu}

\author{Shankara Pailoor}
\affiliation{
  \institution{University of Texas at Austin}
  \city{Austin}
  \state{Texas}
  \country{USA} 
}
\email{spailoor@cs.utexas.edu}

\author{Celeste Barnaby}
\affiliation{
  \institution{University of Texas at Austin}
  \city{Austin}
  \state{Texas}
  \country{USA} 
}
\email{celestebarnaby@utexas.edu}

\author{Abby Criswell}
\affiliation{
  \institution{University of Texas at Austin}
  \city{Austin}
  \state{Texas}
  \country{USA} 
}
\email{abbycriswell@utexas.edu}

\author{Chenglong Wang}
\affiliation{
  \institution{Microsoft Research}
  \city{Redmond}
  \state{Washington}
  \country{USA} 
}
\email{chenglong.wang@microsoft.com}

\author{Greg Durrett}
\affiliation{
  \institution{University of Texas at Austin} 
  \city{Austin}
  \state{Texas}
  \country{USA}
}
\email{gdurrett@cs.utexas.edu}

\author{Isil Dillig}
\affiliation{
  \institution{University of Texas at Austin}
  \city{Austin}
  \state{Texas}
  \country{USA}
}
\email{isil@cs.utexas.edu}

\begin{abstract}
We propose a new technique based on program synthesis for automatically generating visualizations from natural language queries. Our method parses the natural language query into a refinement type specification using the \emph{intents-and-slots paradigm} and leverages type-directed synthesis to generate a set of visualization programs that are most likely to meet the user's intent. Our refinement type system captures useful hints present in the natural language query and allows the synthesis algorithm to reject visualizations that violate well-established design guidelines for the input data set. We have implemented our ideas in a tool called \toolname and evaluated it on {\sc NLVCorpus}, which consists of 3 popular datasets and over 700 real-world natural language queries. Our experiments show that \toolname significantly outperforms state-of-the-art natural language based visualization tools, including transformer and rule-based ones.

\end{abstract}

\begin{CCSXML}
<ccs2012>
<concept>
<concept_id>10011007.10011006.10011008</concept_id>
<concept_desc>Software and its engineering~General programming languages</concept_desc>
<concept_significance>500</concept_significance>
</concept>
<concept>
<concept_id>10003456.10003457.10003521.10003525</concept_id>
<concept_desc>Social and professional topics~History of programming languages</concept_desc>
<concept_significance>300</concept_significance>
</concept>
</ccs2012>
\end{CCSXML}

\ccsdesc[500]{Software and its engineering~General programming languages}
\ccsdesc[300]{Social and professional topics~History of programming languages}


\maketitle

\section{Introduction}
\label{sec:intro}

Natural language interfaces (NLIs)~\cite{articulate,datatone,flowsense} for visualization promise to democratize the visualization authoring process. Given a dataset (often a relational table) and a natural language description, an NLI can generate a set of visualizations that most likely meet the user's intent. For instance, given the dataset shown in~\autoref{fig:intro-table} and a query such as ``give me a scatter plot that shows the fuel economy of all car models'',  an NLI can, in principle, generate the scatter plot shown on the right side of ~\autoref{fig:intro-table}. In this way, even a user with no programming experience can generate  visualizations from large-scale data.

\begin{figure}[t]
\small
\centering
\begin{minipage}{.48\textwidth}
\begin{tabular}{cccc}
    \hline
    \rowcolor{platinum}
     \textbf{Model} & \textbf{Fuel\_economy} & \textbf{Body\_style} & \textbf{Origin} \\
     \hline
     \rowcolor{whitesmoke}
     S-101 & 32 & Sedan & Japan  \\
     \hline
     \rowcolor{whitesmoke}
     S-102 & 39 & SUV & USA \\
     \hline
     \rowcolor{whitesmoke}
     S-103 & 22 & Pickup & USA \\
     \hline
     \rowcolor{whitesmoke}
     S-104 & 39 & Hatchbacks & Japan \\
     \hline
     \rowcolor{whitesmoke}
     \ldots & \ldots & \ldots & \ldots\\
     \hline
\end{tabular}
\end{minipage}
\begin{minipage}{.5\textwidth}
\includegraphics[width=0.9\textwidth, trim=0 560 200 30, clip]{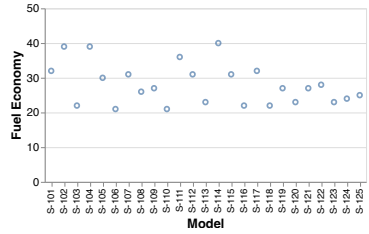}
\end{minipage}
\vspace*{-0.3cm}
\caption{On the left is a cars dataset showing the fuel economy, body style and origin for each model. The plot on the right is for the query ``give me a scatter plot that shows the fuel economy of all car models''.}
\vspace*{-0.5cm}
\label{fig:intro-table}
\end{figure}

\begin{figure}
    \centering
    \begin{subfigure}[t]{0.50\textwidth}
    \centering
    \includegraphics[width=0.9\textwidth, trim=40 480 250 0, clip]{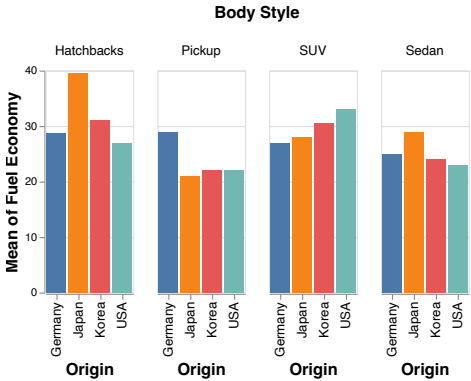}
    \vspace*{-0.4cm}
    \caption{The user-intended plot.}
    \label{fig:me_correct}
    \end{subfigure}
    \hspace*{0.5cm}
    \begin{subfigure}[t]{0.45\textwidth}
    \centering
    \includegraphics[width=0.88\textwidth, trim=30 520 300 30, clip]{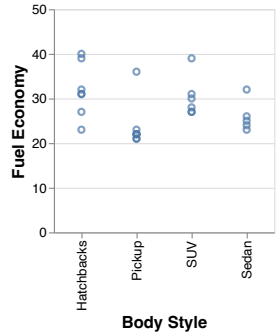}
    \vspace*{-0.4cm}
    \caption{One of the plots returned by {\sc NL4DV}.}
    \label{fig:me_nl4dv}
    \end{subfigure}
    \vspace*{-0.5cm}
    \caption{Figures for ``show the fuel efficiency for cars from different countries segregated based on body style''.}
\end{figure}

While existing NLIs are effective in producing relatively simple visualizations, a recent study~\cite{nlvcorpus} found that these tools are unable to generate more complex visualizations, such as those that involve subplots or that require performing non-trivial transformations on the input data. 
For example, for the input query ``generate a graph to show the fuel efficiency for cars from different countries segregated based on  body style'' and the dataset from Figure~\ref{fig:intro-table},  state-of-the-art tools  return the plot shown in Figure~\ref{fig:me_nl4dv} as opposed to the ideal plot shown in Figure~\ref{fig:me_correct}.

In this paper, we propose a new technique for generating visualizations from natural language descriptions. Our method combines NLP techniques with program synthesis to address several challenging aspects of data visualization. In particular, our technique can handle fairly ambiguous natural language queries, including those that do not fully specify the desired plot type. In addition, our method can perform transformations and aggregations on the input data, allowing it to handle visualizations that require non-trivial data wrangling.  As an example,  our method can produce the correct plot, shown in Figure~\ref{fig:me_correct}, for the input query mentioned earlier.

At the heart of our technique lies a refinement type system that can be used to express  properties of the desired visualization. At a high level, our method first uses state-of-the-art NLP techniques, namely a BERT-based \cite{bert} intent-and-slots model ~\cite{tur-et-al-intents}, to parse the natural language description into a set of likely refinement type specifications for the desired visualization. Then, for each refinement type specification, our method performs type-directed program synthesis to generate a set of visualization programs of the appropriate type, using a notion of \emph{type compatability} to prune large parts of the search space. Hence, the refinement type system is useful  not only   as a specification mechanism but also for guiding synthesis and reducing the search space.

A distinguishing feature of the synthesis problem in our setting is that a single visualization task typically results in many synthesis problems, one for each  refinement type specification inferred by the parser. However,  because each invocation of the synthesizer can be quite expensive, it is important to reuse information across different synthesis problems. Our approach addresses this concern by learning so-called \emph{synthesis lemmas} that can be used to prove unrealizability of future synthesis tasks. In particular, our approach leverages a novel notion of \emph{refinement type interpolants} to learn useful facts that can be reused across different synthesis goals involving the same data set.

We have implemented our proposed approach as a new tool called \toolname and evaluated it on {\sc NLVCorpus} \cite{nlvcorpus}, which consists of 3 popular visualization datasets and over 700 real-world natural language queries. Our evaluation demonstrates that \toolname yields significantly better results compared to existing state-of-the-art baselines, including transformer and rule-based NLIs. We also perform ablation studies to evaluate the importance of our proposed techniques and show that they all contribute to making our approach practical.

To summarize, this paper makes the following key contributions:

\begin{itemize}[leftmargin=*]
\item We propose a new synthesis-based technique for generating visualizations from an input data set and natural language description.
\item We introduce a refinement type system that is useful both as a specification mechanism and for guiding program synthesis.
\item We describe a technique  based on the intents-and-slots paradigm for parsing natural language descriptions into refinement type specifications.
\item We propose a type-directed synthesis algorithm that uses a notion of \emph{type compatibility} to prune the search space and learns \emph{synthesis lemmas} that are useful across different synthesis attempts.
\item We implement our approach in a tool called \toolname and perform a large-scale evaluation on over 700 real-world visualization tasks.
\end{itemize}

\section{Overview}\label{sec:overview}

We give a high-level overview of our technique with the aid of the motivating example introduced in Section~\ref{sec:intro}. In particular, consider the dataset from Figure~\ref{fig:intro-table} listing the fuel economy of different cars and the following natural language  query:
\[
\emph{"Show the fuel efficiency of cars from different countries segregated based on body style"}
\]
Given this query, our tool, \toolname, generates the visualizations shown in Figure~\ref{fig:me_correct_2}. Among the plots shown here, the top one is the intended one, with corresponding visualization script shown in Figure~\ref{fig:me_prog}. We now explain how \toolname is able to generate these visualizations, highlighting salient features of our approach. 

\begin{figure}
\begin{minipage}[t]{0.4\textwidth}
    \centering
    \vspace*{-0.5cm}
    \includegraphics[width=0.7\textwidth, trim=40 500 250 30, clip]{figures/me-gt2.pdf}
    \hspace*{0.5cm}
    \includegraphics[width=0.65\textwidth, trim=40 520 350 30, clip]{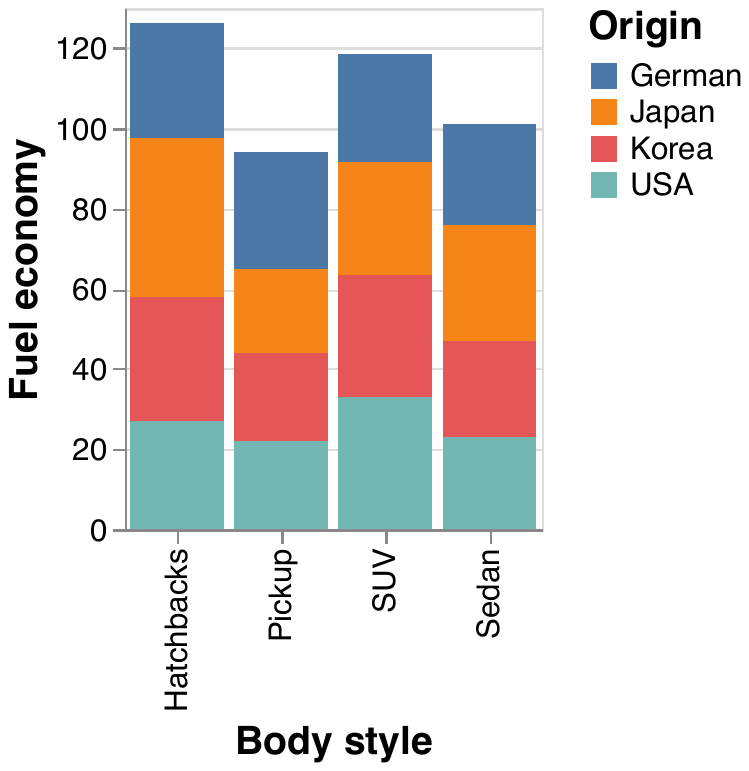}
    \vspace*{-0.5cm}
    \caption{\small Some plots \toolname returns for the example query. The one on the top is the intended one. The one at the bottom is a plot that is also consistent with the  query.}
    \label{fig:me_correct_2}
\end{minipage}
\hspace{0.5cm}
\begin{minipage}[t]{0.55\textwidth}
    \centering
    \vspace*{0.1cm} 
    \scriptsize
    \begin{align*}
        & \mathsf{let\ } T =  \qquad \qquad \qquad \qquad \qquad \mathsf{ \color{blue} \sslash\ Table\ transformation\ program} \\
        & \qquad {\sf summarize}(  \\
        & \qquad \qquad \qquad {\sf select}( T_{in}, \{{\sf Origin}, {\sf Fuel\_economy}, {\sf Body\_style}\}),\\
        & \qquad \qquad \qquad \{{\sf Origin}, {\sf Body\_style}\}, \\
        & \qquad \qquad \qquad {\sf mean}, \\
        & \qquad \qquad \qquad {\sf Fuel\_economy}) \\
        & \mathsf{ in: } \qquad \qquad \qquad \qquad \qquad \qquad \mathsf{ \color{blue} \sslash\ Plotting\ program} \\
        & \qquad  {\sf Bar}( T, \\
        & \qquad \qquad \colx={\sf Origin}, \\
        & \qquad \qquad \coly={\sf Fuel\_economy}, \\
        & \qquad \qquad \colsub={\sf Body\_style}) \\
    \end{align*}
    \normalsize
    \caption{\small The visualization program synthesized by \toolname that generates the plot on the top of Figure~\ref{fig:me_correct_2}. The top part is a table transformation program that performs a mean operation on the {\sf Fuel\_economy} column. The bottom portion generates a bar chart from the output of the above table transformation program. }
    \label{fig:me_prog}

\end{minipage}
\vspace*{-0.5cm}
\end{figure}

\paragraph{\bf Structure of visualization programs.} Similar to prior work~\cite{viser}, the  visualization programs synthesized by \toolname consist of two parts, namely a \emph{table transformation program} $\ptable$ and a \emph{plotting program}  $\pplot$ (see Figure~\ref{fig:me_prog}). Given these programs, \toolname produces a visualization  by first applying $\ptable$ to the input data set to obtain a transformed table $T$ and then applying the plotting program $\pplot$ to $T$.  Since many real-world visualizations tasks require non-trivial data wrangling, synthesis of table transformations is a crucial aspect of the \toolname workflow. We describe the domain-specific language used for visualizations in more detail in Section~\ref{sec:dsl}. 

\paragraph{\bf Motivation for refinement types.} As mentioned in Section~\ref{sec:intro}, \toolname parses the natural language query into a formal specification rather than going directly from natural language to a visualization program. This design choice hinges on two key observations: First, the natural language description often does not contain sufficient information to map it directly to a program. For instance, in our running example, the NL query does not mention anything about a bar graph. Second, the data set to be visualized also contains valuable information for deciding which visualizations make more sense. As an example, looking at the data set, we see that fuel economy is continuous (as opposed to discrete), so it would not be suitable as the x-axis for a bar graph. For these reasons, \toolname parses the NL query into an intermediate specification, which is then supplied as an input to the synthesizer.

In this work, we use refinement types as our specification because both base types and logical qualifiers are useful for guiding synthesis. In particular, record types are useful for distinguishing between different types of tabular data, and  logical qualifiers capture other forms of hints present in the natural language.  For instance, based on the natural language query given above, it is reasonable to conjecture that the color encoding of the plot should be based on country, and our type system allows expressing such information as part of the logical qualifier. We discuss our refinement type system in more detail in Section~\ref{sec:type}. 

\paragraph{\bf From NL queries to refinement types.} As a precursor to synthesis, \toolname first uses state-of-the-art NLP techniques to extract refinement type specifications from the natural language query. In particular, the extracted specifications are of the form $(\rtsym_p, \rtsym_t)$, where $\rtsym_p$ is the output type for the plotting program and $\rtsym_t$ is the output type for the table transformation program. Intuitively, we parse the NL query into two different refinement types as our synthesis procedure generates the plotting and table transformation programs independently.

Our technique for parsing a natural language query to a refinement type consists of two steps. First, we use the technique of \emph{intent classification}~\cite{tur-et-al-intents} to infer some of the base types (e.g. BarPlot, ScatterPlot) as well as which \emph{types} of predicates should be involved in the logical qualifier.  For our running example, the intent classifier is able to predict that the desired plot is a BarPlot based on the training data. In addition, note that the NL query  hints at the fact that the color encoding of the plot is based on country (i.e., ``Origin" column in the data set). Such information is encoded using so-called \emph{{syntactic constraints}} in the logical qualifiers. The  intent classifier can decide whether the NL query contains such {syntactic constraints}, but it cannot decide what the  \emph{arguments} of the  predicate are. Hence, in a second step, we use a {natural-language-processing} technique known as \emph{slot filling}~\cite{jeong-lee-2006-exploiting} to decide the arguments of the inferred predicates. For our running example, the slot-filling technique can infer that {the graph's color is likely to be the Origin field of the input data set and generates a logical qualifier that involves this syntactic constraint}. We describe our technique for parsing the NL query into a refinement type specification in more detail in Section~\ref{sec:parse}.

\begin{figure}[t]
    \centering
    \includegraphics[width=0.8\textwidth, trim=0 0 200 540, clip]{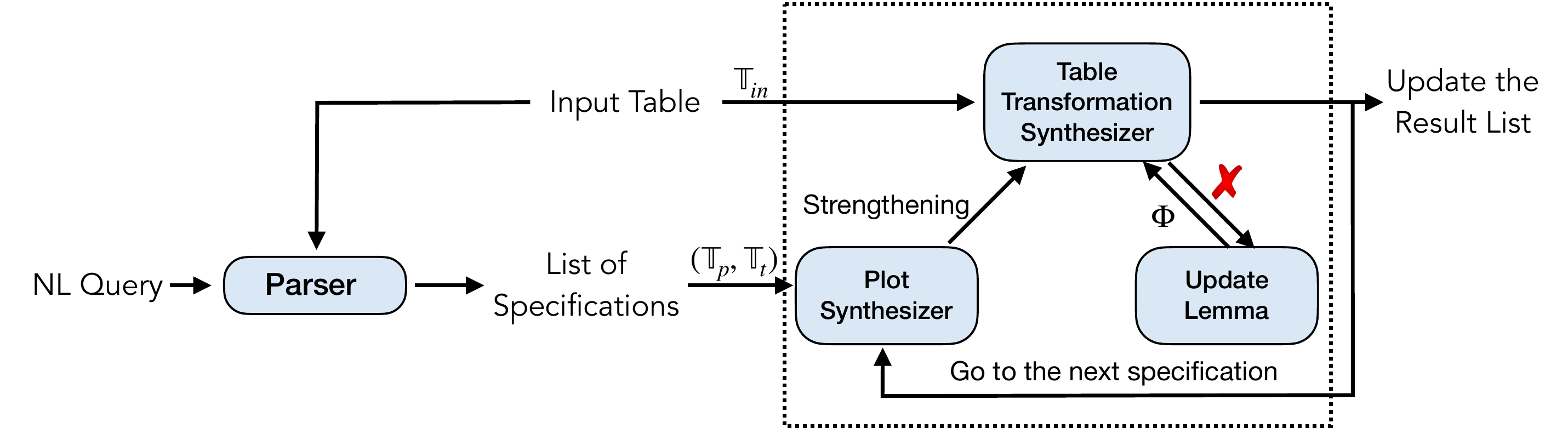}
    \vspace*{-0.3cm}
    \caption{Overview of the workflow}
    \vspace*{-0.3cm}
    \label{fig:overview}
\end{figure}

\begin{figure}
\begin{minipage}[t]{0.3\textwidth}
\includegraphics[width=\textwidth, trim=0 0 1460 840, clip]{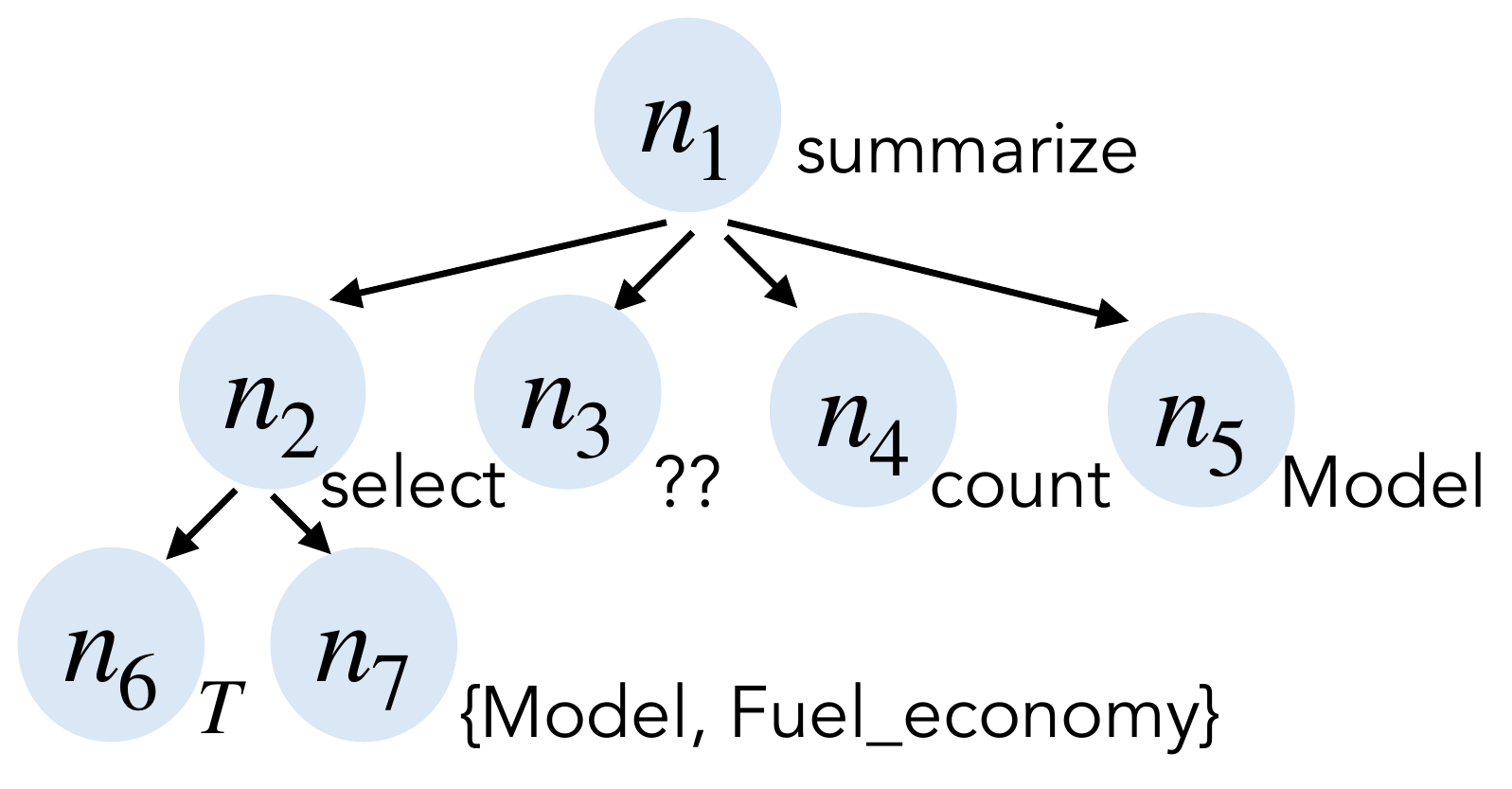}
\end{minipage}
\begin{minipage}[t]{0.675\textwidth}
\centering
\vspace*{-2.5cm}
\scriptsize
\begin{align*}
    \\
    & \mathtt{Goal\ type\ of\ } n_1: \\
    & \qquad \{\nu : {\sf Table}(\{{\sf Model} : {\sf Discrete}, {\sf Fuel\_economy} : {\sf Qualitative}\}) \ | \ \pi(\nu.{\sf Model}, {\sf count})\} \\
    & \mathtt{Goal\ type\ of\ } n_2: \\
    & \qquad  \{\nu : {\sf Table}(\{{\sf Model} : \top, {\sf Fuel\_economy} : {\sf Qualitative}\}) \ | \ {\sf True}\}
\end{align*}
\end{minipage}
\vspace*{-0.3cm}
\caption{Pruning Example. An abstract syntax tree of a partial program is shown on the left. On the right we show the goal type annotation at node $n_1$ and $n_2$.}
\label{fig:pruningexample}
\vspace*{-0.5cm}
\end{figure}

\paragraph{\bf Synthesis workflow.}  Figure~\ref{fig:overview} shows the high-level workflow of our synthesis algorithm. For \emph{each} specification $(\rtsym_p, \rtsym_t)$ generated by the parser, our synthesizer generates a \emph{set} of visualization programs that satisfy it. At a high level, the synthesis algorithm first generates a plotting program $\pplot$ such that the output type of $\pplot$ is a subtype of $\rtsym_p$. The input type of $\pplot$ is then used to \emph{strengthen} the parsed specification $\rtsym_t$ of the table transformation program to $\rtsym_t'$. For instance, in our running example, suppose we synthesize the following plotting program:
\small
\[
     {\sf Bar}(T, \colx={\sf Body\_style}, \coly={\sf Fuel\_economy}, \colco={\sf Origin})
\]
\normalsize
Such a program only makes sense if there is a unique $y$ value for every $(x, color)$ pair, so our method strengthens the output of the table transformation program with the following constraint:
\small
\[
    |{\sf Proj}(\nu, \{{\sf Body\_style}, {\sf Origin}\})| \geq |{\sf Proj}(\nu, \{{\sf Fuel\_economy}\})|
\]
\normalsize
This constraint states that the cardinality (number of unique tuples) of the output table projected on the $x$ ({\sf Body\_style}) and $color$ ({\sf Origin}) columns should be at least as big as the cardinality when projected onto the $y$ ({\sf Fuel\_economy}) column. This constraint serves as a logical qualifier for the output type of the table transformation program and is used to reduce the search space that the synthesizer needs to explore, as discussed in Section~\ref{sec:synthoverview}. 

\paragraph{\bf Type-directed synthesis.} In addition to using refinement types as the specification, our technique also uses them to guide synthesis as in prior work~\cite{synquid}.  In more detail, our algorithm performs top-down enumerative search, starting with a completely unconstrained program and expanding a non-terminal (i.e., "hole") in the partial program using one of the grammar productions.
Each hole  is annotated with a so-called \emph{goal-type} that is propagated backwards using the type system and the initial specification obtained from the NL query. As explained in more detail in Section~\ref{sec:synthoverview}, the goal type is used to decide (1) which grammar productions are applicable when performing top-down enumeration, and (2) whether a partial program is infeasible, meaning that its actual type is inconsistent with the annotated goal type. However, unlike prior work,  our approach uses a notion of \emph{type compatibility} (Section \ref{sec:incompatibility}) as opposed to \emph{subtyping}  in order to ensure that we do not rule out correct programs. 

For example, suppose we want to synthesize a table transformation that satisfies the following goal type 
\small
\[
  \{\nu : {\sf Table}({\sf Model} : {\sf Discrete}, {\sf Fuel\_economy} : {\sf Qualitative}) \ | \ \pi(\nu.{\sf Model}, {\sf count})\}
\]
\normalsize
for the table in Figure \ref{fig:intro-table}. {Here, the goal type comes in the form of a refinement type that describes the base type ${\sf Table}({\sf Model} : {\sf Discrete}, {\sf Fuel\_economy} : {\sf Qualitative})$ annotated with the predicate  $\pi(\nu.{\sf Model}, {\sf count})$. } The base type describes a table with attributes {\sf Model} and {\sf Fuel\_economy}, whose types are {\sf Discrete} and 
{\sf Qualitative} respectively. The qualifier, $\pi(\nu.{\sf Model}, {\sf count})$ is a {syntactic constraint} that indicates that the \textsf{Model} attribute of the output table is obtained by applying the \textsf{count} operation.  During synthesis, \toolname starts with a program that is a single hole and iteratively expands it using productions in the grammar. Whenever \toolname expands a hole, it propagates the above goal type to newly produced holes in the partial program.  For instance,  Figure \ref{fig:pruningexample} shows a partial program with the annotated goal type of node $n_2$. Using our  type system, \toolname can prove that this partial program is infeasible because the actual type of the term rooted at node $n_2$ is incompatible with its annotated goal type. This is because {\sf Fuel\_economy}  has type {\sf Continuous} in the input table, which  is inconsistent with the goal type labeling node $n_2$, where {\sf Fuel\_economy} is required to be {\sf Qualitative}. 


\paragraph{\bf Type-directed lemma learning.} A unique feature of our approach is its ability to learn \emph{synthesis lemmas} that can be used across different synthesis tasks involving the same data set.\footnote{While there are prior techniques that can learn useful facts during enumeration~\cite{neo,concord}, the facts they learn are not reusable across different synthesis goals.} To see why such lemmas are useful, recall that we want to generate multiple visualizations to show to the user, so we need to explore many different programs that \emph{could} be consistent with  the NL query. In general, there are multiple plausible specifications one can extract from the NL query, and there are multiple programs that satisfy each specification. Hence, our approach needs to explore many different programs during a single visualization session. 

Our approach addresses this concern by using the  type system to learn synthesis lemmas.  In particular, a \emph{synthesis lemma} is a pair of refinement types $(\lemmaG, \lemmaR)$ such that any program with goal type $\lemmaG$ also needs to be ``consistent'' (in a sense made precise in Section~\ref{sec:type}) with refinement type $\lemmaR$. Hence, if we encounter a synthesis goal (or sub-goal) that is a subtype of $\lemmaG$ but that is inconsistent with $\lemmaR$, we immediately conclude that the synthesis task is infeasible. Our synthesis algorithm learns such lemmas by inferring so-called \emph{refinement type interpolants} whenever it encounters an infeasible partial program. We discuss the algorithm for type-directed lemma learning in Section~\ref{sec:lemma}.

Going back to our running example, consider the same infeasible partial program in Figure \ref{fig:pruningexample}.  The root cause of this failure is that the program was unable to convert the {\sf Fuel\_economy} column from a {\sf Continuous} type to {\sf Qualitative}. In fact, no program in our DSL can achieve this transformation. \toolname automatically captures this fact by generating the following synthesis lemma:
\small
\[
 (\{\nu : {\sf Table}({\sf Fuel\_economy}: {\sf Qualitative})\}, \bot). 
\]
\normalsize
Hence, if we ever encounter a specification such as $\{\nu : {\sf Table}({\sf Body\_style} : {\sf Nominal}, {\sf Fuel\_economy} : {\sf Nominal}) | \ \phi_2\}$ that is a subtype of $\{\nu : {\sf Table}({\sf Fuel\_economy}: {\sf Qualitative})\}$, \toolname can immediately conclude that this  goal  is unrealizable without even attempting synthesis.

\section{Domain-Specific Language for Visualizations}
\label{sec:dsl}

\begin{figure}[t]
\small
\begin{minipage}[t]{0.45\textwidth}
\textbf{Visualization DSL}
\[
\begin{array}{r l }
    \prog_v := & \lambda T_{in}. \ \mathsf{let\ } T = \ptable(T_{in}) \mathsf{\ in\ } \pplot \\ 
    \\
\end{array}
\]
\textbf{Sub-DSL for plotting} 
\[
\begin{array}{r l }   
    \pplot := & f(T, \colx, \coly, \colco, \colsub) \\
    f := & \mathsf{Bar} \mid \mathsf{Scatter} \mid \mathsf{Line} \mid \mathsf{Area} \\ 
\end{array}
\]
\end{minipage}
\begin{minipage}[t]{0.45\textwidth}
\textbf{Sub-DSL for table transformations}
\[
\begin{array}{r l }
    \ptable := & \lambda T. \ e  \\
    e := & T   \\ 
    | & \mathsf{bin} (e, n, \coltarg) \\
    | & \mathsf{filter} (e, val_1 \ op\ val_2)  \\
    | & \mathsf{summarize}(e, \overline{\colkey}, \alpha, \coltarg) \\ 
    | & \mathsf{mutate}(e, \coltarg, op, \overline{\colarg}) \\
    | & \mathsf{select}(e, \overline{\colarg}) \\
    val := & const \mid c \\
    \alpha := & \mathsf{mean} \mid \mathsf{sum} \mid \mathsf{count} \\
    \end{array}
 \]
\end{minipage}
\vspace*{-0.3cm}
    \caption{$c$ denotes column names; $const$ are values in the Table; $n$ is an integer; $op$ is user-provided.}
\label{fig:dsl}
\end{figure}

In this section, we introduce the domain-specific language for visualization programs. As in  prior work~\cite{viser}, a visualization program in our setting first performs the necessary table transformations to obtain an intermediate table $T$ and then generates  a plot based on $T$. Hence,  as shown in Figure~\ref{fig:dsl}, a visualization program $\prog_v$ can be expressed as the composition of two programs $\ptable$ and $\pplot$, where $\ptable, \pplot$ are programs expressed in  the so-called \emph{table transformation} and \emph{plotting} DSLs, respectively. In the remainder of this section, we discuss the syntax and (informal) semantics of these two DSLs in more detail.

\paragraph{Plotting DSL}

A program in our plotting DSL takes in an input table $T$ and outputs a plot, which can be one of four types: (1) bar graph, (2) scatter plot, (3) line plot, or (4) area plot. Figure~\ref{fig:plot_example} shows an example of each type of visualization  supported by our plotting DSL. In more detail, a plotting program is  of the form $f(T, \colx, \coly, \colco, \colsub)$ where $f$ specifies the plot type,  $T$ is the input table, and the remaining arguments are attributes of $T$. Specifically, the $\colx$, $\coly$ columns specify the x- and y-axis of the plot and are required for every program in the plotting DSL. The remaining two arguments $\colco$ and $\colsub$ are optional and only make sense for plots with multiple layers or subplots (or both). In particular, the $\colco$ attribute is useful for plots that require multiple layers and specifies that each different color in the plot corresponds to a different value of the $\colco$ column. Finally, the optional fourth argument specifies that each distinct entry in the  $\colsub$ column should be used to generate a different subplot.

\begin{figure}
    \centering
    \small
\begin{minipage}[t]{0.24\textwidth}
    \centering
    \begin{equation*}
    {\tiny %
    \begin{split}
    \mathsf{Bar}(T, \colx=\texttt{Quarter}, \coly&=\texttt{Value},\\
    \colco &=\texttt{Type}, \\
    \colsub &=\texttt{Area})
    \end{split}
    }
    \end{equation*}
    \vspace*{0.1cm}
    \resizebox{0.9\textwidth}{!}{
\begin{tabular}{|c:c:c:c|}
    \hline
     {\bf Quarter} & {\bf Value} & {\bf Type} & {\bf Area}  \\
     \hline
     Q1 & 5 & T1 & Region A \\
     \hline
     Q1 & 6 & T2 & Region B \\ 
     \hline
     \ldots & \ldots & \ldots & \ldots\\
     \hline
\end{tabular}
}
    \vspace*{0.15cm}
    \includegraphics[width=0.9\textwidth, trim=40 520 300 50, clip]{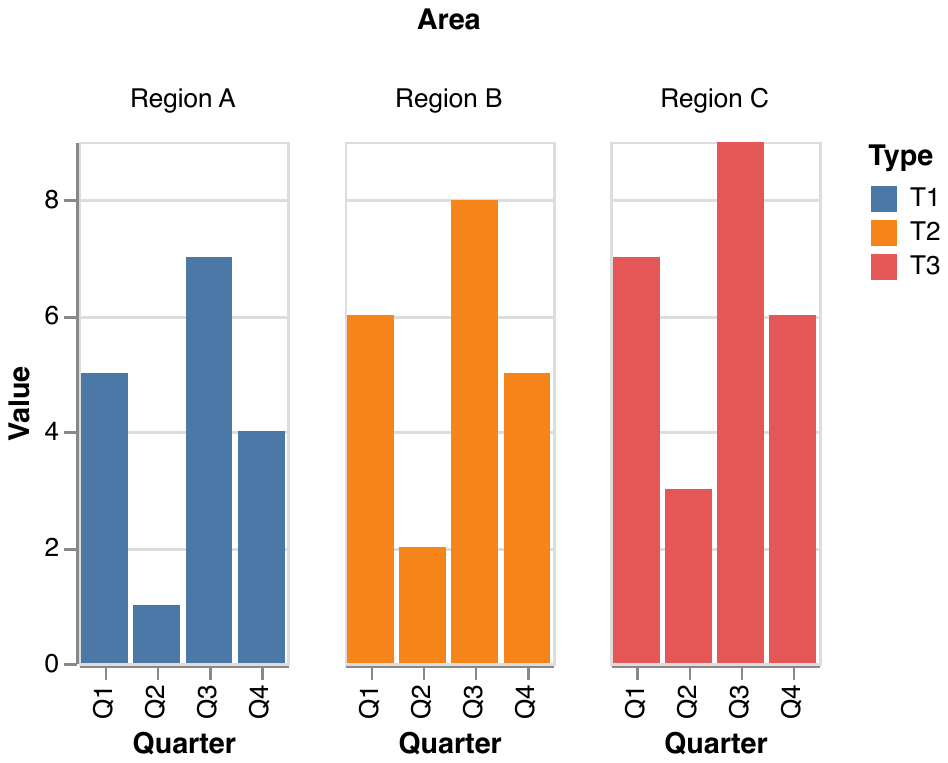}
\end{minipage}
\begin{minipage}[t]{0.24\textwidth}
    \centering
    \begin{equation*}
    {\tiny %
    \begin{split}
    \mathsf{Line}(&T, \colx=\texttt{Year}, \\
    &\coly=\texttt{Revenue},\\
    & \colco=\texttt{Type})
    \end{split}
    }
    \end{equation*}
    \vspace*{0.1cm}
    \resizebox{0.65\textwidth}{!}{
\begin{tabular}{|c:c:c|}
    \hline
     {\bf Year} & {\bf Revenue} & {\bf Type} \\
     \hline
     2000 & 2000 & A  \\
     \hline
     2001 & 1234 & A \\
     \hline
     \ldots & \ldots & \ldots\\
     \hline
\end{tabular}
}
    \vspace*{0.15cm}
    \includegraphics[width=0.8\textwidth, trim=40 560 350 30, clip]{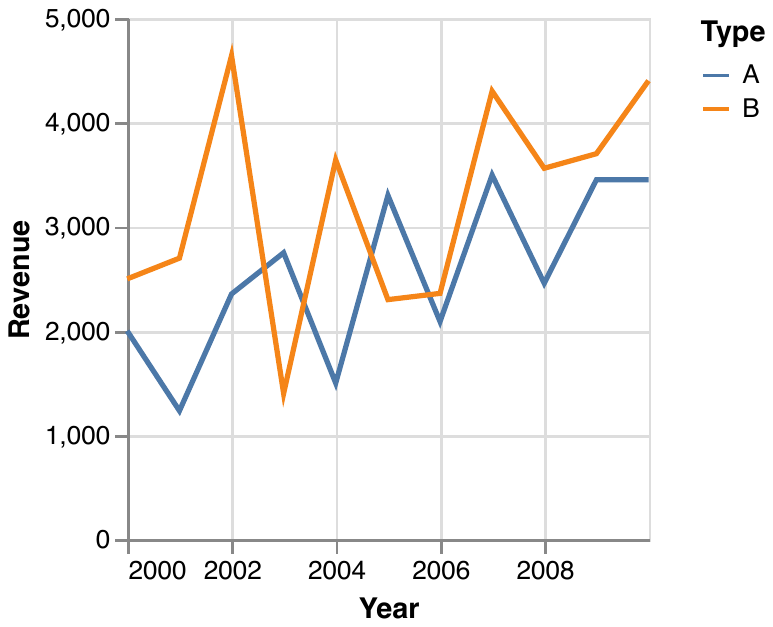}
\end{minipage}
\begin{minipage}[t]{0.24\textwidth}
    \centering
    \begin{equation*}
    {\tiny %
    \begin{split}
    \mathsf{Scatter}(&T, \colx=\texttt{Budget}, \\ &\coly=\texttt{Box\_Office}, \\
    &\colco=\texttt{Rating})
    \end{split}
    }
    \end{equation*}
    \vspace*{0.1cm}
    \resizebox{0.8\textwidth}{!}{
\begin{tabular}{|c:c:c|}
    \hline
     {\bf Box Office} & {\bf Budget} & {\bf Rating} \\
     \hline
     25 & 20 & R  \\
     \hline
     26 & 60 & R \\
     \hline
     \ldots & \ldots & \ldots\\
     \hline
\end{tabular}
}
    \vspace*{0.15cm}
    \includegraphics[width=0.75\textwidth, 
    trim=40 560 350 30, 
    ]{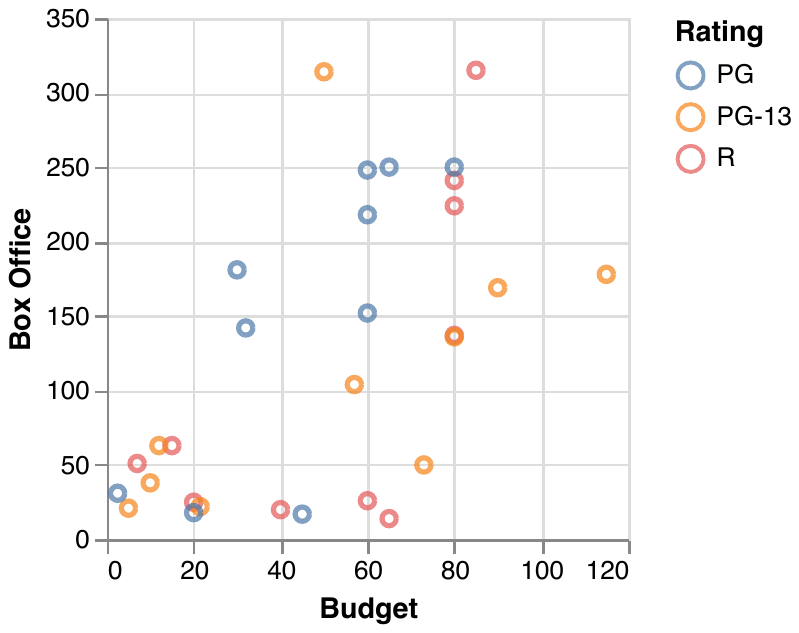}
\end{minipage}
\begin{minipage}[t]{0.24\textwidth}
    \centering
    \begin{equation*}
    {\tiny %
    \begin{split}
    \mathsf{Area}(&T, \colx=\texttt{Year}, \\ & \coly=\texttt{Profit}, \\ 
    & \colco=\texttt{Store})
    \end{split}
    }
    \end{equation*}
    \vspace*{0.1cm}
    \resizebox{0.6\textwidth}{!}{
\begin{tabular}{|c:c:c|}
    \hline
     {\bf Year} & {\bf Profit} & {\bf Store}   \\
     \hline
     2015 & 20 & A  \\
     \hline
     2015 & 10 & B \\ 
     \hline
     \ldots & \ldots & \ldots\\
     \hline
\end{tabular}
}
    \vspace*{0.15cm}
    \includegraphics[width=0.75\textwidth, trim=40 560 350 30, clip]{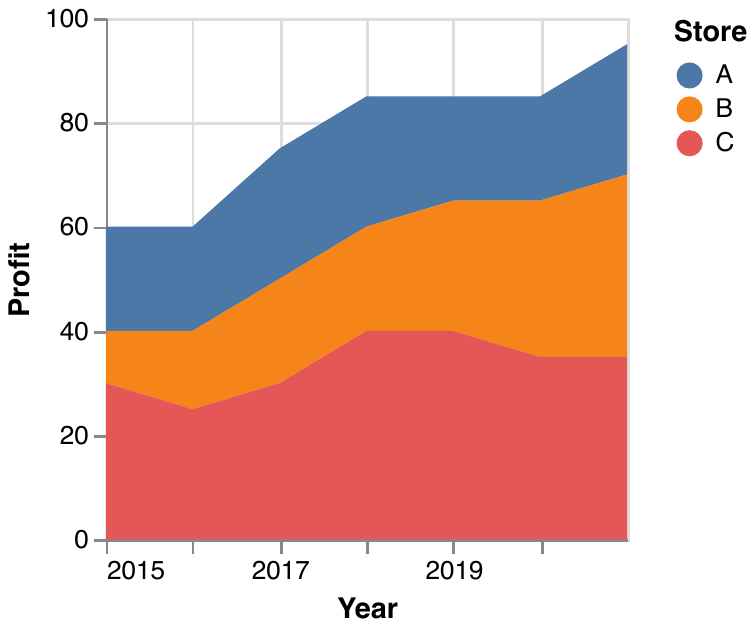}
\end{minipage}
    \vspace*{-0.5cm}
    \caption{Examples of plotting programs and their corresponding visualizations. }
    \label{fig:plot_example}
\end{figure}

\paragraph{Table transformation DSL} As shown in Figure~\ref{fig:dsl},  a table transformation  program takes in an input table {$T_{in}$}, and outputs a table $T$ by applying a sequence of transformations that are inspired by relational algebra and supported by many popular visualization languages, such as VegaLite and {\tt ggplot2}. In particular, our table transformation DSL includes the following useful constructs:

\begin{itemize}[leftmargin=*]
    \item The $\mathsf{bin}$ operation discretizes a numeric column in the table $\coltarg$ into a set of bins. Here, the argument $n$ specifies the number of bins that the entries in $\coltarg$ should be split into. For example, in the first input table shown in Figure ~\ref{fig:table_example}, column $c_2$ is binned.
    
    \item The $\mathsf{filter}$ construct corresponds to the standard selection operation in relational algebra. Given a table $T_{in}$ and predicate $\phi$ of the form  $val \ op \ val$, it produces a subset of $T_{in}$ consisting of all tuples  that satisfy $\phi$. The second illustration in Figure ~\ref{fig:table_example} offers an example of the filter operation.

    \item The $\mathsf{summarize}$ construct performs an aggregation operation specified by $\alpha$ on a specified column $\coltarg$. In more detail, given an input table $t$ and  "keys" (i.e., columns) $\colkey = [c_1, \ldots, c_k]$, it produces a new table that has columns $c_1, \ldots, c_k, \coltarg$ such that for each value of the tuple $(c_1, \ldots, c_k)$, the corresponding value of $\coltarg$ is obtained by applying the aggregation operator $\alpha$ to those entries that have the same value for $(c_1, \ldots, c_k)$. In the third illustration in Figure ~\ref{fig:table_example}, column $c_2$ is summarized by the \texttt{count} operator.
    
    \item The $\mathsf{mutate}$ construct produces a table that has one more column $\coltarg$ than its input table. In particular, the value stored in $\coltarg$ is obtained by applying operator $op$ to the corresponding values stored in columns $\overline{\colarg}$. In the fourth illustration in Figure ~\ref{fig:table_example}, the \texttt{mutate} operator creates column $c_3$ by taking the max of columns $c_1$ and $c_2$.
    
    \item The $\mathsf{select}$ construct corresponds to the standard projection operation in relational algebra. In particular, $\mathsf{select}(t, \overline{\colarg})$ yields a table containing only the columns $\overline{\colarg}$. 
\end{itemize}
Observe that the first argument of each operator is a term $e$ in the table transformation DSL; thus, these transformations can be arbitrarily nested within one another. Hence, the table transformation DSL allows performing non-trivial data wrangling tasks that require applying many different operations to the input table.

\begin{figure}
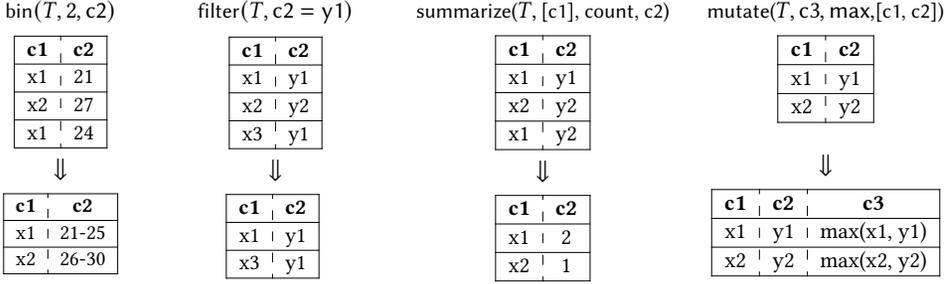

    \centering
    \small
\begin{minipage}[t]{0.20\textwidth}
    \centering
    \begin{equation*}
    {\footnotesize %
        \textsf{bin}(T, 2, {\sf c2})
    }
    \end{equation*}
    \resizebox{0.5\textwidth}{!}{
\begin{tabular}{|c:c|}
    \hline
     {\bf c1} & {\bf c2} \\
     \hline
     x1 & 21   \\
     \hline
     x2 & 27 \\
     \hline
    x1 & 24 \\
     \hline 
\end{tabular}
} \\
    \vspace*{-0.2cm}
    \[
    \Downarrow
    \]
    \vspace*{-0.6cm}
    \resizebox{0.6\textwidth}{!}{
\begin{tabular}{|c:c|}
    \hline
     {\bf c1} & {\bf c2} \\
     \hline
     x1 & 21-25   \\
     \hline
     x2 & 26-30 \\
     \hline
\end{tabular}
}
\end{minipage}
\begin{minipage}[t]{0.20\textwidth}
    \centering
    \begin{equation*}
    {\footnotesize %
    \textsf{filter}(T, {\sf c2 = y1})
    }
    \end{equation*}
    \resizebox{0.5\textwidth}{!}{
\begin{tabular}{|c:c|}
    \hline
     {\bf c1} & {\bf c2} \\
     \hline
     x1 & y1   \\
     \hline
     x2 & y2 \\
     \hline
     x3 & y1 \\
     \hline
\end{tabular}
} \\
    \vspace*{-0.2cm}
    \[
    \Downarrow
    \]
    \resizebox{0.5\textwidth}{!}{
\begin{tabular}{|c:c|}
    \hline
     {\bf c1} & {\bf c2} \\
     \hline
     x1 & y1   \\
     \hline 
     x3 & y1 \\ 
     \hline
\end{tabular}
}
\end{minipage}
\begin{minipage}[t]{0.30\textwidth}
    \centering
    \begin{equation*}
    {\footnotesize %
      \textsf{summarize(} T, \textsf{[c1], } \textsf{count, c2)}
    }
    \end{equation*}
    \resizebox{0.34\textwidth}{!}{
\begin{tabular}{|c:c|}
    \hline
     {\bf c1} & {\bf c2} \\
     \hline
     x1 & y1   \\
     \hline
     x2 & y2 \\
     \hline
     x1 & y2 \\
     \hline
\end{tabular}
} \\
    \vspace*{-0.2cm}
    \[
    \Downarrow
    \]
    \vspace*{-0.6cm}
    \resizebox{0.34\textwidth}{!}{
\begin{tabular}{|c:c|}
    \hline
     {\bf c1} & {\bf c2} \\
     \hline
     x1 & 2   \\
     \hline
     x2 & 1 \\
     \hline
\end{tabular}
}
\end{minipage}
\begin{minipage}[t]{0.23\textwidth}
    \centering
    \begin{equation*}
    {\footnotesize %
    \textsf{mutate(} T, {\sf c3, max,} \textsf{[c1, c2])}
    }
    \end{equation*}
    \resizebox{0.45\textwidth}{!}{
\begin{tabular}{|c:c|}
    \hline
     {\bf c1} & {\bf c2} \\
     \hline
     x1 & y1   \\
     \hline
     x2 & y2 \\
     \hline
\end{tabular}
} \\
    \vspace*{0.08cm}
    \[
    \Downarrow
    \]
    \resizebox{1\textwidth}{!}{
\begin{tabular}{|c:c:c|}
    \hline
     {\bf c1} & {\bf c2} & {\bf c3} \\
     \hline
     x1 & y1 & max(x1, y1)   \\
     \hline
     x2 & y2 & max(x2, y2) \\
     \hline
\end{tabular}
}
\end{minipage}
\vspace{-0.3cm}
    \caption{Examples of table transformation programs. }
    \label{fig:table_example}
    \vspace{-0.5cm}
\end{figure}

\section{Overview of Refinement Type System}\label{sec:type}

While the visualization DSL introduced in Section~\ref{sec:dsl} does not have explicit type annotations, our approach leverages a   refinement type system that facilitates effective synthesis. The design of our type system is based on two pragmatic considerations: first, we want our refinement types to serve as useful specifications, meaning that they should capture the clues that are commonly found in natural language descriptions of visualization tasks. Second, we want our type system to be useful for pruning infeasible parts of the search space during synthesis. 
With these considerations in mind, we introduce those aspects of our refinement type system that are necessary for understanding the overall synthesis approach.

\subsection{Type Syntax } \label{sec:syntax}

As standard~\cite{liquid}, a refinement type is of the form 
$\{\nu: \btsym \ | \ \phi\}$ where $\btsym$ is a base type and $\phi$ is a logical qualifier. As shown in Figure~\ref{fig:type},  base types include strings, integers, four different types of plots, and tables. A table type $\mathsf{Table}(\sigma)$ denotes a table with schema $\sigma$, which maps each column name (attribute) to its column type, which indicates the type of values stored under that column. The column type $\top$ indicates \emph{any} type of of data, whereas Quantitative and Qualitative indicate whether the entry is associated with a quantity or quality respectively.  Quantitative data can be further divided into Continuous and Discrete, and Qualitative data can be divided into Nominal, Ordinal, and Temporal (e.g., date or year).

In contrast to base types, logical qualifiers are formulas formed from atomic predicates using the standard logical connectives $\neg, \land$, and $\lor$. We differentiate between two types of atomic predicates, namely \emph{{syntactic constraints}}  $\pi(..)$ and \emph{table property predicates} of the form $\term \between \term$. We discuss both types of atomic predicates in more detail below.

\paragraph{{{\bf Syntactic constraints.}}} Given a table or plot $x$ with attribute $c$, the predicate $\prov(x.c, \provop)$ expresses that $\provop$ was used in the derivation of $x.c$. Here, $\provop$ is either a  built-in function $f$ in our DSL (e.g., \textsf{count, mutate}) indicating that function $f$ was involved in the computation of $x.c$, or a term of the form $x'.c'$ indicating data flow from $x'.c'$ to $x.c$.   Intuitively, the {syntactic constraints} in our type system allow encoding useful hints present in the natural language description about the origin of the data used in the  visualization task. 


\begin{example}
For the running example from Section~\ref{sec:overview}, our NL parser generates the following type for the output of the plotting program:
\small
\[
\{ \nu:\mathsf{BarPlot} \ | \ \pi(\nu.{\sf color}, x.{\sf Origin}) \}
\]
\normalsize
where $x$ refers to the input of the plotting program. Here, the base type indicates that we want a bar graph, and the {syntactic constraint} indicates that the color encoding is bound to the Origin field of the input table. In other words, it indicates that colors in the plot correspond to values of the Origin column.

\end{example}

\begin{figure}[!t]
\small
\begin{minipage}[t]{0.45\textwidth}
\ \ \ \ \ \ \ \ \ \textbf{Base Type}
\[
\begin{array}{r l l}
    
    \btsym := & \tablet(\sigma) \mid \plott \mid \mathsf{str} \mid \mathsf{int} \\
    \plott := & \mathsf{BarPlot} \mid \mathsf{ScatterPlot} \\
    \mid & \mathsf{LinePlot} \mid \mathsf{AreaPlot} \\
    \sigma := & \{c_1: \tau_{c_1}, ..., c_n: \tau_{c_n}\} \\
    \columnt := & \top \mid \mathsf{Qualitative} \mid \mathsf{Quantitative} \\
    | & \mathsf{Nominal} \mid \mathsf{Ordinal} \mid \mathsf{Temporal} \\
    | & \mathsf{Discrete} \mid \mathsf{Continuous} \\
    \\
\end{array}
\]
\end{minipage}
\begin{minipage}[t]{0.45\textwidth}
\textbf{Refinement Type}
\[
 \begin{array}{r l l}  
    \rtsym := & \rtype{\btsym}{\phi} \mid x: \rtsym \rightarrow \rtsym \\
    \phi := & \prov(x.\enc, \provop) \mid  \term \ \between \ \term \ {\rm where} \     \between  \in \{=, \geq, \leq\} \\
    | & \neg \phi \mid \phi \wedge \phi \mid \phi \vee \phi \\
    \enc := & \mathsf{x} \mid \mathsf{y} \mid \mathsf{color} \mid \mathsf{subplot} \mid c \\ 
    \provop := & \mathsf{mean} \mid \mathsf{sum} \mid \mathsf{count} \mid \mathsf{bin} \mid \mathsf{filter} \mid \mathsf{mutate} \mid x.c \\
    \term := &  |\tablesym| \mid \aggr(\tablesym) \mid n \mid c \mid x \\
    \tablesym := & x \mid \mathsf{Proj}(\tablesym, \overline{c}) \mid \mathsf{Filter}(\tablesym, val_1 \ op \ val_2) \\
    \aggr := & \mathsf{max} \mid \mathsf{min} &
\end{array}
\]
\end{minipage}
    \vspace*{-0.5cm}
    \caption{Type Syntax. $c$ is a column name; $const$ are values in the Table;  $n$ is an integer; $x$ is a variable.}
    \label{fig:type}
    \vspace*{-0.3cm}
\end{figure}

\paragraph{{\bf Table properties.}} In addition to the {syntactic} requirements, our type system  allows expressing properties of tables using predicates of the form $\term \between \term$ where $\term$ is a term and $\between$ is a relation symbol (e.g., $\leq$).  In more detail, terms $\theta$ can be formed using the following constructs:

\begin{itemize}[leftmargin=*]
    \item Given a variable $x$ of type \textsf{Table}, $|x|$ represents the cardinality of $x$ (i.e., number of unique tuples).
    \item The functions \textsf{Proj} and \textsf{Filter} have the same semantics as the corresponding constructs in our table transformation DSL.
    \item Given a column $x$, the aggregation operators $\mathsf{max}(x)$ and $\mathsf{min}(x)$ return the maximum (resp. minimum) value in $x$.
\end{itemize}

Intuitively, table property predicates are useful for specifying the table transformation component of the visualization task and provide significant pruning power during synthesis.

\begin{example}
Consider the following refinement type:
\vspace{-0.05cm}
\small
\[
\rtype{\texttt{Table}({\sf Price} : {\sf Discrete}, {\sf Origin} : {\sf Nominal} )}{|\nu| = 3 \wedge {\sf max}({\sf Proj}(\nu, \{{\sf Price}\})) = 8}
\]
\normalsize
This type describes a table that (1) has two attributes, Price and Origin, of types Discrete and Nominal respectively, (2) contains three unique tuples, and (3) has a maximum value of 8 in its Price column.
\end{example}

\subsection{Subtyping} 
Given a refinement type specification $\rtsym$, the goal of our approach is to synthesize a visualization program of type $\rtsym'$  such that $\rtsym'$ is a subtype of $\rtsym$. Thus, we start by formalizing the subtyping relation for our type system using judgments of the following form:
\[
\small
\subty{\Env}{\rtsym_1}{\rtsym_2}
\]
where $\Env$ is a type environment mapping variables (and built-in DSL functions) to their corresponding types. As standard, the meaning of this judgment is that  $\rtsym_1$ is a subtype of $\rtsym_2$ under type environment $\Env$.  Since deciding subtyping  between base types does not require the type environment, we omit the type environment for base types.

\begin{figure}[!t]
    \centering
    \small
    \[
    \begin{array}{llll}
        & \vdash \sf{Quantitative} \subtype \top \ \ \ \ 
        & \vdash \sf{Qualitative} \subtype \top \\
        & \vdash \sf{Continuous}  \subtype \sf{Quantitative} \ \ \ \ 
        & \vdash \sf{Discrete} \subtype \sf{Quantitative} \\
        & \vdash \sf{Nominal}  \subtype \sf{Qualitative} \ \ \ \
        & \vdash \sf{Ordinal}  \subtype \sf{Qualitative} \ \ \ \
        & \vdash \sf{Temporal}  \subtype \sf{Qualitative} \\ 
    \end{array}
    \]
    \begin{mathpar}
    \inferrule*[Left=Base-Trans]{\vdash \btsym'' \subtype \btsym'  \ \ \ \vdash \btsym ' \subtype \btsym}{\vdash \btsym'' \subtype \btsym} \and
    \inferrule*[Left=Base-Ref]{}{\vdash \rtype{\btsym}{\phi}  \subtype \btsym}
    \\
    \inferrule*[Left=Table-Width]{}{\vdash \tablet(\{c_i:{\btsym_i}^{ \ i \in 1...n+k}\}) \subtype \tablet(\{c_i:{\btsym_i}^{ \ i \ \in 1...n}\})}
    \\
    \inferrule*[Left=Table-Permutation]{\vdash \tablet(\{c_i:{\btsym_i}^{ \ i \in 1...n}\}) \ \text{ is a permutation of } \tablet(\{c'_i:{\btsym_i}^{ \ i \ \in 1...n}\})}{\vdash \tablet(\{c_i:{\btsym_i}^{ \ i \in 1...n}\}) \subtype \tablet(\{c'_i:{\btsym_i}^{ \ i \ \in 1...n}\})}
    \\
    \inferrule*[Left=Table-Depth]{\forall i. \vdash \btsym_i \subtype \btsym_i'}{\vdash \tablet(\{c_i : \btsym_i^{ \ i \in 1 ... n}\}) \subtype \tablet(\{c_i : \btsym_{i}'^{ \ i \in 1 ... n}\})} \\ 
    \inferrule*[Left=Ref]{
    \vdash \btsym_1 \subtype \btsym_2 \\
    \mathsf{Encode}(\Env) \wedge \mathsf{Encode}(\phi_1) \Rightarrow \mathsf{Encode}(\phi_2) 
    }{\Env \vdash {\rtype{\btsym_1}{\phi_1}} \subtype {\rtype{\btsym_2} {\phi_2}}} \and
     \inferrule*[Left=Func]{\Env \vdash \rtsym_1' \subtype \rtsym_1  \ \ \ \Env \vdash \rtsym_2 \subtype \rtsym_2'}{\Env \vdash x: \ftype{\rtsym_1}{\rtsym_2} \subtype x: \ftype{\rtsym_1'}{\rtsym_2'}}
    \end{mathpar}
    \vspace*{-0.5cm}
    \caption{Base and refinement type subtyping relation.}
    \label{fig:subtyping}
    \vspace*{-0.5cm}
\end{figure}

 Figure~\ref{fig:subtyping} presents our subtyping rules. The first several rules are straightforward and show the subtyping relation between primitive types like \textsf{Discrete} and \textsf{Quantitative}. The subtyping rules for tables are essentially standard subtyping rules for records~\cite{tapl}. The last two rules for refinement types are also standard and require (1) checking the subtyping relation between base types and (2) checking the validity of a logical formula for the logical qualifiers. In particular, these rules make use of a function called \textsf{Encode} that converts the logical qualifier of a refinement type into an SMT formula. The interested reader can find details of the SMT encoding in the appendix.

\subsection{Type compatibility} \label{sec:incompatibility}

\begin{figure}[t]
    \centering
    \small
    \begin{mathpar}
    \inferrule*[Left=Symmetry]{\vdash \btsym \sim \btsym'}{\vdash \btsym' \sim \btsym} \and
    \inferrule*[Left=Data]{\btsym, \btsym' \in \tau_c \ \ \ \ \vdash \btsym \subtype \btsym' \ \lor \ \vdash \btsym' \subtype \btsym }{\vdash \btsym \sim \btsym'} \\
    \inferrule*[Left=Table]{\forall i, j. (c_i = c_j')  \to \  \vdash (\btsym_i \sim \btsym_j')}{\vdash \tablet(\{c_i : \btsym_{i}^{ \ i \in 1 ... n}\}) \sim \tablet(\{c_j' : \btsym_{j}'^{ \ j \in 1 ... m}\}) } \and
    \inferrule*[Left=Func]{\Env \vdash \rtsym_1 \sim \rtsym_1' \ \ \ \  \  \vdash \rtsym_2 \sim \rtsym_2'}{\Env \vdash x : \ftype{\rtsym_1}{\rtsym_2} \sim x : \ftype{\rtsym_1'}{\rtsym_2'}} \\
    \inferrule*[Left=Refinement-Comp]{
    \vdash \btsym_1 \sim \btsym_2 \\\\
    \mathsf{SAT}(\mathsf{Encode}(\Env) \wedge \mathsf{Encode}(\phi_1) \wedge \mathsf{Encode}(\phi_2)) }{\Env \compJudg{\rtype{\btsym_1}{\phi_1}}{\rtype{\btsym_2}{\phi_2}}}
    \end{mathpar}
    \vspace{-0.5cm}
    \caption{Base and refinement type compatibility relation}
    \label{fig:incomp}
    \vspace{-0.5cm}
\end{figure}

While our synthesis algorithm ensures that the type of the synthesized program is a subtype of the specification, we utilize a weaker notion of \emph{type compatibility} for pruning during synthesis. In particular, because our synthesis algorithm needs to reason about the feasibility of incomplete programs (where some parts are yet to be determined), we  introduce a notion of type compatibility that is much weaker than subtyping. Intuitively, two types $\rtsym_1$ and $\rtsym_2$ are \emph{compatible} with each other if there exists a subtype $\rtsym$ of $\rtsym_1$ that is also a subtype of $\rtsym_2$. Conversely, if two types $\rtsym_1$ and $\rtsym_2$ are incompatible, there is no refinement of $\rtsym_1$ that will make it a subtype of $\rtsym_2$. As we will see in Section~\ref{sec:synthesis}, the notion of type (in)compatibility is very useful for pruning during synthesis. In this section, we formalize this notion and present rules for checking type compatibility.
We define the compatibility relation for our type system using judgments of the form:
\small
\[
\Env \vdash \rtsym_1 \comp \rtsym_2
\]
\normalsize
stating that $\rtsym_1$ is compatible with $\rtsym_2$ under environment $\Env$, as shown in Figure~\ref{fig:incomp}. Unlike the subtyping relation, the compatibility relation is  symmetric (first rule in Figure~\ref{fig:incomp}); however it is \emph{not} transitive. The second rule in Figure~\ref{fig:incomp} defines the type compatibility relation for primitive types and states that they are compatible if one is a  subtype of the other or vice versa. The third rule (\textsc{Table}) asserts that two table types are compatible when all their shared columns are compatible. The intuition is that if all  shared columns  are type compatible, then we can construct a new table type that is a refinement of both by taking the union of their schemas. Finally, \textsc{Refinement-Comp} and \textsc{Func} are similar to their subtyping counterparts in that they reduce the compatibility check to an SMT query. However, there are two key differences. First, the encoded formula is a conjunction of the qualifiers as opposed to an implication. Second, we check that the encoding is satisfiable as opposed to valid.  The intuition behind this rule is that, if the resulting formula is satisfiable, then  $\{\nu : \rtsym \ | \ \phi_1 \land \phi_2\}$ is a well defined type in our type system that has at least one inhabitant, and it refines both $\{\nu : \rtsym \ | \ \phi_1\}$ and $\{\nu : \rtsym \ | \ \phi_2\}$.

\subsection{Typing Rules} \label{sec:typingrules}

In this section, we give an overview of our typing rules for assigning types to DSL terms. In particular, our typing rules derive judgments of the  form 
$
\small
\Env \vdash t: \rtsym
$
to indicate that term $t$ has type $\rtsym$ under environment $\Env$. Since the typing rules are not the primary focus of this paper, we only discuss two representative rules and leave the rest to the appendix.

\begin{figure}[!t]
    \centering
    \small
    \begin{mathpar}
    \inferrule*[Left=Bar]{
    \Env(T) = \rtype{\tau_T}{\phi_T}\\\\
    \tau_T  = \mathsf{Table}(\{ \colx: \sf{Discrete}, \coly: \sf{Quantitative}, \colco: \sf{Discrete}, \colsub: \sf{Discrete} \})\\\\
    \textsf{Encode}(\Env) \land \textsf{Encode}(\phi_T) \Rightarrow |(\nu, \{\colx, \colco, \colsub\})| \geq |(\nu, \{\coly\})|
    }{\tjudg{\Env}{\textsf{Bar}(T, \colx, \coly, \colco, \colsub): \rtype{ \textsf{BarPlot}}{\bigwedge_{e \in \{{\sf x}, {\sf y}, {\sf color}, {\sf subplot}\} } \prov(\nu.e, T.c_{e})}}} \\
    \end{mathpar}
    \vspace*{-1.2cm}
    \caption{Typing Rule for a Bar Plot. We use notation $(\nu, \{c_1, \dots, c_n\})$ as a shorthand for ${\sf Proj}(\nu, \{c_1, \dots, c_n\})$. }
    \label{fig:rulebar}
    \end{figure}

\paragraph{Typing rules for the plotting sub-DSL} To illustrate the typing rules for the plotting sub-DSL, Figure~\ref{fig:rulebar} shows the rule for the \textsf{Bar} construct, which generates a bar graph given table $T$. At a high level, this rule states that if $T$'s type satisfies two constraints, then the output type will be a refinement of {\sf BarPlot}. The first constraint is that $T$'s schema must be suitable for generating bar graphs, meaning that $\colx$ is {\sf Discrete} and  $\coly$ is {\sf Quantitative}. This requirement is captured by the second premise. In addition to having a suitable schema, another important requirement for a bar graph is that it should not have overlapping bars, meaning that the x-label in each subplot must correspond to a unique y-value. This requirement is captured through the cardinality constraint in the third premise, which checks that the logical qualifier for $T$ implies that there is unique $y$ for each $x$. If these premises hold, then the entire term is well-typed with base type \textsf{BarPlot} and a logical qualifier stating the {syntactic constraint} for the return value of \textsf{Bar}. In particular, the logical qualifier in the conclusion states, for example, that the \textsf{x} attribute of the plot is derived from the $\colx$ attribute of the input table $T$.

    \begin{figure}[!t]
    \centering
    \small
    \begin{mathpar}
    \inferrule*[Left=Summ-Mean]{
    \Env \vdash e: \rtype{\tau_t}{\phi} \ \ \ \ \text{where $\tau_t = \tt{Table}(\{\ldots, \coltarg: \tau_{\tt tgt}, \ldots\})$} \\\\
    \coltarg \not\in \overline{\colkey} \ \ \  \vdash \tau_{\sf tgt} : \sf{Quantitative} \\\\
    \tau' = {\sf Table}(\{c_0': \tau_0', \ldots, c_k': \tau_k', \coltarg : \btsym_{\tt tgt} \}) \ \ c_i' \in \overline{\colkey}  \ \ \ \ \ \ \tau' = \tau'[\coltarg \mapsto \sf{Continuous}] \\\\
    \phi_1 = \phi \ \forget \ \textsf{Terms}(\phi, \coltarg) \ \ \ \ \ \ \ \phi_2 = \phi_1 \  \forget \ \prov(\nu.\coltarg, \mathsf{mean}) \\\\
    \phi' = \phi_2 \land |(\nu, \{\coltarg\})| \leq |(\nu, \overline{\colkey})| \land \prov(\nu.\coltarg, \mathsf{mean})
    }{\tjudg{\Env}{\textsf{summarize}(e, \overline{\colkey}, \mathsf{mean}, \coltarg)} : \rtype{\tau'}{\phi'}}\\\\
    
    \end{mathpar}
    \vspace*{-1.5cm}
    \caption{Typing rule for Summarize instantiated with a Mean operation. We use notation $(\nu, \{c_1, \dots, c_n\})$ as a shorthand for ${\sf Proj}(\nu, \{c_1, \dots, c_n\})$, and the $\forget$ operator is defined in the text.}
    \label{fig:rulemean}
    \vspace*{-0.5cm}
\end{figure}

\paragraph{Typing rules for table transformation sub-DSL} Figure~\ref{fig:rulemean} shows the typing rule for the \textsf{summarize} construct in our table transformation DSL. Recall that \textsf{summarize} takes as input an aggregation operator, and the type depends on  which aggregation operator \textsf{summarize} is invoked with. In Figure~\ref{fig:rulemean}, we consider instantiating \textsf{summarize} with \textsf{mean} as a representative example. To understand this typing rule, let us first recall the semantics of \textsf{summarize}, which associates each unique value of the specified key columns with the mean of the values in the specified target column (see Figure~\ref{fig:table_example}). The first two premises in the typing rule {\sc Summ-Mean} impose some requirements on the input table. In particular, because it only makes sense to take the mean of quantitative values, the second premise ensures that the target column has a suitable type. Furthermore, since the mean operation produces a value of type \textsf{Continuous}, the column $c_{tgt}$ has type \textsf{Continuous} in the output table with base type $\tau'$. The fourth and fifth lines in Figure~\ref{fig:rulemean} state the relationship between the logical qualifiers of the input and output tables. To that end, given a logical qualifier $\phi$ and a set of terms $S$, we use the notation $\phi \forget S $ to denote the strongest logical qualifier $\phi'$ that is implied by $\phi$ and that does not imply anything about any term $t \in S$.~\footnote{One way to obtain $\phi \forget t$ is to replace all occurrences of $t$ with a fresh existentially quantified variable $x$ and then eliminate the quantifier. We formalize the $\forget$ operator in the appendix.} Thus, according to our typing rule, the new logical qualifier $\phi'$ for the output table differs from the qualifier $\phi$ for the input table in the following ways: First, it "removes" from $\phi$ any knowledge about the terms that involve $\coltarg$ which are affected by the \textsf{summarize} operation. Second, it asserts that the number of unique tuples over $\overline{c_{key}}$ is greater than or equal to the number of unique values in $\coltarg$. This is because the cardinality of the output table is equal to the number of unique $(c_1, \ldots, c_k)$ values where each $c_i \in \overline{c_{keys}}$. However, as two distinct $(c_1, \dots, c_k)$ values could have the same value for $\coltarg$, we cannot infer a stronger constraint. Finally, since the values of $\coltarg$ were produced by the $\textsf{mean}$ operation, $\phi'$ includes the {syntactic constraint} $\pi(\nu.\coltarg, {\sf mean})$.

\section{From Natural Language to Refinement Types}\label{sec:parse}

In this section, we describe a technique for generating refinement type specifications from natural language queries. At a high level, we frame this problem as an instance of the intents-and-slots problem \cite{jeong-lee-2006-exploiting,tur-et-al-intents} and build a parser that combines intent detection and slot filling on top of the BERT language model~\cite{bert}. 

\subsection{Background on Intents-and-Slots-Paradigm}

The intents-and-slots paradigm is a classical paradigm in the NLP literature on task-oriented dialog systems \cite{hemphill-etal-1990-atis,dahl-etal-1994-expanding,tur-et-al-intents} and flexibly supports many types of user interactions, as evidenced by its adoption on Amazon Alexa and other dialog platforms. At a high level, \emph{intent classification} is the problem of determining the topic of a query from a natural language utterance. For instance, given a set of topics such as ``flights", "movies", "restaurants", intent classification can be used to determine which of these topics a sentence is about. Once the intent of the utterance is identified, \emph{slot filling} determines pre-defined properties of that topic. For example, if the topic of a query is ``flights", relevant parameters include airline, destination city, flight number, etc., and slot-filling techniques aim to identify these parameters. 

As a concrete example, consider the query \emph{"What flights are available from San Francisco to New York?"} Here, an intent classifier aims to determine that the topic of the query belongs to the \emph{flight} category as opposed to \emph{movies} or \emph{restaurants}. Then, assuming that the flight category has attributes such as departure and destination city, a slot-filling technique can be used to determine that the departure city of the query is San Francisco and that the destination is New York.

The intents-and-slots paradigm is a good fit for our setting for two main reasons. First, compared to conventional semantic parsing \cite{zelle:aaai96,zettlemoyer-collins-05}, the intents-and-slots framework does not make strict assumptions about the grammatical structure of inputs. As a result, it can more flexibly handle user inputs that do not conform to a pre-defined syntax. Second, user queries in our setting can be naturally classified into different intents based on (1) the type of the plot (e.g., bar graph, scatter plot, etc.) they refer to and (2) which predicates in our refinement type system they involve. Furthermore, the arguments of these predicates can be determined using the slot-filling paradigm.



\subsection{Parsing Technique}

\begin{figure}
    \centering
    \includegraphics[width=1.0\textwidth, trim=0 0 0 400, clip]{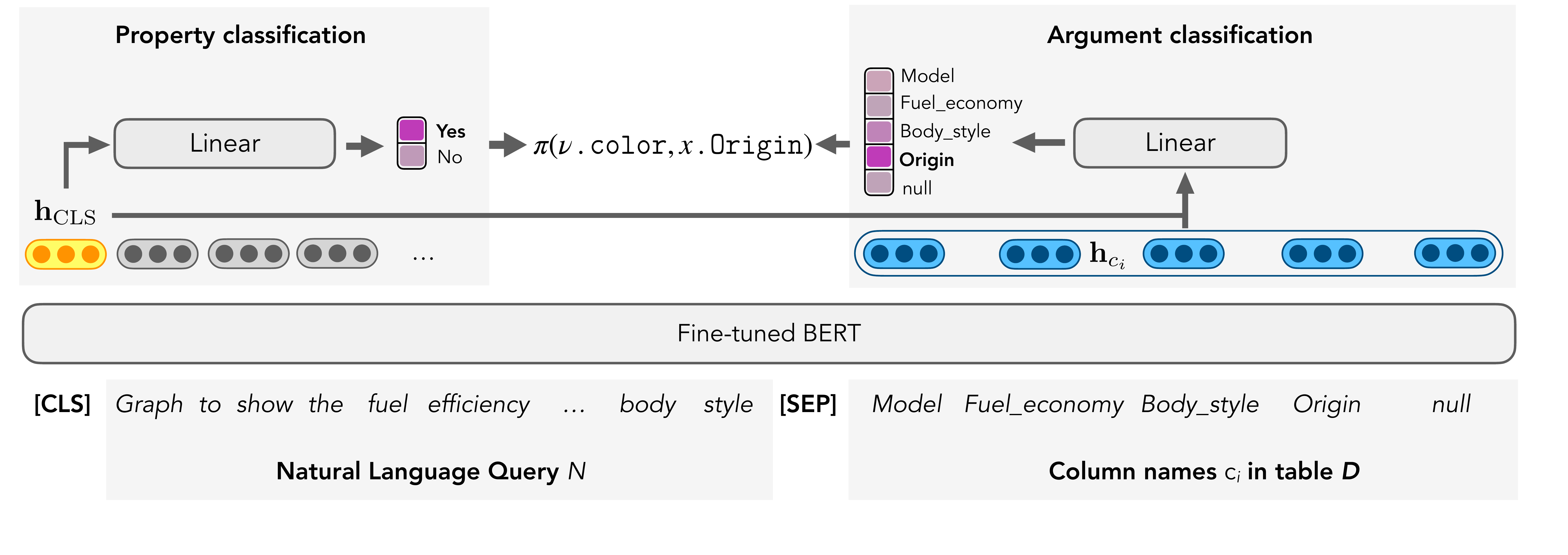}
    \vspace*{-1.0cm}
    \caption{Model architecture for the color encoding property in the running example. The yellow box represents the contextualized embedding of the query. Blue boxes represent the BERT embedding for each of the columns names in the table. After applying a linear layer to the BERT encoding in each task, we obtain a probability distribution across all possibles classes. The one with the highest probability is highlighted with pink.}
    \label{fig:parser}
    \vspace*{-0.5cm}
\end{figure}

In this section, we explain our  instantiation of the intents-and-slots paradigm for our setting.

\paragraph{Overview.} We have identified six types of properties that are typically mentioned in natural language queries and that are useful to the synthesizer. These include the following:
\begin{itemize}[leftmargin=*]
    \item {\bf Plot type:} According to the query, what is the most likely plot type desired by the user?
    \item {\bf Color:} Does this query mention anything about the color encoding of the plot?
    \item {\bf Subplot:} Does the query indicate that the visualization has subplots?
    \item {\bf Mean:} Does the query indicate that the visualization requires computing the mean of  values?
    \item {\bf Sum:} According to the query, does the visualization require summing values?
    \item {\bf Count:} Does the query indicate that the visualization involves the use of the {\tt count} operator?
\end{itemize}

Our parser uses  six different intent classifiers, one for each category listed above. With the exception of plot type, all intent classifiers are binary and yield a yes/no prediction. In the case of a ``yes'' prediction,  our parser uses slot filling to predict which attribute in the source data set this operation is associated with. For plot types, the intent classifier predicts whether the query refers to a bar chart, scatter plot, line graph, or area plot. 

\paragraph{Input to BERT} Our method performs the predictions outlined above using a fine-tuned BERT model, with a shared encoder for both intent classification and slot filling as illustrated in Figure~\ref{fig:parser}. The input to  BERT  is of the following form:
\small
\[
\mathrm{[CLS]} \ \nl \ \mathrm{[SEP]} \  c_1,\ldots,c_n
\]
\normalsize
where $N$ is the natural language query, $c_1,\ldots,c_n$ denote  column names from the input table $\data$, and $\mathrm{[CLS]}$ and $\mathrm{[SEP]}$ are standard placeholder tokens. We include the column names as part of the input for two reasons. First, when performing slot filling, the model needs to predict attributes of the source table, so we will need access to embedded representations of the column names. Second, even for intent classification,  information about the input table provides useful context that the BERT model can condition on. Given this input, BERT generates a \textit{contextualized encoding} of each of its input tokens, where each token in the input is mapped to a dense vector representation informed by all the other tokens through BERT's attention mechanism \cite{vaswani-et-al-2017}. 

\paragraph{Intent Classification} Our intent classifier is a model of the form $p(C_i \mid \nl, A_\data; \mathbf{w}_{c_i})$, where $A_\data$ is the set of column names for input table $D$, and $C_i \in \{0,1\}$ is a binary label for classifiers other than plot type, and $C_i \in \{{\sf Bar}, {\sf Scatter}, {\sf Line}, {\sf Area}\}$ for the plot type classifier. Additionally, $\nl$ denotes the NL input, $i$ is an index denoting the property type, and $w_i$ are the model weights. As standard practice, we take the vector $\mathbf{h}_\mathrm{CLS}$ as the representation of the sentence, and we use $p(C_i \mid \nl, A_\data, i; w_i) = \sigma(\mathbf{w}_{c_i}^\top \mathbf{h}_{\mathrm{CLS}}(\nl, A_\data))$, where $\sigma$ denotes the logistic function, making this a standard logistic regression layer with the CLS token's contextualized embedding as input.

\paragraph{Slot-filling model} As shown in Figure~\ref{fig:parser}, a property like {\sf color} requires a parameter in the form of one of the column names. Crucially, such arguments cannot be predicted with a standard classification model since the column names change with each table $\data$ being plotted. As such, the model must be able to place a distribution over an arbitrary set of column tokens. To address this issue, we use a pointer mechanism similar to implementations of the attention mechanism in settings like machine translation \cite{bahdanau-et-al-2015} and document summarization \cite{see-etal-2017-get}. Specifically, our slot-filling model places a distribution $p(c_i \mid \nl, A_\data, i; w_i)$ over column names. We use the same BERT encodings as the intent classifiers and let
\small
\[
p(c_i \mid \nl, A_\data, i; \mathbf{W}) =  \mathrm{softmax}_i(\mathbf{h}_\mathrm{CLS}(\nl, A_\data)^\top  \mathbf{W}  \mathbf{h}_{c_i}(\nl, A_\data))
\]
\normalsize
where $\mathbf{W}$ is a square weight matrix and softmax denotes the standard softmax operation, which exponentiates and normalizes the arguments to form a probability distribution.
\vspace*{-0.1cm}
\paragraph{From BERT predictions to specifications.} Recall that the input to our synthesizer is a pair of refinement types of the form $(\rtsym_p, \rtsym_t)$ where $\rtsym_p$ is the output type of the plotting program and $\rtsym_t$ is the output type of the table transformation program. We now explain how to map the predictions made by the BERT model to specifications of this form. 

To generate $\rtsym_p$ for the plotting program, we assign the prediction of the plot type classifier to be the base type of $\rtsym_p$. The qualifier of $\rtsym_p$ consists of a conjunction of {syntactic constraints} output by the {\sf color} and {\sf subplot} models. In particular, the logical qualifier of $\rtsym_p$ includes a {syntactic constraint} $\prov(\nu.{\sf color}, x.f)$ if the intent classifier for the color property predicts "yes" and the slot-filling model outputs column name $f$. 

In particular, for each model, we generate the {syntactic constraint} for this specific property if the intent classifier returns ``yes'' and populate the arguments of the predicate with the output of the slot-filling model. For instance, in Figure~\ref{fig:parser} where we focus on the color property, we first generate the predicate template $\prov(\nu.{\sf color}, x.?)$ where $?$ is to be determined by the argument classifier. Then, $?$ is filled by the output of the argument classifier, which in this case is ${\sf Origin}$ and the model returns the predicate $\prov(\nu.{\sf color}, x.{\sf Origin})$ as its final output. 

The base type of $\rtsym_t$ is obtained using a pre-trained model~\cite{lin-etal-2020-bridging} that outputs the set of likely columns in the table mentioned in the query. For the logical qualifier of $\rtsym_t$, we use the predictions made by the {\sf mean}, {\sf sum} and {\sf count} intent classifiers and their corresponding slot-filling models. For example, the logical qualifier includes a predicate $\prov(\nu.f, {\sf sum})$ if the {\sf sum} model predicts ``yes'' and the slot-filling mechanism predicts column name $f$ for the first argument. 



\vspace*{-0.2cm}
\paragraph{Distribution over specifications.} Our parser assigns a probability to each specification by using  the probabilities output by each model. In particular, let us view a refinement type $\rtsym$ as a set of tuples $\{C_i,c_i\}_i$ where $C_i$ is a intent and $c_i$ is an attribute predicted by the slot-filling model. Then, our method assigns a probability to these sets of tuples as $p(\{C_i,c_i\}_i) = \prod_i p(C_i \mid \nl, A_\data, i; \mathbf{w}_{c_i}) p(c_i \mid \nl, A_\data, i; \mathbf{W})$. Hence, we can rank all possible specifications from highest to lowest probability.
\vspace*{-0.2cm}


\section{Synthesis from Refinement Type Specifications}\label{sec:synthesis}
In this section, we describe our synthesis algorithm which takes as input  a visualization specification $(\rtsym_p, \rtsym_t)$ and   an input table $D$ and generates \emph{all} visualization programs $\prog_v = \prog_p \circ \prog_t$ such that (1) $\prog_t(D)$ is an inhabitant of $\rtsym_t$ (written $\prog_t(D) \vDash \rtsym_t$) and (2) $\prog_v(D)$ is an inhabitant of $\rtsym_p$.  At a high level, the synthesis algorithm  is based on type-directed top-down enumerative search and uses the refinement type system from Section~\ref{sec:type} to significantly reduce the search space.  We first start by explaining the basic synthesis algorithm (Section~\ref{sec:synthoverview}) and then introduce the concept of \emph{type-directed lemma learning} to improve the scalability of our approach (Section~\ref{sec:lemma}). 
Our algorithms frequently use the typing judgements from Section \ref{sec:type}. While these typing judgments make use of a type environment, we treat the type environment as implicit and drop it to simplify presentation. 

\subsection{Overview of Synthesis Algorithm}\label{sec:synthoverview}

Our top-level synthesis algorithm is presented in Figure~\ref{fig:synthesizevis} and works as follows:  Given a table $D$ of type $\rtsym_{in}$ and specification $(\rtsym_p, \rtsym_t)$,  it first synthesizes a set of plotting programs $\progs_p$ whose output type is a subtype of the goal type (line 3). In more detail, each plotting program $\prog_p \in \progs_p$ of type $\rtsym^p_{in} \rightarrow \rtsym^p_{out}$ satisfies the following two properties:
(1)  $\rtsym^p_{out} \subtype \rtsym_p$ and (2) $\rtsym^p_{in} \comp \rtsym_t$. The first constraint ensures that the generated visualization  satisfies the user's specification, and the second constraint ensures that there is \emph{at least one} input table to the plotting program that is consistent with $\rtsym_t$.  Then, for each synthesized plotting program $\prog_p$ of type $\rtsym^p_{in} \rightarrow \rtsym^p_{out}$, the algorithm synthesizes (at line 6) a set of corresponding table transformation programs $\progs_t$ of type  $\rtsym^t_{in} \rightarrow \rtsym^t_{out}$ such that (1) $\rtsym^t_{out} \comp \rtsym_t \land \rtsym^p_{in}$  and (2)  $\rtsym_{in} \subtype \rtsym^t_{in} $. Note that the first  condition \emph{strengthens} the original specification using $\rtsym_{in}^p$ (via intersection types) and
ensures that there is at least one output of the table transformation program that is a valid input to the plotting program. 

The key part of the algorithm is the {\sc SynthesizeGoal} procedure, presented in Figure~\ref{fig:goal_synthesis}, that is used to synthesize \emph{both} table transformation and plotting programs. To unify presentation,  {\sc SynthesizeGoal}  takes a few additional arguments:

\begin{itemize}[leftmargin=*]
    \item $\mathcal{G}$, the grammar for the DSL in which we synthesize programs
    \item The correctness checking condition $\rhd$ for the \emph{input} type (either $\subtype$ or $\comp$)
    \item The correctness checking condition $\lhd$ for the \emph{output} type (either $\subtype$ or $\comp$)
\end{itemize}

  At a high level, {\sc SynthesizeGoal} is a top-down enumeration procedure which starts from the root symbol of the grammar and keeps expanding  non-terminals until it generates a complete program. We represent the syntax of the underlying DSL as a context-free grammar  $\mathcal{G} = (V, \Sigma, R, S)$, where $V, \Sigma$ denote non-terminals and terminals respectively, $R$ is a set of productions, and $S$ is the start symbol. As standard~\cite{neo}, we formalize our top-down enumeration procedure using the notion of \emph{partial programs}:

\begin{definition}[{\bf{Partial program}}] A partial program $\prog$ is a sequence $\prog \in (\Sigma \cup V)*$ such that $S \xRightarrow[]{*} \prog$ (i.e. $\prog$ can be derived from $S$ via a sequence of productions). We refer to any non-terminal in $\prog$ as a hole, and we say that $\prog$ is complete if it does not contain any holes. 
\end{definition}

In the remainder of this section, we represent each partial program $\prog$ as an abstract syntax tree (AST) $(N, E)$ with nodes $N$ and edges $E$. Each node $n \in N$ is represented as a pair  $(l, \goaltype)$ where $l$ is a node label (either a terminal or non-terminal symbol in $\grammar$) and $\goaltype$ is the  \emph{goal type} of the subprogram rooted at $\node$. The goal type $\goaltype$ of a node $n$ serves as a necessary correctness condition such that if the sub-program rooted at $n$ does not satisfy $\goaltype$, then the whole program cannot satisfy its specification.  For a node $n$, we use the notation $P(n)$ to denote the subtree of $P$ rooted at $n$, and use ${\sf Label}(n)$ and ${\sf GoalType}(n)$ to refer to the label and goal type of  $\node$, respectively. Finally, we refer to a node as \emph{complete} if the subtree rooted at $n$ is a complete program.

With this notation in place, we now describe the basic version (everything not underlined) of {\sc SynthesizeGoal} in more detail. Our algorithm maintains a worklist $\worklist$ of partial programs and iteratively grows it. At the beginning, $\worklist$ is initialized to be the empty program $P_0$  with a single node $\node_0$ annotated with the grammar start symbol $S_{\grammar}$ and top-level goal  $\outputtype$. The loop in lines $4$-$18$ dequeues a program $P$ from the worklist with type $\ftype{\inputtype^P}{\outputtype^P}$ and checks if it is complete  and whether it satisfies the correctness conditions. If so,  this program is added to the set $\mathcal{S}$ containing all synthesis results.
Otherwise, {\sc SynthesizeGoal} calls $\textsf{Expand}$  at line 12 to generate a new set of partial programs by expanding a hole $h$ in $P$. Similar to prior work~\cite{synquid,lambda2}, when $\textsf{Expand}$ generates a new partial program $P'$, it propagates the goal type at $h$ to its children. However, the goal types we infer are necessary conditions for correctness with respect to \emph{type compatibility} (as opposed to subtyping) and are derived based on the premises of the typing rules from  Section \ref{sec:typingrules}. In other words, the types of all subprograms must be compatible with their propagated goal type in order for the overall program to be compatible with its goal type.


Next, for each expansion $P'$ of $P$, the  {\sc TypeIncompatible} procedure (presented in Figure~\ref{fig:typeinfeasible}) uses our refinement type system to check whether $P'$ is infeasible. To do so, it iterates over all nodes and checks whether the subtree rooted at that node is a complete program (line 3). If so, it infers the type  $\rtsym$ of this sub-program using our type system (line 4) and queries whether $\rtsym$ is type-compatible with the goal type of $n$. 

\begin{theorem}
Let $P$ be a partial program with input type $\inputtype$ and top level goal type $\outputtype$. If $\textsc{TypeIncompatible}(P)$ returns true, then for any completion $P'$ of $P$, $P' \not \comp (x : \ftype{\inputtype}{\outputtype})$.
\end{theorem}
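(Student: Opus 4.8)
The plan is to prove the contrapositive in spirit: assuming $\textsc{TypeIncompatible}(P)$ returns true, I exhibit a specific witness — a complete sub-program inside $P$ whose inferred type is incompatible with its annotated goal type — and then argue that this incompatibility is preserved under any completion of the remaining holes, which forces the whole completed program to be incompatible with $x : \ftype{\inputtype}{\outputtype}$. First I would unfold the definition of $\textsc{TypeIncompatible}$ from Figure~\ref{fig:typeinfeasible}: it returns true only when there is some node $n \in N$ such that $P(n)$ is a complete program, its inferred type $\rtsym$ satisfies $\tjudg{\Env}{P(n)}{\rtsym}$, and $\rtsym$ is \emph{not} type-compatible with $\goaltype = {\sf GoalType}(n)$, i.e. $\neg(\Env \vdash \rtsym \comp \goaltype)$. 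Fix this witness node $n$.

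The key structural fact I need is that the goal-type annotations are \emph{stable under expansion}: once $\textsf{Expand}$ has produced the partial program $P$, every node already present in $P$ — in particular $n$ and all nodes on the path from the root to $n$ — retains its label and goal type in every further expansion $P'$, since $\textsf{Expand}$ only rewrites holes (non-terminal leaves) and propagates goal types downward to freshly created children. Because $P(n)$ is already complete, $n$ is not a hole, so $P'(n) = P(n)$ and ${\sf GoalType}_{P'}(n) = \goaltype$ for every completion $P'$ of $P$. Hence the inferred type of $P'(n)$ is still $\rtsym$ and we still have $\neg(\rtsym \comp \goaltype)$.

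Next I would invoke the soundness of goal-type propagation established (informally) in Section~\ref{sec:synthoverview}: the goal types are propagated via the premises of the typing rules of Section~\ref{sec:typingrules} so that, for any complete program $P'$ typed against $x : \ftype{\inputtype}{\outputtype}$ under the compatibility discipline $\rhd/\lhd$, the type of each sub-program $P'(m)$ must be compatible with ${\sf GoalType}(m)$. Formally this is a straightforward induction on the derivation of $P' \comp (x : \ftype{\inputtype}{\outputtype})$ following the tree structure: at each application of a typing rule, the rule's premises are exactly what the goal-type propagation records as the children's goal types, and the \textsc{Table}, \textsc{Func}, and \textsc{Refinement-Comp} rules of Figure~\ref{fig:incomp} give that compatibility of the conclusion entails compatibility of the relevant sub-components with those premises. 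Therefore, if $P'$ were a completion of $P$ with $P' \comp (x : \ftype{\inputtype}{\outputtype})$, we would in particular get ${\sf typeof}(P'(n)) \comp {\sf GoalType}(n)$, i.e. $\rtsym \comp \goaltype$ — contradicting the defining property of the witness $n$. Hence no such completion exists, which is the claim.

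The main obstacle, and the step I would spend the most care on, is the induction in the previous paragraph: making precise the sense in which "goal types are necessary conditions for compatibility" is an invariant maintained by $\textsf{Expand}$. This requires (i) a lemma stating that $\textsf{Expand}$ propagates goal types exactly according to the typing-rule premises, so that the goal type at a child node is implied (via the compatibility rules) by the goal type at its parent together with the chosen production, and (ii) checking that each typing rule of Section~\ref{sec:typingrules} is "compatibility-monotone" in the right arguments — e.g. that \textsc{Table} and \textsc{Refinement-Comp} really do let us project compatibility of a composite type down to compatibility of its constituents. One subtlety to flag is that type compatibility is \emph{not} transitive (as emphasized after Figure~\ref{fig:incomp}), so the argument must avoid chaining compatibility facts; fortunately the structure above only ever needs compatibility between a sub-program's actual type and its \emph{own} goal type, never a chain, so non-transitivity does no harm. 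A second minor point is handling the environment $\Env$: since goal types of inner nodes may mention variables bound along the path (e.g. $T$ in the \textsc{Bar} rule), I would carry the appropriate environment through the induction, but as the paper notes the environment is treated implicitly, so this is bookkeeping rather than a genuine difficulty.
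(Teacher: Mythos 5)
Your proposal is correct and follows essentially the same route as the paper's proof: exhibit the witness node $n$ whose complete sub-program has a type incompatible with its goal type, observe that $n$ and its goal-type annotation persist in every completion, and conclude via the fact that goal types are necessary conditions for the whole program's compatibility. The paper's own proof asserts this last step in one sentence, whereas you correctly identify it as the part requiring an induction over the goal-type propagation — your version is, if anything, more careful on exactly the point the paper glosses over.
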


\begin{figure}[!t]
\begin{minipage}[t]{0.51\textwidth}
\begin{subfigure}[t]{1.0\textwidth}
    \begin{algorithm}[H]
    \small
    \begin{algorithmic}[1]
    \Procedure{SynthesizeVis}{$(\rtsym_p, \rtsym_t), \data$}
    \Statex \Input{A specification $(\rtsym_t, \rtsym_p)$}
    \Statex \Input{The input table $\data$}
    \Statex \Output{A set of visualization programs.}
    \vspace{0.05in}
    \State $\res \assign \emptyset$; $\inputtype \assign {\sf GetType}(D)$
    \State $\progs_p \assign \textsc{SynthesizeGoal}(\vlang, \rtsym_t, \rtsym_p, \sim, \subtype)$
    \ForAll{$\prog_p : \ftype{\rtsym^p_{in}}{\rtsym^p_{out}} \in \progs_p$}
    \State $\rtsym_{s} \gets \rtsym_t \wedge \rtsym^p_{in}$; 
    \State $\progs_t \assign \textsc{SynthesizeGoal}(\tlang, \inputtype,\rtsym_s, \subtype, \sim)$;
    \ForAll{$\prog_t \in \progs_t$}
    \If{$\prog_t(D) \vDash \rtsym_t \land  \prog_p(\prog_t(D)) \vDash \rtsym_p$}
    \State $\res \assign \res \cup \{\prog_v \circ \prog_t\}$;
    \EndIf
    \EndFor
    \EndFor
    \State \Return $\res$
    \EndProcedure
    \end{algorithmic}
    \end{algorithm}
    \vspace*{-0.8cm}
    \caption{Top-level synthesis algorithm. }
    \label{fig:synthesizevis}
\end{subfigure}

\begin{subfigure}[t]{1.0\textwidth}
    \small
    \begin{algorithm}[H]
    \begin{algorithmic}[1]
    \Procedure{{\sc TypeIncompatible}}{$\prog$}
    \Statex\Input{A partial program $\prog$}
    \Statex\Output{True if  type-incompatible}
    \ForAll{$\node \in {\sf Nodes}(P)$}
    \If{${\sf IsComplete}(\prog(\node))$}
    \State $\rtsym \assign {\sf TypeOf}(\prog(\node))$;
    \If{$\vdash \rtsym \incomp {\sf GoalType}(\node)$} 
    \State \Return $\mathsf{true}$;
    \EndIf
    \EndIf
    \EndFor
    \Return $\mathsf{false}$;
    \EndProcedure
    \end{algorithmic}
    \end{algorithm}
    \vspace*{-0.8cm}
    \caption{Procedure for checking program infeasibility .}
    \label{fig:typeinfeasible}
\end{subfigure}
\end{minipage}%
\begin{minipage}[t]{0.49\textwidth}
\begin{subfigure}[t]{1.0\textwidth}
    \begin{algorithm}[H]
    \small
    \vspace{0.5cm}
    \begin{algorithmic}[1]
    \Procedure{SynthesizeGoal}{$\grammar,\inputtype, \outputtype, \rhd, \lhd$}
    \vspace{0.05in}
    \Statex\Input{Grammar $\mathcal{G}$, Specification ($\inputtype$, $\outputtype$) }
    \Statex\Input{Operators $\rhd, \lhd \in \{\subtype \ , \sim\}$ to check correctness for input and output type respectively}
    \vspace{0.05in}
    \State $\res \assign \{\}$
    \State $\prog_0 \assign \{ (S_{\grammar}, \outputtype), \emptyset)$; $\worklist \assign \{\prog_0\}$
    \While{$\worklist \neq \emptyset$}
    \State $\prog \assign \worklist.remove()$;
    \State $\inputtype^P \assign {\sf InputType}(\prog)$
    \State $\outputtype^P \assign {\sf OutputType}(\prog)$
    \If{$\textsf{IsComplete}(\prog)$}
    \If{$\vdash \inputtype \rhd \inputtype^P \  \wedge \vdash \outputtype^P \lhd \outputtype$} 
    \State $\res \assign \res \cup \{\prog\}$;
    \EndIf
    \State {\bf continue};
    \EndIf
    \ForAll{$P' \in \textsf{Expand}(\mathcal{G}, \prog)$}
    \If{\underline{$\textsc{ViolatesLemma}(P', \Phi)$}} 
    \State {\bf continue};
    \ElsIf{$\textsc{TypeIncompatible}(P')$} 
    \State  \underline{$\lemmas \assign \lemmas \cup \textsc{InferLemmas}(\prog', \inputtype);$}
    \Else
    \State $\worklist \assign \worklist \cup \{P'\}$;
    \EndIf
    \EndFor
    \EndWhile
    \State \Return $\res$;
    \EndProcedure
    \end{algorithmic}
    \end{algorithm}
    \vspace*{-0.8cm}
    \caption{Goal type synthesis algorithm.}
    \label{fig:goal_synthesis}
\end{subfigure}
\end{minipage}
\vspace*{-0.3cm}
\caption{Procedures for program synthesis. In \textsc{SynthesizeVis}, $\vlang$ is the grammar for the plotting sub-DSL and $\tlang$ is the grammar for the table transformation sub-DSL. $\rtsym_t \wedge \rtsym^p_{in}$ stands for the intersection type of $\rtsym_t$ and $\rtsym^p_{in}$. We provide the procedure for computing type intersection  in the appendix.}
\end{figure}

\subsection{Type-Directed Learning}\label{sec:lemma}
We now describe our \emph{type-directed learning} technique that refines the basic synthesis algorithm from the previous subsection. The motivation for this technique is that we need to synthesize \emph{many} programs during each visualization session.
To leverage the similarities across all these synthesis tasks, our algorithm learns so-called \emph{synthesis lemmas} that capture inferred constraints for the input data set. While this idea is somewhat similar to the notion of \emph{conflict-driven learning} in prior synthesis work~\cite{neo}, there  are  two key differences. First, our learned lemmas can be reused across different specifications as long as the input data set is the same. Second, the learning of the synthesis lemmas is type-directed and leverages our refinement type system. 

\begin{definition}[\bf{Synthesis lemma}] A synthesis lemma for an input table $\data$ is a pair of refinement types $(\lemmaG, \lemmaR)$ such that, for any program $P$ and type $\rtsym$ satisfying $ \rtsym \subtype \lemmaG$, if $P(D)$ is an inhabitant of $\rtsym$, then we have  $\rtsym \comp \lemmaR$.
\end{definition}

In other words, a synthesis lemma captures \emph{additional} (learned) constraints $\lemmaR$ that the synthesized program must satisfy if its output type is to be a subtype of $\lemmaG$. Given a lemma $(\lemmaG, \lemmaR)$ and partial program $\prog$, the basic idea is to use $\lemmaR$ for pruning as follows: If the desired goal type $\rtsym$ of  $\prog$ is a subtype of $\lemmaG$ but $\rtsym$ is \emph{not} type-compatible with $\lemmaR$, then we can prune $\prog$ without even attempting synthesis. Hence, such lemmas can be useful  both for proving  the unrealizability of a top-level synthesis goal as well as pruning the search space 
during synthesis.

Given a lemma $(\lemmaG, \lemmaR)$, we refer to $\lemmaG$ as the \emph{guard} of the lemma and $\lemmaR$ as the \emph{requirement}. We also say that a lemma is \emph{activated} if the goal type of the synthesis task is a subtype of $\lemmaG$. Clearly, the more general the guard of the lemma, the more pruning opportunities that lemma provides. 

\paragraph{{\bf Pruning with lemmas.}} To understand how our synthesis procedure utilizes such lemmas, observe that the {\sc SynthesizeGoal} algorithm from Figure~\ref{fig:goal_synthesis} invokes the {\sc ViolatesLemma} procedure shown  in Figure~\ref{fig:violatelemma}. Given a partial program $\prog$ and a  lemma $(\lemmaG, \lemmaR) \in \lemmas$, this procedure checks if there exists some hole in $\prog$ that violates that lemma. In particular, a hole $h$ with annotated goal type $\rtsym$ violates the lemma if $\lemmaG$ is activated (i.e., $\rtsym \subtype \lemmaG)$ and $\rtsym$ is incompatible with requirement $\lemmaR$. If this type compatibility check fails for \emph{any} of the holes, then $\prog$  guaranteed to be infeasible. 


\begin{figure}
\vspace*{-0.5cm}
    \begin{subfigure}[t]{0.45\textwidth}
    \small
    \begin{algorithm}[H]
    \begin{algorithmic}[1]
    \Procedure{ViolatesLemma}{$\prog, \lemmas$}
    \Statex \Input{A partial program $\prog$}
    \Statex \Input{A set of lemmas $\lemmas$}
    \Statex \Output{{\sf true} if $\prog$ is infeasible, {\sf false} otherwise}
    \ForAll{$h \in \textsf{Holes}(\prog)$}
    \ForAll{$(\lemmaG, \lemmaR) \in \lemmas$}
    \If{$\vdash {\sf GoalType}(h) \subtype \lemmaG$}
    \If{$\vdash {\sf GoalType}(h) \incomp \lemmaR$}
    \State \Return {\sf true};
    \EndIf
    \EndIf
    \EndFor
    \EndFor
    \State \Return {\sf false};
    \EndProcedure
    \end{algorithmic}
    \end{algorithm}
    \vspace*{-0.6cm}
    \caption{Procedure for checking violation of lemmas.}
    \label{fig:violatelemma}
    \end{subfigure}
    \begin{subfigure}[t]{0.5\textwidth}
    \small
    \begin{algorithm}[H]
    \begin{algorithmic}[1]
    \Procedure{InferLemmas}{$\prog, \inputtype$}
    \Statex \Input{A failed partial program $\prog$}
    \Statex \Input{Input type $\inputtype$ of $\prog$}
    \Statex \Output{A set of learned lemmas $\lemmas$}
    \State $\lemmas \assign \{\}$;
    \ForAll{$\node \in \textsf{CompleteNodes}(\prog)$}
    \State $\rtsym \assign {\sf TypeOf}(\prog(\node))$;
    \If{$\vdash {\sf GoalType}(\node) \incomp \rtsym$}
    \State $\lemmaG_{\node} \assign {\sf GetInterpolant}(n)$
    \State $\btsym_{in}, \btsym_{G_n} \assign {\sf GetBaseTypes}(\inputtype, \lemmaG_{\node})$
    \State $\lemmaR_{\node} \assign \textsc{GenReq}(\btsym_{in}, \btsym_{G_n}, \mathtt{max\_depth})$;
    \State $\lemmas \assign \lemmas \cup (\lemmaG_\node, \lemmaR_\node)$;
    \EndIf
    \EndFor
    \State \Return $\lemmas$;
    \EndProcedure
    \end{algorithmic}
    \end{algorithm}
    \vspace*{-0.8cm}
    \caption{Procedure for inferring lemmas.}
    \label{fig:inferlemma}
    \end{subfigure}
    \vspace*{-0.3cm}
    \caption{Core type-directed lemma learning procedures}
\end{figure}

\begin{theorem}   Let $P$ be a partial program with input type $\inputtype$ for table $D$ and whose top level goal type is $\outputtype$. If $\textsc{ViolatesLemma}(P, \Phi)$ returns true, then $P(D)$ is not an inhabitant of $\outputtype$.
\end{theorem}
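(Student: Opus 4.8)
The plan is to reduce the claim to two ingredients: (i) the defining property of a synthesis lemma, applied to the goal type of the hole flagged by \textsc{ViolatesLemma}; and (ii) soundness of goal-type propagation, the same invariant that underlies the companion theorem for \textsc{TypeIncompatible}. Since a partial program is not itself runnable, I read the conclusion as: for \emph{every} completion $P'$ of $P$, the value obtained by running $P'$ on $D$ is not an inhabitant of $\outputtype$. First I would unfold the hypothesis: $\textsc{ViolatesLemma}(P,\Phi)$ returns true exactly when there is a hole $h\in\textsf{Holes}(P)$ and a lemma $(\lemmaG,\lemmaR)\in\Phi$ with $\vdash {\sf GoalType}(h)\subtype\lemmaG$ and $\vdash {\sf GoalType}(h)\incomp\lemmaR$; write $\rtsym_h := {\sf GoalType}(h)$. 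I would also use the invariant --- maintained by \textsc{SynthesizeGoal} at its call to \textsc{InferLemmas}, and discharged by the correctness of \textsc{InferLemmas} --- that every element of $\Phi$ is a genuine synthesis lemma for $D$ in the sense of Section~\ref{sec:lemma}; if that result is not yet available, it becomes an explicit hypothesis here.

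The core of the argument is a one-line use of the synthesis-lemma property: by definition, for \emph{any} program $Q$ and type $\rtsym$ with $\rtsym\subtype\lemmaG$ and $Q(D)\vDash\rtsym$, one has $\rtsym\comp\lemmaR$. Instantiating $\rtsym := \rtsym_h$ and using $\rtsym_h\subtype\lemmaG$ together with $\rtsym_h\incomp\lemmaR$, the contrapositive yields that \emph{no} program $Q$ satisfies $Q(D)\vDash\rtsym_h$; call this fact $(\star)$. Now suppose, for contradiction, that some completion $P'$ of $P$ has $P'(D)\vDash\outputtype$, and let $Q_h$ be the complete program obtained from the subtree of $P'$ rooted at $h$, closed over its input table (for a hole in the table-transformation sub-DSL this input is $D$; for a hole in the plotting sub-DSL it is the transformed table $\ptable(D)$, and I keep writing $Q_h(D)$ with that understanding, consistent with how \textsc{InferLemmas} instantiated the guard against the corresponding input type). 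Goal-type soundness --- the statement that along any correct completion the value flowing out of each node inhabits that node's goal type --- then gives $Q_h(D)\vDash\rtsym_h$, contradicting $(\star)$; hence no such $P'$ exists, which is the claim.

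The hard part is stating and proving goal-type soundness in exactly this form. The subtlety is that the feasibility tests used in the algorithm (\textsc{TypeIncompatible}, and the guard check inside \textsc{ViolatesLemma}) compare \emph{static} types under \emph{compatibility} (Section~\ref{sec:incompatibility}), whereas $(\star)$ talks about a runtime value inhabiting a type; the bridge is the intended reading of a propagated goal type as a necessary condition on the \emph{value} flowing out of a node. I would make this precise as the invariant: whenever $P'$ completes $P$ and $P'(D)\vDash\outputtype$, then for every node $n$ of $P$ the value $Q_n(D)$ inhabits ${\sf GoalType}(n)$, where $Q_n$ is the completed subprogram at $n$; and prove it by induction on the structure of $P$, i.e. on the productions applied by \textsc{Expand}. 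The base case ($n$ the root, ${\sf GoalType}(n)=\outputtype$) is the hypothesis. The inductive step uses that the children's goal types produced by \textsc{Expand} from a production $X\to f(\dots)$ are exactly the premises of $f$'s typing rule (Section~\ref{sec:typingrules}) read \emph{backwards} as necessary conditions on the arguments, together with type soundness (progress and preservation) to know the relevant intermediate values exist and are well typed, and --- for rules such as \textsc{Summ-Mean} --- the defining property of $\forget$ (it returns the strongest qualifier implied by $\phi$ that says nothing about the forgotten terms), which is precisely what makes each propagated child qualifier a \emph{necessary} rather than merely sound condition.

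Two loose ends need attention. First, goal types that are function types appear when synthesizing in the plotting sub-DSL; these are handled through the \textsc{Func} subtyping and compatibility rules by treating $\pplot$ as a function of its input table, so the invariant above is applied separately to the table argument and to the plot output. Second, one should reconcile the fact that \textsc{SynthesizeGoal} checks correctness using operators $\rhd,\lhd\in\{\subtype,\comp\}$ that differ between the two calls in \textsc{SynthesizeVis}: the direction actually needed here remains available in both instantiations, since $\subtype$ is sound for inhabitation and the $\comp$-flavoured check appears only on the \emph{input} side of the arrow, which is discharged independently of the present argument. Given the strengthened invariant, instantiating it at $h$ and combining with $(\star)$ completes the proof.
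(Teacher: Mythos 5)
Your proposal is correct and follows essentially the same route as the paper's proof: locate the hole $h$ witnessing the violation, apply the definition of a synthesis lemma contrapositively to conclude that no program's output on $D$ can inhabit ${\sf GoalType}(h)$, and then lift this to the whole program via the fact that goal types are necessary conditions on the values flowing out of each node. The only difference is one of rigor: the paper simply asserts that a subprogram failing its goal type dooms the whole program, whereas you explicitly isolate this as a goal-type-soundness invariant (and the assumption that every lemma in $\Phi$ is genuine) and sketch its proof by induction on the expansions, which is a faithful elaboration rather than a different argument.
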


\paragraph{{\bf Learning lemmas.}} Next, we discuss how to use our refinement type system to infer these synthesis lemmas. As shown in line 16 of Figure~\ref{fig:goal_synthesis}, our synthesis technique invokes a procedure called {\sc InferLemmas} (presented in Figure \ref{fig:inferlemma})  every time it encounters an infeasible partial program. 
In order to generate useful lemmas, we introduce the notion of \emph{type interpolants} that are inspired by Craig interpolation~\cite{craig} in logic. Intuitively,  type interpolants allow our algorithm to learn lemmas with generalizable guards that can be activated in many contexts. 

\begin{definition}[\bf{Base type interpolant}] Given two incompatible base types $\btsym_1$ and $\btsym_2$,  we say that $\btsym$ is a base type interpolant for $\btsym_1$ and $\btsym_2$ if (1) $\btsym_1 \subtype \btsym$, (2) ${\btsym} \incomp {\btsym_2}$, and (3) for any $\btsym'$ such that $\btsym \subtype \btsym'$, we have ${\btsym'} \sim {\btsym_2}$.
\end{definition}

\begin{example}
If $\btsym_1 = {\sf Table}(\{{\sf colA} : {\sf Discrete}, {\sf colB} : {\sf Qualitative}, {\sf colC} : {\sf Continuous}\})$ and $\btsym_2 = {\sf Table}(\{{\sf colA} : {\sf Qualitative}, {\sf colB}: {\sf Qualitative}, {\sf colC}: {\sf Continuous}\})$ then the base type interpolant for $\btsym_1$ and $\btsym_2$ is ${\sf Table}(\{{\sf colA}: {\sf Quantitative}\})$. Note that the type interpolant isolates the incompatibility; namely {\sf colA} in $\btsym_1$’s schema is a {\sf Quantitative} data type, but {\sf colA} in $\btsym_2$’s schema is {\sf Qualitative}.
\end{example}

Next, we generalize this notion from base types to refinement types:

\begin{definition}[\bf{Refinement type interpolant}] Given two refinement types $\rtsym_1 = \rtype{\btsym_1}{\phi_1}$ and $\rtsym_2 = \rtype{\btsym_2}{\phi_2}$, we say that $\rtsym$ is a type interpolant for $\rtsym_1$ and $\rtsym_2$ if:

\begin{itemize}
    \item $\btsym_1 \incomp \btsym_2$, then $\rtsym$ is the base type interpolant for $\btsym_1$ and $\btsym_2$
    \item $\btsym_1 \sim \btsym_2$, then $\rtsym = \rtype{\btsym_1}{\phi}$ and $\phi$ is a Craig interpolant for $\phi_1$ and $\phi_2$
\end{itemize}
\end{definition}

\begin{example}
Let $\rtsym_1 =  \{\nu : {\sf Table}(\{{\sf colA}:{\sf Discrete}, {\sf colB}: {\sf Discrete}\}) \mid |(\nu, \{\sf colA\})| \leq |(\nu, \{\sf colB\})| \leq 20 \}$ and $\rtsym_2 = \{\nu : {\sf Table}(\{{\sf colA} : {\sf Discrete}\}) \mid  |(\nu, \{\sf colA\})| = 30\}$. Then $\{\nu : {\sf Table}(\{{\sf colA} : {\sf Discrete}, {\sf colB} : {\sf Discrete}\}) \mid  |(\nu, \{\sf colA\})| \leq 20\}$ is a refinement type interpolant for $\rtsym_1$ and $\rtsym_2$.
\end{example}

With these definitions in place, we now describe {\sc InferLemma} in more detail. Given an infeasible partial program $\prog$, {\sc InferLemma} first iterates over every complete node $n$ in $P$ and checks whether $n$'s goal type and actual type are incompatible (lines 4-5). If they are, it proceeds to generate a lemma $(\lemmaG_n, \lemmaR_n)$ where $\lemmaG_n$ is a type interpolant between $n$'s goal and actual types (lines 6) and the requirement $\lemmaR_n$ is generated using the  call {\sc GenReq}  (line 8). Intuitively, the use of type interpolants allows learning lemmas whose guards are as general possible so that they are frequently activated.

The {\sc GenReq} procedure for generating a requirement is presented as inference rules in Figure \ref{fig:genreq}. At a high level, {\sc GenReq} infers DSL constructs that must be used in order to satisfy the goal type and expresses these as {syntactic} constraints.
In more detail, this procedure takes three inputs: (1) the base type  $\btsrc$ for input table $D$, (2) the base type $\btdst$ of the lemma guard, and (3) a synthesis depth $k$ which serves as an upper-bound on the AST depth of the program to be synthesized. The output of {\sc GenReq} is a refinement type $\lemmaR$ such that all programs of maximum AST depth $k$ and with base type $\btsrc \rightarrow \btsym$ where $\btsym \comp \btdst$ must have an output type that is compatible with $\lemmaR$. 

We now explain the two inference rules from Figure~\ref{fig:genreq} in more detail. The first rule, labeled {\sc Base}, is the base case for the recursive {\sc GenReq} procedure. In the case where $k=1$, {\sc GenReq}  finds the set of $F$ of all DSL operators $f$ such that $f$ takes as input a value of base type $\btsrc$ and produces an output whose base type is compatible with $\btdst$. Then, the generated requirement is that the synthesized function must contain one of the operators in $F$: this is expressed as a disjunction of {syntactic constraints}, where each formula is of the form $\bigvee_{c_i}\prov(\nu.c_i, f)$ and $c_i$ is an index over the attributes of $\btdst$. Intuitively, this formula says that $f$ could be used to derive any of the columns in the target table's schema.
\begin{example}
Suppose $\btsrc = {\sf Table}(\{{\sf colA} : {\sf Qualitative}\})$ and $\btdst = {\sf Table}(\{{\sf colA} : {\sf Discrete}\})$. When $k = 1$, {\sc GenReq} returns $\{\nu : {\sf Table}(\{{\sf colA} : {\sf Discrete}\}) \mid \prov(\nu.{\sf colA}, {\sf count})\}$ as {\sf count} is the only operation which directly transforms a {\sf Qualitative} column to a {\sf Discrete} one.
\end{example}

The second rule  from Figure~\ref{fig:genreq} handles the case for $k>1$. To compute a suitable requirement, it first utilizes the base case to get an encoding $\phi_{R_1}$ of all programs of depth 1 whose input type is  $\btsrc$ to whose output is compatible with $\btdst$. Next, it computes an encoding $\phi_{R_2}$ of all programs of depth $k \geq 2$ of the form $P^{k-1} \circ f^1$ where $f^1$ is a function from $\btsrc$ to an intermediate type $\btsym_t$, and $P^{k-1}$ is a program of depth at most $k-1$ whose input type is $\btsym_t$ and output type is compatible with $\btdst$.  Thus, the constraint $\phi_{R_1} \lor \phi_{R_2}$ encodes the requirement for all programs \emph{up to} depth $k$.

\begin{example}
Suppose $\btsrc = {\sf Table}(\{{\sf colA} : {\sf Qualitative}\})$  and $\btdst = {\sf Table}(\{{\sf colA} : {\sf Continuous}\})$. Then $\req_{R_1}$ is $\bot$ (false) because there is no operation that can directly transform a {\sf Qualitative} column to a {\sf Continuous} one. However, a {\sf Qualitative} column can only be converted to a {\sf Continuous} one via the {\sf count} operation followed by a {\sf mean} or {\sf sum}. As such, $\req_{R_2} =  \prov(\nu.{\sf colA}, {\sf count}) \land (\prov(\nu.{\sf colA}, {\sf mean}) \lor \prov(\nu.{\sf colA}, {\sf sum}))$. Thus, {\sc GenReq} returns $\{\nu : {\sf Table}(\{{\sf colA} : {\sf Continuous}\}) \mid \req_{R_2} \}$
\end{example}

\begin{figure}
    \centering
    \tiny
    \begin{minipage}[c]{0.5\textwidth}
    \begin{mathpar}
    \inferrule*[Left=Base]{\\\\\\\\\\\\\\\\
    k = 1 \\\\ F = \{f \in \mathsf{Ops} \mid f: \ftype{\btsrc}{\btdst'}, \ \vdash \btdst' \sim \btdst\} 
     }{k \vdash (\btsrc, \btdst) \leadsto \rtype{\btdst}{\bigvee_{f \in F} \bigvee_{c_i \in \btdst}\pi(\nu.c_i, f)}} \and
     \end{mathpar} 
     \end{minipage}
    \begin{minipage}[c]{0.49\textwidth}
     \begin{mathpar}
    \inferrule*[Left=Rec]{k > 1 \and
    1 \vdash (\btsrc, \btdst) \leadsto \rtype{\btdst}{{\req_R}_1} \\\\
    F = \{f \mid f \in \mathsf{Ops} \land f: \ftype{\btsrc}{\btsym_t}\} \\\\
    \forall f: \ftype{\btsrc}{\btsym_t} \in F. \ k-1 \vdash (\btsym_t, \btdst) \leadsto \rtype{\btdst}{{\req_R}_f} \\\\
    {\req_R}_2 = \bigvee_{f \in F}\left(\bigvee_{c_i \in \btsym_t}\pi(\nu.c_i, f) \land {{\req_R}_f}\right)
    }{k \vdash (\btsrc, \btdst) \leadsto \rtype{\btdst}{{\req_R}_1 \lor {\req_R}_2}}
    \end{mathpar}
    \end{minipage}
    \vspace*{-0.2cm}
    \caption{ \textsc{GenReq} procedure where $k$ describes an upper bound on the maximum AST depth of the function to be synthesized. $\btsrc$ is the desired input type and $\btdst$ is a base type that the output must be compatible with.}
    \vspace*{-0.2cm}
    \label{fig:genreq}
\end{figure} 

We now state and prove theorems about our main synthesis procedure {\sc SynthesizeVis}.

\begin{theorem}{\bf{(Soundness)}}
Suppose $\textsc{SynthesizeVis}((\rtsym_p, \rtsym_t), D)$ returns a set of programs $\mathcal{S}$. Then for each visualization program $\prog_v = \prog_p \circ \prog_t \in \mathcal{S}$, $\prog_t(D) \vDash \rtsym_t$ and $\prog_v(D) \vDash \rtsym_p$.
\end{theorem}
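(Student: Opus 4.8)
The plan is to prove soundness essentially by inspection of \textsc{SynthesizeVis} in Figure~\ref{fig:synthesizevis}, isolating the one place where the result set is populated. First I would note that $\res$ is initialized to $\emptyset$ and the only statement that ever grows it is line 8, which is guarded by the conditional on line 7. Consequently, any $\prog_v = \prog_p \circ \prog_t \in \mathcal{S}$ must have been produced in an iteration in which the test $\prog_t(D) \vDash \rtsym_t \ \wedge\ \prog_p(\prog_t(D)) \vDash \rtsym_p$ evaluated to true. The left conjunct is literally the first claim of the theorem, so all that remains is to identify the right conjunct with $\prog_v(D) \vDash \rtsym_p$. It is worth stressing here \emph{why} this dynamic check is load-bearing: the call to \textsc{SynthesizeGoal} on line 6 only checks the output type of $\prog_t$ with the \emph{compatibility} operator $\comp$ against $\rtsym_s = \rtsym_t \wedge \rtsym^p_{in}$, not with subtyping, so a returned candidate need not even have output type a subtype of $\rtsym_t$; soundness therefore cannot be inferred from \textsc{SynthesizeGoal} alone and genuinely rests on line 7.

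For the remaining identification I would appeal to the operational semantics of the visualization DSL from Section~\ref{sec:dsl}: a visualization program has the form $\lambda T_{in}.\ \mathsf{let}\ T = \ptable(T_{in})\ \mathsf{in}\ \pplot$, so writing $\prog_v = \prog_p \circ \prog_t$ means precisely that $\prog_v(D)$ evaluates to $\prog_p(\prog_t(D))$. Substituting this equality into the second conjunct of the line-7 test immediately yields $\prog_v(D) \vDash \rtsym_p$. Since this holds for every program added to $\res$, and $\mathcal{S}$ is exactly the returned $\res$, both conclusions of the theorem follow for all $\prog_v \in \mathcal{S}$. I would also remark that this argument is completely insensitive to whether \textsc{SynthesizeGoal}, \textsc{TypeIncompatible}, \textsc{ViolatesLemma}, or the lemma-learning machinery of Section~\ref{sec:lemma} is itself sound: those components only influence which candidates are explored, never which ones survive the final check.

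Two side obligations must still be discharged to make this rigorous. (i) I must argue that the dynamic inhabitation checks $\prog_t(D) \vDash \rtsym_t$ and $\prog_p(\prog_t(D)) \vDash \rtsym_p$ are computable and are interpreted by the same relation $\vDash$ that appears in the theorem statement: because $D$ is a concrete finite table and $\prog_t$, $\prog_p$ are hole-free programs, $\prog_t(D)$ and $\prog_p(\prog_t(D))$ evaluate to concrete tables/plots, and checking that a concrete value inhabits a closed refinement type reduces to (a) checking that its base type is a subtype of the declared base type, which is decidable by the rules of Figure~\ref{fig:subtyping}, and (b) evaluating the logical qualifier $\phi$, whose table-property terms ($|\tablesym|$, $\mathsf{Proj}$, $\mathsf{Filter}$, $\mathsf{max}/\mathsf{min}$) and atomic predicates all have a finite, computable interpretation over a concrete table. (ii) I must confirm no other code path can introduce a program into $\mathcal{S}$, which is immediate since $\res$ is touched only at lines 2 and 8. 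The main obstacle, such as it is, lies entirely in part (i): the syntactic provenance predicates $\pi(\cdot,\cdot)$ are not extensional — their truth depends on the \emph{derivation} that produced the table, not merely on its value — so to make $\vDash$ well-defined on a fully evaluated program I would carry, alongside each intermediate table during evaluation, the provenance annotations induced by the typing rules of Section~\ref{sec:typingrules}, and define satisfaction of $\pi(\nu.c, \varsigma)$ by consulting those annotations. Granting that bookkeeping, the theorem follows directly.
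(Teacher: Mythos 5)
Your proof is correct and takes essentially the same approach as the paper's, which likewise reduces the theorem to the observation that line 8 of \textsc{SynthesizeVis} only adds $\prog_p \circ \prog_t$ to the result set after the guard on line 7 has dynamically verified $\prog_t(D) \vDash \rtsym_t$ and $\prog_p(\prog_t(D)) \vDash \rtsym_p$. Your additional remarks on the computability and well-definedness of $\vDash$ (in particular the provenance predicates) go beyond what the paper's one-line proof addresses, but they do not change the argument.
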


\begin{theorem}{\bf{(Completeness)}}
Given a specification $(\rtsym_p, \rtsym_t)$ and input table $D$, if there is a visualization program $\prog_v = \prog_p \circ \prog_t$ such that $\prog_t(D) \vDash \rtsym_t$ and $\prog_v(D) \vDash \rtsym_p$, then $\prog_v \in \textsc{SynthesizeVis}((\rtsym_p, \rtsym_t), D)$
\end{theorem}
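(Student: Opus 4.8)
The plan is to split the proof into a \emph{completeness lemma} for the core enumeration routine \textsc{SynthesizeGoal} and a \emph{bridging lemma} that turns the semantic hypotheses $\prog_t(D)\vDash\rtsym_t$ and $\prog_v(D)\vDash\rtsym_p$ into the type-level checks that \textsc{SynthesizeVis} performs. I would rely throughout on the already-proved soundness theorems for \textsc{TypeIncompatible} and \textsc{ViolatesLemma}, on type soundness of the rules of Section~\ref{sec:typingrules} (the value a program computes on an input inhabits the program's inferred output type, and, dually, an input that is accepted by a DSL operator inhabits its inferred input type), and on two elementary facts about Figures~\ref{fig:subtyping} and~\ref{fig:incomp}: $\rtsym_1\subtype\rtsym_2$ implies $\rtsym_1\comp\rtsym_2$ (take the common subtype to be $\rtsym_1$), and if a single value inhabits both $\rtsym_1$ and $\rtsym_2$ then $\Env\vdash\rtsym_1\comp\rtsym_2$ (that value witnesses the satisfiability premise of \textsc{Refinement-Comp} and, reading off its columns, forces the premise of the \textsc{Table} rule).

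For the completeness of \textsc{SynthesizeGoal}, I would prove: if $\grammar$ derives a complete program $P$ with inferred type $\ftype{\inputtype^P}{\outputtype^P}$ such that $\vdash\inputtype\mathrel{\rhd}\inputtype^P$ and $\vdash\outputtype^P\mathrel{\lhd}\outputtype$, then $P\in\textsc{SynthesizeGoal}(\grammar,\inputtype,\outputtype,\rhd,\lhd)$. The argument is an induction along a derivation $P_0,\dots,P_m=P$ obtained by repeated calls to \textsf{Expand}. First, from the two hypotheses, the easy facts above, and symmetry of $\comp$, the full program $P$ is compatible with $x:\ftype{\inputtype}{\outputtype}$; since the goal types \textsf{Expand} propagates are \emph{necessary} conditions for that top-level compatibility (Section~\ref{sec:synthoverview}), the production $P$ uses at each hole survives \textsf{Expand}'s filtering, so every $P_{j+1}$ is generated from $P_j$. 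Second, \textsc{TypeIncompatible}$(P_{j+1})$ returns \textsf{false}: otherwise its soundness theorem says no completion of $P_{j+1}$ is compatible with the goal, contradicting that $P$ is one. Third, \textsc{ViolatesLemma}$(P_{j+1},\Phi)$ returns \textsf{false}: otherwise its soundness theorem would say $P_{j+1}(D)$ is not an inhabitant of $\outputtype$, contradicting that the intended completion maps its input to such an inhabitant. Hence no $P_j$ is pruned, so under a fair scheduling of the worklist (e.g.\ by increasing AST size) $P$ is eventually dequeued, is complete, passes the acceptance guard $\vdash\inputtype\mathrel{\rhd}\inputtype^P \wedge \vdash\outputtype^P\mathrel{\lhd}\outputtype$ by hypothesis, and is added to the result.

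For the bridging lemma and the final assembly, let $\ftype{\rtsym^p_{in}}{\rtsym^p_{out}}$ and $\ftype{\rtsym^t_{in}}{\rtsym^t_{out}}$ be the types the rules assign to $\prog_p$ and $\prog_t$ (typing $\prog_t$ with $T:\inputtype$ where $\inputtype=\textsf{GetType}(D)$). By type soundness $\prog_t(D)$ inhabits $\rtsym^t_{out}$; by hypothesis it inhabits $\rtsym_t$; and since $\prog_v(D)=\prog_p(\prog_t(D))$ is a well-formed plot inhabiting $\rtsym_p$, the argument $\prog_t(D)$ meets the input requirements of $\prog_p$, i.e.\ $\prog_t(D)\vDash\rtsym^p_{in}$. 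Thus $\prog_t(D)$ is a common inhabitant of $\rtsym_t$, $\rtsym^p_{in}$, $\rtsym^t_{out}$, hence of $\rtsym_s=\rtsym_t\wedge\rtsym^p_{in}$, which yields $\rtsym^p_{in}\comp\rtsym_t$, $\rtsym^t_{out}\comp\rtsym_s$, and (since $T$ is typed at $\inputtype$) trivially $\inputtype\subtype\rtsym^t_{in}$. Together with $\rtsym^p_{out}\subtype\rtsym_p$ (discussed next), the completeness lemma for \textsc{SynthesizeGoal} places $\prog_p\in\progs_p$ (line~3) and then $\prog_t\in\progs_t$ (line~6); the semantic test on line~8 is exactly our hypothesis, so $\prog_v\in\res$.

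The one obligation that is \emph{not} a compatibility claim, and so cannot be discharged by exhibiting a single witness inhabitant, is $\rtsym^p_{out}\subtype\rtsym_p$ — needed because the first \textsc{SynthesizeGoal} call uses $\subtype$ as the output check. I would handle it with a \emph{type-precision} lemma for the plotting sub-DSL, proved by case analysis over its four typing rules: the inferred output type of a plotting term is determined solely by the term's encoding arguments and not by the input data; its base type is exactly the plot constructor; and, under the closed-world reading of \textsf{Encode} from Section~\ref{sec:syntax}, its qualifier pins down precisely which provenance facts hold of the output. Consequently the set of inhabitants of $\rtsym^p_{out}$ is fixed by provenance alone, so since a parser-produced $\rtsym_p$ has a qualifier built from provenance predicates, the single fact $\prog_v(D)\vDash\rtsym_p$ already forces $\rtsym^p_{out}\subtype\rtsym_p$. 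I expect the careful bookkeeping around the closed-world encoding and the precise shape of $\rtsym_p$ to be the main difficulty; everything else reduces to routine structural induction layered on the soundness theorems already established.
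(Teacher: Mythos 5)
Your proposal follows essentially the same route as the paper: a completeness lemma for \textsc{SynthesizeGoal} proved by induction along the expansion sequence, using the contrapositives of the soundness theorems for \textsc{TypeIncompatible} and \textsc{ViolatesLemma} to show no ancestor of the target program is pruned, followed by two applications of that lemma plus the line-8 semantic check to conclude. The only difference is that you explicitly discharge the bridging obligations (that $\prog_t(D)$ inhabits $\rtsym_t\wedge\rtsym^p_{in}$, and that $\prog_v(D)\vDash\rtsym_p$ forces $\rtsym^p_{out}\subtype\rtsym_p$ via a type-precision argument for the plotting rules) which the paper's proof asserts without justification in the step ``since $\prog(D)\vDash\outputtype$, we have $\outputtype'\subtype\outputtype$''; this is a genuine strengthening, not a divergence.
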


\section{Implementation}
We have implemented the proposed algorithm as a new tool called \toolname written in Python. In what follows, we describe key implementation details that are not covered in the technical sections.

\paragraph{{\bf Parser implementation and training}} Our NL parser is based on the BERT implementation and pre-trained weights provided by HuggingFace \cite{wolf-etal-2020-transformers}. For training, our model is jointly trained on a collection of examples $(\nl,\{C_i^*,c_i^*\}_i)$ where we observe the goal properties and values for each natural language utterance. Our training loss for an example is:
\small
\[
\vspace*{-0.1cm}
\mathcal{L}(\nl,\{C_i^*,c_i^*\}_i) = \sum_{i} - \log p(C_i^* \mid \nl, \data, i; \mathbf{w}_{c_i}) - \log p(c_i^* \mid \nl, \data, i; \mathbf{W})
\]
\normalsize
which is the standard negative log likelihood objective. 
We can optimize this objective with standard stochastic gradient descent, simultaneously training the shared BERT encoder parameters as well as the weight vectors and matrices comprising the classification layers. To implement the training procedure, we use the AdamW optimizer~\cite{loshchilov2018decoupled}, and train the models for 20 epochs with a batch size of 16. We provide the detailed hyperparameters including the learning rates in the appendix.

\paragraph{{\bf Type interpolants.}} Recall that our lemma generation technique from Section~\ref{sec:lemma} utilizes the notion of type interpolants to generate guards. While computing interpolants for base types is quite straightforward, we sometimes also need to compute Craig interpolants for the logical qualifiers. In order to ensure that the overhead of this procedure does not outweigh its benefits, we use a simple template-based approach to generate interpolants. In particular, we only generate interpolants that are conjunctions of predicates and enumerate them in increasing number of atomic predicates up to a small bound. The atomic predicates are generated from a pre-defined family of predicates and instantiated with terms and constants that appear in the input formulas. 

\paragraph{{\bf Ranking visualizations.}} As described in Section~\ref{sec:parse}, \toolname ranks specifications based on the probabilities produced by the intent classifier and slot filling model. To break ties between programs associated with the same specification, we use the order in which they were explored during the synthesis process, which has the effect of ranking simpler programs above more complicated ones.

\section{Evaluation}\label{sec:eval}
We now describe a series of experiments designed to answer the following research questions:
\begin{itemize}[leftmargin=*]
    \item {\bf RQ1.} How do the results produced by \toolname compare against those of existing tools?
    \item {\bf RQ2.} How long does \toolname take to synthesize visualizations?
    \item {\bf RQ3.} How important are the refinement type system and lemma learning for performance?
    \item {\bf RQ4.} How effective do users find \toolname in generating visualizations?
    
\end{itemize}

\begin{figure}
    \centering
    \vspace*{-0.5cm}
    \hspace*{-2.0cm}
    \begin{minipage}[t]{0.44\textwidth}
    \begin{table}[H]
    \vspace*{-0.2cm}
\footnotesize
\centering
\begin{tabular}{|c|c|c|}
    \hline
     {\bf Domain}  & {\bf \# of Columns} & {\bf NL Queries} \\
     \hline
     Cars & 9  &  278 \\
     \hline
     Movies  & 10 & 243  \\
     \hline
     Superstore  & 27 & 209 \\
     \hline
     \multicolumn{2}{c|}{} & {\bf 730} \\
     \cline{3-3}
\end{tabular}
\caption{Summary of datasets from {\sc NLVCorpus}.}
\label{tab:dataset}
\end{table}

    \end{minipage}
    \hspace{0.1cm}
    \begin{minipage}[t]{0.45\textwidth}
    \begin{table}[H]
    \vspace*{-0.2cm}
\footnotesize
\centering
\begin{tabular}{|c|c|c|c|}
    \hline
     {\bf Domain} & {\bf Parsing Time} & {\bf Synthesis Time} & {\bf Total Time} \\
     \hline
     Cars & 4.81 & 0.67 & 5.48  \\
     \hline
     Movies & 4.89 & 0.94 & 5.83 \\
     \hline
     Superstore & 5.82 & 0.69 & 6.51 \\
     \hline 
     \multicolumn{1}{c|}{} & {\bf 5.13} & {\bf 0.77} &  {\bf 5.89} \\
     \cline{2-4}
\end{tabular}
\caption{Average \toolname runtime in seconds.}
\label{tab:runtime}

\end{table}
    \end{minipage}
\vspace*{-0.5cm}
\end{figure}

\paragraph{{\bf Benchmarks}} To answer these questions, we perform an evaluation on the {\sc NLVCorpus} benchmark suite \cite{nlvcorpus}. {\sc NLVCorpus} contains a large collection of  {\emph{real-world}} natural language queries and their corresponding ground truth visualizations for three domains, namely Cars, Movies, and Superstore. In total, the corpus contains over 700 queries, gathered from around 200 users and specifying a variety of visualizations. {We also note that the queries in this benchmark set are quite diverse as they are \emph{syntactically unrestricted}, and each user was only allowed to specify the visualizations from one of the domains.} Table~\ref{tab:dataset} gives a high level summary of the {\sc NLVCorpus} benchmarks. 

\paragraph{\bf{Training set for the parser}} In order to use our parser, recall that we  first need to train it. Hence, to evaluate \toolname on one of the domains (e.g., Cars) of {\sc NLVCorpus}, we train it on the other two domains (e.g., Movies and Superstore). {The training data for each of the domain is automatically generated from the corresponding ground truth visualization programs provided by {\sc NLVCorpus} with no manual effort required.}


\paragraph{{\bf Experimental Setup}} All of our experiments are conducted on a machine with Intel Xeon(R) W-3275 2.50 GHz CPU and 32 GB of physical memory, running the Ubuntu 18.04 operating system with a NVIDIA Quadro RTX8000 GPU.

\paragraph{{\bf \toolname Configuration}} In all of our experiments, we configure \toolname to terminate after it finds ten visualization programs. We sort these programs by the score obtained from the parser  and  break ties by prioritizing programs with smaller AST sizes.

\subsection{Comparison with Other Tools}
To answer our first research question, we compare \toolname~ against the following existing tools:

\begin{itemize}[leftmargin=*]
    \item {\sc NL4DV} \cite{nl4dv}: A state-of-the-art \emph{rule-based} technique for generating visualizations from natural language.   
    \item {\sc Draco-NL} \cite{draco}: A variant of {\sc Draco}, which is a visualization recommendation system that generates visualizations from a partial specification. Even though {\sc Draco} does not support natural language queries by default, we implemented a custom translator that converts the output of our NL parser to partial specifications in {\sc Draco}'s query language. 
    \item {\sc NcNet-Original} \cite{ncnet}: A transformer based encoder-decoder model which translates natural language queries to visualizations. This variant of {\sc NcNet} was trained only on the {\sc NL2VIS} dataset \cite{nl2vis}. 
    \item {\sc NcNet-Augmented}: A variant of {\sc NcNet} that was trained on an augmented dataset that combines both {\sc NL2VIS} and {\sc NLVCorpus}. To ensure there is no overlap between the training and test data,  we test on one of the domains (e.g., Cars) from {\sc NLVCorpus} and train on the other two (e.g., Movies and Superstore)  when performing our evaluation.
    
    \item {\sc Bart-Vis} \cite{lewis-etal-2020-bart}: A translation-based approach that uses a fine-tuned {\sc BART} language model to directly generate visualizations. We adopt a similar training and testing set-up as  {\sc NcNet-Augmented}. 
\end{itemize}

\begin{table}[]
    \scriptsize
    \centering
    \begin{tabular}{|cc|ccc|ccc|ccc|}
    \hline
    \multicolumn{2}{|c|}{\multirow{2}{*}{{\bf Tool}}} & \multicolumn{3}{c|}{{\bf Cars}} & \multicolumn{3}{c|}{{\bf Movies}} & \multicolumn{3}{c|}{{\bf Superstore}}\\
    & & top-1 & top-5 & top-10 & top-1 & top-5 & top-10 & top-1 & top-5 & top-10 \\
    \hline
    \multirow{2}{*}{Rule-based} & {\sc NL4DV} & 0.43 & 0.49 & 0.49 & 0.43 & 0.48 & 0.49 & 0.05 & 0.46 & 0.51  \\
    & {\sc Draco-NL} & 0.36 & 0.53 & 0.57 & 0.26 & 0.40 & 0.41 & 0.40 & 0.59 & 0.59  \\
    \hline
    \multirow{3}{*}{Translation-based} & {\sc NcNet-Original} & 0.08 & 0.08 & 0.08 & 0.09 & 0.09 & 0.09 & 0.07 & 0.07 & 0.07 \\
    & {\sc NcNet-Augmented} & 0.10 & 0.11 & 0.11 & 0.12 & 0.12 & 0.12 & 0.07 & 0.07 & 0.07 \\
    & {\sc Bart-Vis}\tablefootnote{While the number for this baseline is low, we did confirm that {\sc Bart-Vis} gave 31\% accuracy on the test set of {\sc NL2VIS}. However it does not seem to train well with the scale of data we have in the real-world setting. }  & 0.09 & 0.11 & 0.12 & 0 & 0.08 & 0.15 & 0 & 0 & 0\\
    \hline
    \multicolumn{2}{|c|}{\toolname}  & 0.58 & 0.77 & 0.85 & 0.48 & 0.64 & 0.71 & 0.54 & 0.81 & 0.84 \\
    \hline
    \end{tabular}
    \caption{Comparison between \toolname\ and other tools in terms of accuracy on the {\sc NLVCorpus}.}
    \label{tab:main_res}
    \vspace*{-0.7cm}
\end{table}
\paragraph{Main results}
We evaluate the performance of all tools in terms of their average top-1, top-5 and top-10 accuracy with respect to the ground truth label. As summarized in Table~\ref{tab:main_res}, \toolname outperforms all other tools in terms of accuracy. Among these tools, {\sc NL4DV} and {\sc Draco-NL} are the closest competitors to \toolname; however, they both have significantly lower top-10 results compared to \toolname, and their performance fluctuates across different domains. 

\paragraph{Running time} As demonstrated in Table~\ref{tab:main_res}, \toolname has better overall accuracy across the board; however, the reader may wonder if this accuracy comes at the cost of significantly longer running times. The last column in Table~\ref{tab:runtime} shows the average end-to-end running time of \toolname across the three different domains. As we can see from this table, \toolname is quite fast despite performing enumerative synthesis, taking an average 5.89 seconds to complete each benchmark.


\paragraph{Failure analysis for the baselines} As we can see from Table~\ref{tab:main_res}, some of the baseline tools  perform quite poorly on the {\sc NLVCorpus} data set, so we try to provide some intuition about why this is the case.  At a high level,  machine translation-style approaches do not do well for two main reasons. First, they require the natural language query to have a complete specification of the intended plot; however, many of the queries in {\sc NLVCorpus} only have \emph{partial} specifications, similar to the working example from Section \ref{sec:overview}. Second, machine translation approaches do not take any logical constraints into account and may therefore end up generating non-sensical visualizations, such as a bar chart where the x-axis is associated with a continuous variable. 
Among the rule-based techniques, {\sc NL4DV} heavily relies on the both the parsing and visualization recommendation rules encoded in the tool and therefore fails to achieve good results when the dataset becomes more complicated and rules cannot generalize. Finally,  while {\sc Draco-NL}  utilizes the output of our parser, we observed that it produces low-quality results when the ground truth visualization requires performing non-trivial aggregation operations over the input data set.


\paragraph{Failure analysis for \toolname} We also analyzed the cases where \toolname does not produce the intended visualization among its top-$k$ results. In some cases, the ground truth visualization was not ranked sufficiently high, but \toolname can generate the intended visualization as we increase the value of $k$. In most cases, however, \toolname fails to generate the correct visualization because the natural language query does not contain enough hints about the attributes that should be used to generate  the visualization. In such cases, the parser is not  able to infer even the base type for the output of the table transformation program, resulting in a very imprecise specification. 


\subsection{Ablation Study}
In this section, we present the results of an ablation study to justify some of the design choices underlying \toolname. 

\begin{figure}
  %
    \centering
    \begin{minipage}[t]{0.38\textwidth}
    \begin{table}[H]
    \scriptsize
    \centering
    \begin{tabular}{|c|c|c|}
    \hline
     {\bf Tool}  & {\bf Avg Time (s)} & {\bf \%  Completed} \\
     \hline
     {\sc BaseOnly} & 19.61  &  84.2  \\
     \hline
     {\sc TableOnly}  & 20.71 & 91.0  \\
     \hline
     {\sc SynOnly}  & 2.11 & 91.4  \\
     \hline
     \toolname & 0.70 & 100.0 \\
     \hline
\end{tabular}
\vspace{0.5cm}
\caption{Refinement type ablation results.}
\label{tab:qualifier_eval}
\end{table}
    \end{minipage}
\hspace*{0.55cm}
    \begin{minipage}[t]{0.5\textwidth}
    \centering
    \vspace*{0.2cm}
    \input{figures/lemma_eval}
    \vspace*{-0.5cm}
    \caption{Completed benchmarks over time.}\label{fig:lemma_eval}
    \end{minipage}
\vspace*{-0.7cm}
    
\end{figure}

\subsubsection{Importance of refinement types}

First, we quantify the impact that each component of our refinement type system has on \toolname's ability to prune infeasible programs. We do so by comparing the following three variants of \toolname:

\begin{itemize}[leftmargin=*]
    \item {\sc \toolname-BaseOnly}: This is a variant of \toolname that only uses base types, but no logical qualifiers.
    \item {\sc \toolname-SynOnly}: This is a variant of \toolname that uses the base types and the {syntactic} constraints in the type system, but no table property constraints. 
    \item {\sc \toolname-TableOnly}: This is a variant of \toolname that uses the base types and the table property predicates in the type system, but no {syntactic} constraints. 
\end{itemize}

Given unbounded time, all of these variants will have the same accuracy as \toolname because they all check that a candidate program satisfies the specification by running it on the input table. As such, we compare the time they take to complete each benchmark (i.e., return ten visualization programs). In particular, we run each variant over all the benchmarks (with a 60 second timeout), and record the number of benchmarks each one completes along with the average time taken.

The results of this ablation study are shown in Table~\ref{tab:qualifier_eval}, where we report both the average synthesis time in seconds (excluding the time to parse the natural language description into a refinement type) as well as the percentage of benchmarks solved within the 60 second time limit.  Compared to {\sc BaseOnly}, {\sc SynOnly} is almost 10$\times$ faster and {\sc TableOnly} completes nearly 50 more benchmarks within the time limit. Finally, having both {syntactic} and table property constraints allows Graphy to complete all the benchmarks (66 more benchmarks than {\sc SynOnly}) and provides an overall speedup of 28$\times$ compared to base types alone.

\subsubsection{Importance of lemma learning}

We also perform a second ablation study to evaluate the importance of the type-directed lemma learning technique presented in Section~\ref{sec:lemma}. To perform this study, we consider  a variant of \toolname called {\sc \toolname-NoLemma} that is the same as \toolname except that it does not perform type-directed lemma learning.

The results of this ablation study are presented in Figure~\ref{fig:lemma_eval}, which shows the number of benchmarks completed (x-axis) within a given time limit (y-axis) when generating top-10 visualizations. As we can see from the gap between the two lines, \toolname is significantly faster than {\sc \toolname-NoLemma} and achieves an overall speed of $2.1\times$ across all benchmarks. For example, \toolname can complete 97\% of the benchmarks within 2 seconds, whereas {\sc \toolname-NoLemma}  completes 77\%.

\subsection{User Study}
We conducted a small user study to evaluate whether \toolname is helpful to end users. We recruited 12 participants, consisting of a mix of undergraduate and graduate students in computer science, math, and business. We  asked each participant to reproduce 2 plots \footnote{the plots were of different styles and required different aggregation operations to derive them.} from the Cars domain using both \toolname and Excel. For each plot and tool, we gave the participants 15 minutes to reproduce the plot using the tool.

\paragraph{Graphy Setup} To facilitate this user study, we developed a UI on top of \toolname. The UI allows users to enter a natural language query and presents the top 10 visualizations generated by \toolname. To avoid biasing the type of natural language query, no examples were given; participants were only told to keep the query high-level and under twenty words.

\paragraph{Excel Setup} We gave each participant a 10 minute tutorial demonstrating  how to generate a scatter plot and perform table transformations using PivotTable. Participants were also allowed to use any online resource of their choice.



\paragraph{{\bf Results}} When using Excel, participants could only finish 92\% of the tasks within the time limit, and took 443 seconds on average to solve a task. On the other hand, when using \toolname, the participants could solve \emph{all} the tasks and were, on average, nearly $12.7\times$ faster. We give a more thorough presentation of our user study in the appendix. 


\section{Related Work}

\paragraph{NLIs for data visualization} Many visualization NLIs are  powered by (1) ruled-based translation engines~\cite{articulate,datatone,DBLP:journals/tvcg/YuS20} that pattern-match keywords in the user input and translate them into visualization constructs, or (2) neural translation engines that leverage encoder-decoder models~\cite{nl2vis, ncnet} or pre-trained language models~\cite{poesia2022synchromesh} to directly generate a visualization program. Because existing systems expect the input NL to be complete specifications of the visualization task, they do not perform well in complex tasks where the query is incomplete or the task requires data transformation~\cite{nl4dv}. \toolname addresses this issue by formulating the visualization task as a recommendation task: it first extracts an incomplete user specification from the NL query and then generates diverse recommendations from it. As shown in our evaluation, \toolname generalizes better to complex tasks and improves user experience. 


\paragraph{Visualization recommendation systems} Visualization recommendation systems are built to help the user  explore the visualization design space from incomplete specifications. For example, {\sc Draco}~\cite{draco} leverages a constraint solver to recommend visualizations based on the user's design and data constraints written in answer set programs; Voyager~\cite{voyager} and ShowMe~\cite{showme} use heuristics to recommend visualization chart type and axes based on the user's fields of interests and the data statistics; DeepEye~\cite{deepeye} is similar to Voyager but with a statistical learning-to-rank model to the rank visualizations. Unlike existing systems that require formal specifications (e.g., constraints, concrete fields) as input, \toolname supports visualization recommendation from natural language. 


\paragraph{Type-directed program synthesis} Since refinement types \cite{dependent,liquid} were introduced, there have been a number of proposed techniques for synthesizing programs from refinement type specifications \cite{synquid, plpv09, tyde19, myth, myth2, resyn}. In particular, {\sc Myth2} takes a function type, along with input-output examples and generates a refinement type specification by combining the type signature and examples. {\sc Synquid}~\cite{synquid} synthesizes programs using polymorphic refinement types. At the heart of their procedures is a round-trip type checking mechanism that interleaves top-down and bottom-up propagation of type information. \toolname employs a similar approach but with two key differences: First our approach propagates necessary conditions to ensure \emph{type-compatibility} as opposed to subtyping, and second, we apply type-directed lemma learning to further speedup synthesis over multiple specifications. 


\paragraph{Program synthesis from NL} Beyond data visualization, there have also been proposals for performing program synthesis directly from natural language \cite{gpt3, sqlizer, lin-etal-2020-bridging, opsynth}. These techniques can mainly divided into two categories: end-to-end parsing  vs parse-then-synthesize techniques. Most of the recent work from the NLP community focuses on end-to-end parsing, using either powerful generic language models \cite{lewis-etal-2020-bart, gpt3} or domain-specific techniques targeting SQL \cite{lin-etal-2020-bridging, wang-etal-2020-rat}, spreadsheet formulas \cite{nlyze}, and bash commands \cite{lin-etal-2018-nl2bash}. Parse-then-synthesize approaches, of which \toolname is an instance,  first parse the natural language into an intermediate specification such as a sketch ~\cite{sqlizer} or a function declaration~\cite{anycode} and then synthesize programs from these intermediate specifications. 


\paragraph{Lemma learning.} {\toolname's type directed lemma learning strategy is most similar to the conflict analysis procedures in CDCL-based synthesizers such as Neo and Concord \cite{neo, concord}. In particular, these conflict analysis procedures generate a conflict clause whenever the synthesizer determines a partial program is infeasible, and this clause is used prune many other infeasible partial programs. However, unlike \toolname's lemmas, these conflict clauses  are only useful for a {\emph single} synthesis task whereas \toolname's lemmas can be reused across \emph{multiple} synthesis tasks so long as they use the same input dataset.}

\section{Conclusion}

We have presented \toolname, a new synthesis-based NLI for visualizations. We evaluated \toolname on 3 datasets with over 700 natural language queries and found it significantly outperforms prior state-of-the-art approaches in top-1, top-5, and top-10 accuracy.

\begin{acks}                            
  We would like to thank Anders Miltner, Ben Mariano, Xi Ye, fellow graduate students on GDC 5S, and the anonymous reviewers for their help and feedback for this paper. This material is based upon work supported by the \grantsponsor{GS100000001}{National Science Foundation}{http://dx.doi.org/10.13039/100000001} under grant number \grantnum{GS100000001}{CCF-1811865}, \grantnum{GS100000001}{CCF-1712067}, \grantnum{GS100000001}{CCF-1762299},  \grantnum{GS100000001}{CCF-1918889}, Google under the Google Faculty Research Grant, as well as Facebook, Amazon and RelationalAI.
\end{acks}

\bibliography{main}


\begin{thebibliography}{51}


\ifx \showCODEN    \undefined \def \showCODEN     #1{\unskip}     \fi
\ifx \showDOI      \undefined \def \showDOI       #1{#1}\fi
\ifx \showISBNx    \undefined \def \showISBNx     #1{\unskip}     \fi
\ifx \showISBNxiii \undefined \def \showISBNxiii  #1{\unskip}     \fi
\ifx \showISSN     \undefined \def \showISSN      #1{\unskip}     \fi
\ifx \showLCCN     \undefined \def \showLCCN      #1{\unskip}     \fi
\ifx \shownote     \undefined \def \shownote      #1{#1}          \fi
\ifx \showarticletitle \undefined \def \showarticletitle #1{#1}   \fi
\ifx \showURL      \undefined \def \showURL       {\relax}        \fi
\providecommand\bibfield[2]{#2}
\providecommand\bibinfo[2]{#2}
\providecommand\natexlab[1]{#1}
\providecommand\showeprint[2][]{arXiv:#2}

\bibitem[\protect\citeauthoryear{Bahdanau, Cho, and Bengio}{Bahdanau
  et~al\mbox{.}}{2014}]%
        {bahdanau-et-al-2015}
\bibfield{author}{\bibinfo{person}{Dzmitry Bahdanau},
  \bibinfo{person}{Kyunghyun Cho}, {and} \bibinfo{person}{Yoshua Bengio}.}
  \bibinfo{year}{2014}\natexlab{}.
\newblock \showarticletitle{Neural Machine Translation by Jointly Learning to
  Align and Translate}. In \bibinfo{booktitle}{\emph{International Conference
  on Learning Representations}}.
\newblock
\urldef\tempurl%
\url{https://arxiv.org/abs/1409.0473}
\showURL{%
\tempurl}


\bibitem[\protect\citeauthoryear{Brown, Mann, Ryder, Subbiah, Kaplan, Dhariwal,
  Neelakantan, Shyam, Sastry, Askell, Agarwal, Herbert-Voss, Krueger, Henighan,
  Child, Ramesh, Ziegler, Wu, Winter, Hesse, Chen, Sigler, Litwin, Gray, Chess,
  Clark, Berner, McCandlish, Radford, Sutskever, and Amodei}{Brown
  et~al\mbox{.}}{2020}]%
        {gpt3}
\bibfield{author}{\bibinfo{person}{Tom Brown}, \bibinfo{person}{Benjamin Mann},
  \bibinfo{person}{Nick Ryder}, \bibinfo{person}{Melanie Subbiah},
  \bibinfo{person}{Jared~D Kaplan}, \bibinfo{person}{Prafulla Dhariwal},
  \bibinfo{person}{Arvind Neelakantan}, \bibinfo{person}{Pranav Shyam},
  \bibinfo{person}{Girish Sastry}, \bibinfo{person}{Amanda Askell},
  \bibinfo{person}{Sandhini Agarwal}, \bibinfo{person}{Ariel Herbert-Voss},
  \bibinfo{person}{Gretchen Krueger}, \bibinfo{person}{Tom Henighan},
  \bibinfo{person}{Rewon Child}, \bibinfo{person}{Aditya Ramesh},
  \bibinfo{person}{Daniel Ziegler}, \bibinfo{person}{Jeffrey Wu},
  \bibinfo{person}{Clemens Winter}, \bibinfo{person}{Chris Hesse},
  \bibinfo{person}{Mark Chen}, \bibinfo{person}{Eric Sigler},
  \bibinfo{person}{Mateusz Litwin}, \bibinfo{person}{Scott Gray},
  \bibinfo{person}{Benjamin Chess}, \bibinfo{person}{Jack Clark},
  \bibinfo{person}{Christopher Berner}, \bibinfo{person}{Sam McCandlish},
  \bibinfo{person}{Alec Radford}, \bibinfo{person}{Ilya Sutskever}, {and}
  \bibinfo{person}{Dario Amodei}.} \bibinfo{year}{2020}\natexlab{}.
\newblock \showarticletitle{Language Models are Few-Shot Learners}. In
  \bibinfo{booktitle}{\emph{Advances in Neural Information Processing
  Systems}}, \bibfield{editor}{\bibinfo{person}{H.~Larochelle},
  \bibinfo{person}{M.~Ranzato}, \bibinfo{person}{R.~Hadsell},
  \bibinfo{person}{M.~F. Balcan}, {and} \bibinfo{person}{H.~Lin}} (Eds.),
  Vol.~\bibinfo{volume}{33}. \bibinfo{publisher}{Curran Associates, Inc.},
  \bibinfo{pages}{1877--1901}.
\newblock
\urldef\tempurl%
\url{https://proceedings.neurips.cc/paper/2020/file/1457c0d6bfcb4967418bfb8ac142f64a-Paper.pdf}
\showURL{%
\tempurl}


\bibitem[\protect\citeauthoryear{Chen, Wang, Bastani, Dillig, and Feng}{Chen
  et~al\mbox{.}}{2020}]%
        {concord}
\bibfield{author}{\bibinfo{person}{Yanju Chen}, \bibinfo{person}{Chenglong
  Wang}, \bibinfo{person}{Osbert Bastani}, \bibinfo{person}{Isil Dillig}, {and}
  \bibinfo{person}{Yu Feng}.} \bibinfo{year}{2020}\natexlab{}.
\newblock \showarticletitle{Program Synthesis Using Deduction-Guided
  Reinforcement Learning}. In \bibinfo{booktitle}{\emph{Computer Aided
  Verification: 32nd International Conference, CAV 2020, Los Angeles, CA, USA,
  July 21–24, 2020, Proceedings, Part II}} (Los Angeles, CA, USA).
  \bibinfo{publisher}{Springer-Verlag}, \bibinfo{address}{Berlin, Heidelberg},
  \bibinfo{pages}{587–610}.
\newblock
\showISBNx{978-3-030-53290-1}
\urldef\tempurl%
\url{https://doi.org/10.1007/978-3-030-53291-8_30}
\showDOI{\tempurl}


\bibitem[\protect\citeauthoryear{Craig}{Craig}{1957}]%
        {craig}
\bibfield{author}{\bibinfo{person}{William Craig}.}
  \bibinfo{year}{1957}\natexlab{}.
\newblock \showarticletitle{Linear reasoning. A new form of the
  Herbrand-Gentzen theorem}.
\newblock \bibinfo{journal}{\emph{Journal of Symbolic Logic}}
  \bibinfo{volume}{22}, \bibinfo{number}{3} (\bibinfo{year}{1957}),
  \bibinfo{pages}{250–268}.
\newblock
\urldef\tempurl%
\url{https://doi.org/10.2307/2963593}
\showDOI{\tempurl}


\bibitem[\protect\citeauthoryear{Dahl, Bates, Brown, Fisher, Hunicke-Smith,
  Pallett, Pao, Rudnicky, and Shriberg}{Dahl et~al\mbox{.}}{1994}]%
        {dahl-etal-1994-expanding}
\bibfield{author}{\bibinfo{person}{Deborah~A. Dahl}, \bibinfo{person}{Madeleine
  Bates}, \bibinfo{person}{Michael Brown}, \bibinfo{person}{William Fisher},
  \bibinfo{person}{Kate Hunicke-Smith}, \bibinfo{person}{David Pallett},
  \bibinfo{person}{Christine Pao}, \bibinfo{person}{Alexander Rudnicky}, {and}
  \bibinfo{person}{Elizabeth Shriberg}.} \bibinfo{year}{1994}\natexlab{}.
\newblock \showarticletitle{Expanding the Scope of the {ATIS} Task: The
  {ATIS}-3 Corpus}. In \bibinfo{booktitle}{\emph{{H}uman {L}anguage
  {T}echnology: Proceedings of a Workshop held at {P}lainsboro, {N}ew {J}ersey,
  {M}arch 8-11, 1994}}.
\newblock
\urldef\tempurl%
\url{https://aclanthology.org/H94-1010}
\showURL{%
\tempurl}


\bibitem[\protect\citeauthoryear{Devlin, Chang, Lee, and Toutanova}{Devlin
  et~al\mbox{.}}{2019}]%
        {bert}
\bibfield{author}{\bibinfo{person}{Jacob Devlin}, \bibinfo{person}{Ming-Wei
  Chang}, \bibinfo{person}{Kenton Lee}, {and} \bibinfo{person}{Kristina
  Toutanova}.} \bibinfo{year}{2019}\natexlab{}.
\newblock \showarticletitle{{BERT}: Pre-training of Deep Bidirectional
  Transformers for Language Understanding}. In
  \bibinfo{booktitle}{\emph{Proceedings of the 2019 Conference of the North
  {A}merican Chapter of the Association for Computational Linguistics: Human
  Language Technologies, Volume 1 (Long and Short Papers)}}.
  \bibinfo{publisher}{Association for Computational Linguistics},
  \bibinfo{address}{Minneapolis, Minnesota}, \bibinfo{pages}{4171--4186}.
\newblock
\urldef\tempurl%
\url{https://doi.org/10.18653/v1/N19-1423}
\showDOI{\tempurl}


\bibitem[\protect\citeauthoryear{Feng, Martins, Bastani, and Dillig}{Feng
  et~al\mbox{.}}{2018}]%
        {neo}
\bibfield{author}{\bibinfo{person}{Yu Feng}, \bibinfo{person}{Ruben Martins},
  \bibinfo{person}{Osbert Bastani}, {and} \bibinfo{person}{Isil Dillig}.}
  \bibinfo{year}{2018}\natexlab{}.
\newblock \showarticletitle{Program Synthesis Using Conflict-Driven Learning}.
\newblock  \bibinfo{volume}{53}, \bibinfo{number}{4} (\bibinfo{date}{jun}
  \bibinfo{year}{2018}), \bibinfo{pages}{420–435}.
\newblock
\showISSN{0362-1340}
\urldef\tempurl%
\url{https://doi.org/10.1145/3296979.3192382}
\showDOI{\tempurl}


\bibitem[\protect\citeauthoryear{Feser, Chaudhuri, and Dillig}{Feser
  et~al\mbox{.}}{2015}]%
        {lambda2}
\bibfield{author}{\bibinfo{person}{John~K. Feser}, \bibinfo{person}{Swarat
  Chaudhuri}, {and} \bibinfo{person}{Isil Dillig}.}
  \bibinfo{year}{2015}\natexlab{}.
\newblock \showarticletitle{Synthesizing Data Structure Transformations from
  Input-Output Examples}.
\newblock \bibinfo{journal}{\emph{SIGPLAN Not.}} \bibinfo{volume}{50},
  \bibinfo{number}{6} (\bibinfo{date}{jun} \bibinfo{year}{2015}),
  \bibinfo{pages}{229–239}.
\newblock
\showISSN{0362-1340}
\urldef\tempurl%
\url{https://doi.org/10.1145/2813885.2737977}
\showDOI{\tempurl}


\bibitem[\protect\citeauthoryear{Frankle, Osera, Walker, and Zdancewic}{Frankle
  et~al\mbox{.}}{2016}]%
        {myth2}
\bibfield{author}{\bibinfo{person}{Jonathan Frankle},
  \bibinfo{person}{Peter-Michael Osera}, \bibinfo{person}{David Walker}, {and}
  \bibinfo{person}{Steve Zdancewic}.} \bibinfo{year}{2016}\natexlab{}.
\newblock \showarticletitle{Example-Directed Synthesis: A Type-Theoretic
  Interpretation}. In \bibinfo{booktitle}{\emph{Proceedings of the 43rd Annual
  ACM SIGPLAN-SIGACT Symposium on Principles of Programming Languages}} (St.
  Petersburg, FL, USA) \emph{(\bibinfo{series}{POPL '16})}.
  \bibinfo{publisher}{Association for Computing Machinery},
  \bibinfo{address}{New York, NY, USA}, \bibinfo{pages}{802–815}.
\newblock
\showISBNx{9781450335492}
\urldef\tempurl%
\url{https://doi.org/10.1145/2837614.2837629}
\showDOI{\tempurl}


\bibitem[\protect\citeauthoryear{Gao, Dontcheva, Adar, Liu, and Karahalios}{Gao
  et~al\mbox{.}}{2015}]%
        {datatone}
\bibfield{author}{\bibinfo{person}{Tong Gao}, \bibinfo{person}{Mira Dontcheva},
  \bibinfo{person}{Eytan Adar}, \bibinfo{person}{Zhicheng Liu}, {and}
  \bibinfo{person}{Karrie~G. Karahalios}.} \bibinfo{year}{2015}\natexlab{}.
\newblock \showarticletitle{DataTone: Managing Ambiguity in Natural Language
  Interfaces for Data Visualization}. In \bibinfo{booktitle}{\emph{Proceedings
  of the 28th Annual ACM Symposium on User Interface Software \&amp;
  Technology}} (Charlotte, NC, USA) \emph{(\bibinfo{series}{UIST '15})}.
  \bibinfo{publisher}{Association for Computing Machinery},
  \bibinfo{address}{New York, NY, USA}, \bibinfo{pages}{489–500}.
\newblock
\showISBNx{9781450337793}
\urldef\tempurl%
\url{https://doi.org/10.1145/2807442.2807478}
\showDOI{\tempurl}


\bibitem[\protect\citeauthoryear{Gulwani and Marron}{Gulwani and
  Marron}{2014}]%
        {nlyze}
\bibfield{author}{\bibinfo{person}{Sumit Gulwani} {and} \bibinfo{person}{Mark
  Marron}.} \bibinfo{year}{2014}\natexlab{}.
\newblock \showarticletitle{NLyze: Interactive Programming by Natural Language
  for Spreadsheet Data Analysis and Manipulation}. In
  \bibinfo{booktitle}{\emph{Proceedings of the 2014 ACM SIGMOD International
  Conference on Management of Data}} (Snowbird, Utah, USA)
  \emph{(\bibinfo{series}{SIGMOD '14})}. \bibinfo{publisher}{Association for
  Computing Machinery}, \bibinfo{address}{New York, NY, USA},
  \bibinfo{pages}{803–814}.
\newblock
\showISBNx{9781450323765}
\urldef\tempurl%
\url{https://doi.org/10.1145/2588555.2612177}
\showDOI{\tempurl}


\bibitem[\protect\citeauthoryear{Gulwani and Musuvathi}{Gulwani and
  Musuvathi}{2008}]%
        {cover}
\bibfield{author}{\bibinfo{person}{Sumit Gulwani} {and} \bibinfo{person}{Madan
  Musuvathi}.} \bibinfo{year}{2008}\natexlab{}.
\newblock \showarticletitle{Cover Algorithms and Their Combination}. In
  \bibinfo{booktitle}{\emph{Proceedings of the Theory and Practice of Software,
  17th European Conference on Programming Languages and Systems}} (Budapest,
  Hungary) \emph{(\bibinfo{series}{ESOP'08/ETAPS'08})}.
  \bibinfo{publisher}{Springer-Verlag}, \bibinfo{address}{Berlin, Heidelberg},
  \bibinfo{pages}{193–207}.
\newblock
\showISBNx{3540787380}


\bibitem[\protect\citeauthoryear{Gvero and Kuncak}{Gvero and Kuncak}{2015}]%
        {anycode}
\bibfield{author}{\bibinfo{person}{Tihomir Gvero} {and} \bibinfo{person}{Viktor
  Kuncak}.} \bibinfo{year}{2015}\natexlab{}.
\newblock \showarticletitle{Synthesizing Java Expressions from Free-Form
  Queries}. In \bibinfo{booktitle}{\emph{Proceedings of the 2015 ACM SIGPLAN
  International Conference on Object-Oriented Programming, Systems, Languages,
  and Applications}} (Pittsburgh, PA, USA) \emph{(\bibinfo{series}{OOPSLA
  2015})}. \bibinfo{publisher}{Association for Computing Machinery},
  \bibinfo{address}{New York, NY, USA}, \bibinfo{pages}{416–432}.
\newblock
\showISBNx{9781450336895}
\urldef\tempurl%
\url{https://doi.org/10.1145/2814270.2814295}
\showDOI{\tempurl}


\bibitem[\protect\citeauthoryear{Hemphill, Godfrey, and Doddington}{Hemphill
  et~al\mbox{.}}{1990}]%
        {hemphill-etal-1990-atis}
\bibfield{author}{\bibinfo{person}{Charles~T. Hemphill},
  \bibinfo{person}{John~J. Godfrey}, {and} \bibinfo{person}{George~R.
  Doddington}.} \bibinfo{year}{1990}\natexlab{}.
\newblock \showarticletitle{The {ATIS} Spoken Language Systems Pilot Corpus}.
  In \bibinfo{booktitle}{\emph{Speech and Natural Language: Proceedings of a
  Workshop Held at Hidden Valley, {P}ennsylvania, June 24-27,1990}}.
\newblock
\urldef\tempurl%
\url{https://aclanthology.org/H90-1021}
\showURL{%
\tempurl}


\bibitem[\protect\citeauthoryear{Jeong and Lee}{Jeong and Lee}{2006}]%
        {jeong-lee-2006-exploiting}
\bibfield{author}{\bibinfo{person}{Minwoo Jeong} {and}
  \bibinfo{person}{Gary~Geunbae Lee}.} \bibinfo{year}{2006}\natexlab{}.
\newblock \showarticletitle{Exploiting Non-Local Features for Spoken Language
  Understanding}. In \bibinfo{booktitle}{\emph{Proceedings of the
  {COLING}/{ACL} 2006 Main Conference Poster Sessions}}.
  \bibinfo{publisher}{Association for Computational Linguistics},
  \bibinfo{address}{Sydney, Australia}, \bibinfo{pages}{412--419}.
\newblock
\urldef\tempurl%
\url{https://aclanthology.org/P06-2054}
\showURL{%
\tempurl}


\bibitem[\protect\citeauthoryear{Knoth, Wang, Polikarpova, and Hoffmann}{Knoth
  et~al\mbox{.}}{2019}]%
        {resyn}
\bibfield{author}{\bibinfo{person}{Tristan Knoth}, \bibinfo{person}{Di Wang},
  \bibinfo{person}{Nadia Polikarpova}, {and} \bibinfo{person}{Jan Hoffmann}.}
  \bibinfo{year}{2019}\natexlab{}.
\newblock \showarticletitle{Resource-Guided Program Synthesis}. In
  \bibinfo{booktitle}{\emph{Proceedings of the 40th ACM SIGPLAN Conference on
  Programming Language Design and Implementation}} (Phoenix, AZ, USA)
  \emph{(\bibinfo{series}{PLDI 2019})}. \bibinfo{publisher}{Association for
  Computing Machinery}, \bibinfo{address}{New York, NY, USA},
  \bibinfo{pages}{253–268}.
\newblock
\showISBNx{9781450367127}
\urldef\tempurl%
\url{https://doi.org/10.1145/3314221.3314602}
\showDOI{\tempurl}


\bibitem[\protect\citeauthoryear{Knowles and Flanagan}{Knowles and
  Flanagan}{2009}]%
        {plpv09}
\bibfield{author}{\bibinfo{person}{Kenneth Knowles} {and}
  \bibinfo{person}{Cormac Flanagan}.} \bibinfo{year}{2009}\natexlab{}.
\newblock \showarticletitle{Compositional Reasoning and Decidable Checking for
  Dependent Contract Types}. In \bibinfo{booktitle}{\emph{Proceedings of the
  3rd Workshop on Programming Languages Meets Program Verification}} (Savannah,
  GA, USA) \emph{(\bibinfo{series}{PLPV '09})}. \bibinfo{publisher}{Association
  for Computing Machinery}, \bibinfo{address}{New York, NY, USA},
  \bibinfo{pages}{27–38}.
\newblock
\showISBNx{9781605583303}
\urldef\tempurl%
\url{https://doi.org/10.1145/1481848.1481853}
\showDOI{\tempurl}


\bibitem[\protect\citeauthoryear{Lewis, Liu, Goyal, Ghazvininejad, Mohamed,
  Levy, Stoyanov, and Zettlemoyer}{Lewis et~al\mbox{.}}{2020}]%
        {lewis-etal-2020-bart}
\bibfield{author}{\bibinfo{person}{Mike Lewis}, \bibinfo{person}{Yinhan Liu},
  \bibinfo{person}{Naman Goyal}, \bibinfo{person}{Marjan Ghazvininejad},
  \bibinfo{person}{Abdelrahman Mohamed}, \bibinfo{person}{Omer Levy},
  \bibinfo{person}{Veselin Stoyanov}, {and} \bibinfo{person}{Luke
  Zettlemoyer}.} \bibinfo{year}{2020}\natexlab{}.
\newblock \showarticletitle{{BART}: Denoising Sequence-to-Sequence Pre-training
  for Natural Language Generation, Translation, and Comprehension}. In
  \bibinfo{booktitle}{\emph{Proceedings of the 58th Annual Meeting of the
  Association for Computational Linguistics}}. \bibinfo{publisher}{Association
  for Computational Linguistics}, \bibinfo{address}{Online},
  \bibinfo{pages}{7871--7880}.
\newblock
\urldef\tempurl%
\url{https://doi.org/10.18653/v1/2020.acl-main.703}
\showDOI{\tempurl}


\bibitem[\protect\citeauthoryear{Lin, Socher, and Xiong}{Lin
  et~al\mbox{.}}{2020}]%
        {lin-etal-2020-bridging}
\bibfield{author}{\bibinfo{person}{Xi~Victoria Lin}, \bibinfo{person}{Richard
  Socher}, {and} \bibinfo{person}{Caiming Xiong}.}
  \bibinfo{year}{2020}\natexlab{}.
\newblock \showarticletitle{Bridging Textual and Tabular Data for Cross-Domain
  Text-to-{SQL} Semantic Parsing}. In \bibinfo{booktitle}{\emph{Findings of the
  Association for Computational Linguistics: EMNLP 2020}}.
  \bibinfo{publisher}{Association for Computational Linguistics},
  \bibinfo{address}{Online}, \bibinfo{pages}{4870--4888}.
\newblock
\urldef\tempurl%
\url{https://doi.org/10.18653/v1/2020.findings-emnlp.438}
\showDOI{\tempurl}


\bibitem[\protect\citeauthoryear{Lin, Wang, Zettlemoyer, and Ernst}{Lin
  et~al\mbox{.}}{2018}]%
        {lin-etal-2018-nl2bash}
\bibfield{author}{\bibinfo{person}{Xi~Victoria Lin}, \bibinfo{person}{Chenglong
  Wang}, \bibinfo{person}{Luke Zettlemoyer}, {and} \bibinfo{person}{Michael~D.
  Ernst}.} \bibinfo{year}{2018}\natexlab{}.
\newblock \showarticletitle{{NL}2{B}ash: A Corpus and Semantic Parser for
  Natural Language Interface to the Linux Operating System}. In
  \bibinfo{booktitle}{\emph{Proceedings of the Eleventh International
  Conference on Language Resources and Evaluation ({LREC} 2018)}}.
  \bibinfo{publisher}{European Language Resources Association (ELRA)},
  \bibinfo{address}{Miyazaki, Japan}.
\newblock
\urldef\tempurl%
\url{https://aclanthology.org/L18-1491}
\showURL{%
\tempurl}


\bibitem[\protect\citeauthoryear{Loshchilov and Hutter}{Loshchilov and
  Hutter}{2019}]%
        {loshchilov2018decoupled}
\bibfield{author}{\bibinfo{person}{Ilya Loshchilov} {and}
  \bibinfo{person}{Frank Hutter}.} \bibinfo{year}{2019}\natexlab{}.
\newblock \showarticletitle{Decoupled Weight Decay Regularization}. In
  \bibinfo{booktitle}{\emph{International Conference on Learning
  Representations}}.
\newblock
\urldef\tempurl%
\url{https://openreview.net/forum?id=Bkg6RiCqY7}
\showURL{%
\tempurl}


\bibitem[\protect\citeauthoryear{Luo, Tang, Li, Chai, Li, and Qin}{Luo
  et~al\mbox{.}}{2021}]%
        {nl2vis}
\bibfield{author}{\bibinfo{person}{Yuyu Luo}, \bibinfo{person}{Nan Tang},
  \bibinfo{person}{Guoliang Li}, \bibinfo{person}{Chengliang Chai},
  \bibinfo{person}{Wenbo Li}, {and} \bibinfo{person}{Xuedi Qin}.}
  \bibinfo{year}{2021}\natexlab{}.
\newblock \bibinfo{booktitle}{\emph{Synthesizing Natural Language to
  Visualization (NL2VIS) Benchmarks from NL2SQL Benchmarks}}.
\newblock \bibinfo{publisher}{Association for Computing Machinery},
  \bibinfo{address}{New York, NY, USA}, \bibinfo{pages}{1235–1247}.
\newblock
\showISBNx{9781450383431}
\urldef\tempurl%
\url{https://doi.org/10.1145/3448016.3457261}
\showURL{%
\tempurl}


\bibitem[\protect\citeauthoryear{Luo, Tang, Li, Tang, Chai, and Qin}{Luo
  et~al\mbox{.}}{2022}]%
        {ncnet}
\bibfield{author}{\bibinfo{person}{Y. Luo}, \bibinfo{person}{N. Tang},
  \bibinfo{person}{G. Li}, \bibinfo{person}{J. Tang}, \bibinfo{person}{C.
  Chai}, {and} \bibinfo{person}{X. Qin}.} \bibinfo{year}{2022}\natexlab{}.
\newblock \showarticletitle{Natural Language to Visualization by Neural Machine
  Translation}.
\newblock \bibinfo{journal}{\emph{IEEE Transactions on Visualization and
  Computer Graphics}} \bibinfo{volume}{28}, \bibinfo{number}{01}
  (\bibinfo{date}{jan} \bibinfo{year}{2022}), \bibinfo{pages}{217--226}.
\newblock
\showISSN{1941-0506}
\urldef\tempurl%
\url{https://doi.org/10.1109/TVCG.2021.3114848}
\showDOI{\tempurl}


\bibitem[\protect\citeauthoryear{Mackinlay, Hanrahan, and Stolte}{Mackinlay
  et~al\mbox{.}}{2007}]%
        {showme}
\bibfield{author}{\bibinfo{person}{Jock Mackinlay}, \bibinfo{person}{Pat
  Hanrahan}, {and} \bibinfo{person}{Chris Stolte}.}
  \bibinfo{year}{2007}\natexlab{}.
\newblock \showarticletitle{Show Me: Automatic Presentation for Visual
  Analysis}.
\newblock \bibinfo{journal}{\emph{IEEE Transactions on Visualization and
  Computer Graphics}} \bibinfo{volume}{13}, \bibinfo{number}{6}
  (\bibinfo{year}{2007}), \bibinfo{pages}{1137--1144}.
\newblock
\urldef\tempurl%
\url{https://doi.org/10.1109/TVCG.2007.70594}
\showDOI{\tempurl}


\bibitem[\protect\citeauthoryear{Martin-Lof, Lozinski, Atiyah, Hoare, and
  Shepherdson}{Martin-Lof et~al\mbox{.}}{1984}]%
        {dependent}
\bibfield{author}{\bibinfo{person}{P. Martin-Lof}, \bibinfo{person}{Z.~A.
  Lozinski}, \bibinfo{person}{Michael~Francis Atiyah},
  \bibinfo{person}{Cecil~Arthur Hoare}, {and} \bibinfo{person}{J.~C.
  Shepherdson}.} \bibinfo{year}{1984}\natexlab{}.
\newblock \showarticletitle{Constructive mathematics and computer programming}.
\newblock \bibinfo{journal}{\emph{Philosophical Transactions of the Royal
  Society of London. Series A, Mathematical and Physical Sciences}}
  \bibinfo{volume}{312}, \bibinfo{number}{1522} (\bibinfo{year}{1984}),
  \bibinfo{pages}{501--518}.
\newblock
\urldef\tempurl%
\url{https://doi.org/10.1098/rsta.1984.0073}
\showDOI{\tempurl}
\showeprint{https://royalsocietypublishing.org/doi/pdf/10.1098/rsta.1984.0073}


\bibitem[\protect\citeauthoryear{Moritz, Wang, Nelson, Lin, Smith, Howe, and
  Heer}{Moritz et~al\mbox{.}}{2019}]%
        {draco}
\bibfield{author}{\bibinfo{person}{Dominik Moritz}, \bibinfo{person}{Chenglong
  Wang}, \bibinfo{person}{Greg~L. Nelson}, \bibinfo{person}{Halden Lin},
  \bibinfo{person}{Adam~M. Smith}, \bibinfo{person}{Bill Howe}, {and}
  \bibinfo{person}{Jeffrey Heer}.} \bibinfo{year}{2019}\natexlab{}.
\newblock \showarticletitle{Formalizing Visualization Design Knowledge as
  Constraints: Actionable and Extensible Models in Draco}.
\newblock \bibinfo{journal}{\emph{IEEE Transactions on Visualization and
  Computer Graphics}} \bibinfo{volume}{25}, \bibinfo{number}{1}
  (\bibinfo{year}{2019}), \bibinfo{pages}{438--448}.
\newblock
\urldef\tempurl%
\url{https://doi.org/10.1109/TVCG.2018.2865240}
\showDOI{\tempurl}


\bibitem[\protect\citeauthoryear{Narechania, Srinivasan, and Stasko}{Narechania
  et~al\mbox{.}}{2021}]%
        {nl4dv}
\bibfield{author}{\bibinfo{person}{Arpit Narechania}, \bibinfo{person}{Arjun
  Srinivasan}, {and} \bibinfo{person}{John Stasko}.}
  \bibinfo{year}{2021}\natexlab{}.
\newblock \showarticletitle{NL4DV: A Toolkit for Generating Analytic
  Specifications for Data Visualization from Natural Language Queries}.
\newblock \bibinfo{journal}{\emph{IEEE Transactions on Visualization and
  Computer Graphics}} \bibinfo{volume}{27}, \bibinfo{number}{2}
  (\bibinfo{date}{Feb} \bibinfo{year}{2021}), \bibinfo{pages}{369–379}.
\newblock
\showISSN{2160-9306}
\urldef\tempurl%
\url{https://doi.org/10.1109/tvcg.2020.3030378}
\showDOI{\tempurl}


\bibitem[\protect\citeauthoryear{Osera}{Osera}{2019}]%
        {tyde19}
\bibfield{author}{\bibinfo{person}{Peter-Michael Osera}.}
  \bibinfo{year}{2019}\natexlab{}.
\newblock \showarticletitle{Constraint-Based Type-Directed Program Synthesis}.
  In \bibinfo{booktitle}{\emph{Proceedings of the 4th ACM SIGPLAN International
  Workshop on Type-Driven Development}} (Berlin, Germany)
  \emph{(\bibinfo{series}{TyDe 2019})}. \bibinfo{publisher}{Association for
  Computing Machinery}, \bibinfo{address}{New York, NY, USA},
  \bibinfo{pages}{64–76}.
\newblock
\showISBNx{9781450368155}
\urldef\tempurl%
\url{https://doi.org/10.1145/3331554.3342608}
\showDOI{\tempurl}


\bibitem[\protect\citeauthoryear{Osera and Zdancewic}{Osera and
  Zdancewic}{2015}]%
        {myth}
\bibfield{author}{\bibinfo{person}{Peter-Michael Osera} {and}
  \bibinfo{person}{Steve Zdancewic}.} \bibinfo{year}{2015}\natexlab{}.
\newblock \showarticletitle{Type-and-Example-Directed Program Synthesis}. In
  \bibinfo{booktitle}{\emph{Proceedings of the 36th ACM SIGPLAN Conference on
  Programming Language Design and Implementation}} (Portland, OR, USA)
  \emph{(\bibinfo{series}{PLDI '15})}. \bibinfo{publisher}{Association for
  Computing Machinery}, \bibinfo{address}{New York, NY, USA},
  \bibinfo{pages}{619–630}.
\newblock
\showISBNx{9781450334686}
\urldef\tempurl%
\url{https://doi.org/10.1145/2737924.2738007}
\showDOI{\tempurl}


\bibitem[\protect\citeauthoryear{Pan and Vardi}{Pan and Vardi}{2004}]%
        {quantelimqbf}
\bibfield{author}{\bibinfo{person}{Guoqiang Pan} {and}
  \bibinfo{person}{Moshe~Y. Vardi}.} \bibinfo{year}{2004}\natexlab{}.
\newblock \showarticletitle{Symbolic Decision Procedures for QBF}. In
  \bibinfo{booktitle}{\emph{Proceedings of the 10th International Conference on
  Principles and Practice of Constraint Programming}} (Toronto, Canada)
  \emph{(\bibinfo{series}{CP'04})}. \bibinfo{publisher}{Springer-Verlag},
  \bibinfo{address}{Berlin, Heidelberg}, \bibinfo{pages}{453–467}.
\newblock
\showISBNx{9783540232414}
\urldef\tempurl%
\url{https://doi.org/10.1007/978-3-540-30201-8_34}
\showDOI{\tempurl}


\bibitem[\protect\citeauthoryear{Pierce}{Pierce}{2002}]%
        {tapl}
\bibfield{author}{\bibinfo{person}{Benjamin~C. Pierce}.}
  \bibinfo{year}{2002}\natexlab{}.
\newblock \bibinfo{booktitle}{\emph{Types and Programming Languages}
  (\bibinfo{edition}{1st} ed.)}.
\newblock \bibinfo{publisher}{The MIT Press}.
\newblock
\showISBNx{0262162091}


\bibitem[\protect\citeauthoryear{Poesia, Polozov, Le, Tiwari, Soares, Meek, and
  Gulwani}{Poesia et~al\mbox{.}}{2022}]%
        {poesia2022synchromesh}
\bibfield{author}{\bibinfo{person}{Gabriel Poesia}, \bibinfo{person}{Alex
  Polozov}, \bibinfo{person}{Vu Le}, \bibinfo{person}{Ashish Tiwari},
  \bibinfo{person}{Gustavo Soares}, \bibinfo{person}{Christopher Meek}, {and}
  \bibinfo{person}{Sumit Gulwani}.} \bibinfo{year}{2022}\natexlab{}.
\newblock \showarticletitle{Synchromesh: Reliable Code Generation from
  Pre-trained Language Models}. In \bibinfo{booktitle}{\emph{International
  Conference on Learning Representations}}.
\newblock
\urldef\tempurl%
\url{https://openreview.net/forum?id=KmtVD97J43e}
\showURL{%
\tempurl}


\bibitem[\protect\citeauthoryear{Polikarpova, Kuraj, and
  Solar-Lezama}{Polikarpova et~al\mbox{.}}{2016}]%
        {synquid}
\bibfield{author}{\bibinfo{person}{Nadia Polikarpova}, \bibinfo{person}{Ivan
  Kuraj}, {and} \bibinfo{person}{Armando Solar-Lezama}.}
  \bibinfo{year}{2016}\natexlab{}.
\newblock \showarticletitle{Program Synthesis from Polymorphic Refinement
  Types}.
\newblock \bibinfo{journal}{\emph{SIGPLAN Not.}} \bibinfo{volume}{51},
  \bibinfo{number}{6} (\bibinfo{date}{jun} \bibinfo{year}{2016}),
  \bibinfo{pages}{522–538}.
\newblock
\showISSN{0362-1340}
\urldef\tempurl%
\url{https://doi.org/10.1145/2980983.2908093}
\showDOI{\tempurl}


\bibitem[\protect\citeauthoryear{Qin, Luo, Tang, and Li}{Qin
  et~al\mbox{.}}{2018}]%
        {deepeye}
\bibfield{author}{\bibinfo{person}{Xuedi Qin}, \bibinfo{person}{Yuyu Luo},
  \bibinfo{person}{Nan Tang}, {and} \bibinfo{person}{Guoliang Li}.}
  \bibinfo{year}{2018}\natexlab{}.
\newblock \showarticletitle{DeepEye: An automatic big data visualization
  framework}.
\newblock \bibinfo{journal}{\emph{Big Data Mining and Analytics}}
  \bibinfo{volume}{1}, \bibinfo{number}{1} (\bibinfo{year}{2018}),
  \bibinfo{pages}{75--82}.
\newblock
\urldef\tempurl%
\url{https://doi.org/10.26599/BDMA.2018.9020007}
\showDOI{\tempurl}


\bibitem[\protect\citeauthoryear{Rondon, Kawaguchi, and Jhala}{Rondon
  et~al\mbox{.}}{2008}]%
        {liquid}
\bibfield{author}{\bibinfo{person}{Patrick~M. Rondon}, \bibinfo{person}{Ming
  Kawaguchi}, {and} \bibinfo{person}{Ranjit Jhala}.}
  \bibinfo{year}{2008}\natexlab{}.
\newblock \showarticletitle{Liquid Types}.
\newblock \bibinfo{journal}{\emph{SIGPLAN Not.}} \bibinfo{volume}{43},
  \bibinfo{number}{6} (\bibinfo{date}{jun} \bibinfo{year}{2008}),
  \bibinfo{pages}{159–169}.
\newblock
\showISSN{0362-1340}
\urldef\tempurl%
\url{https://doi.org/10.1145/1379022.1375602}
\showDOI{\tempurl}


\bibitem[\protect\citeauthoryear{See, Liu, and Manning}{See
  et~al\mbox{.}}{2017}]%
        {see-etal-2017-get}
\bibfield{author}{\bibinfo{person}{Abigail See}, \bibinfo{person}{Peter~J.
  Liu}, {and} \bibinfo{person}{Christopher~D. Manning}.}
  \bibinfo{year}{2017}\natexlab{}.
\newblock \showarticletitle{Get To The Point: Summarization with
  Pointer-Generator Networks}. In \bibinfo{booktitle}{\emph{Proceedings of the
  55th Annual Meeting of the Association for Computational Linguistics (Volume
  1: Long Papers)}}. \bibinfo{publisher}{Association for Computational
  Linguistics}, \bibinfo{address}{Vancouver, Canada},
  \bibinfo{pages}{1073--1083}.
\newblock
\urldef\tempurl%
\url{https://doi.org/10.18653/v1/P17-1099}
\showDOI{\tempurl}


\bibitem[\protect\citeauthoryear{Srinivasan, Nyapathy, Lee, Drucker, and
  Stasko}{Srinivasan et~al\mbox{.}}{2021}]%
        {nlvcorpus}
\bibfield{author}{\bibinfo{person}{Arjun Srinivasan}, \bibinfo{person}{Nikhila
  Nyapathy}, \bibinfo{person}{Bongshin Lee}, \bibinfo{person}{Steven~M.
  Drucker}, {and} \bibinfo{person}{John Stasko}.}
  \bibinfo{year}{2021}\natexlab{}.
\newblock \showarticletitle{Collecting and Characterizing Natural Language
  Utterances for Specifying Data Visualizations}. In
  \bibinfo{booktitle}{\emph{Proceedings of the 2021 CHI Conference on Human
  Factors in Computing Systems}} (Yokohama, Japan) \emph{(\bibinfo{series}{CHI
  '21})}. \bibinfo{publisher}{Association for Computing Machinery},
  \bibinfo{address}{New York, NY, USA}, Article \bibinfo{articleno}{464},
  \bibinfo{numpages}{10}~pages.
\newblock
\showISBNx{9781450380966}
\urldef\tempurl%
\url{https://doi.org/10.1145/3411764.3445400}
\showDOI{\tempurl}


\bibitem[\protect\citeauthoryear{Srivastava, Hinton, Krizhevsky, Sutskever, and
  Salakhutdinov}{Srivastava et~al\mbox{.}}{2014}]%
        {dropout}
\bibfield{author}{\bibinfo{person}{Nitish Srivastava},
  \bibinfo{person}{Geoffrey Hinton}, \bibinfo{person}{Alex Krizhevsky},
  \bibinfo{person}{Ilya Sutskever}, {and} \bibinfo{person}{Ruslan
  Salakhutdinov}.} \bibinfo{year}{2014}\natexlab{}.
\newblock \showarticletitle{Dropout: A Simple Way to Prevent Neural Networks
  from Overfitting}.
\newblock \bibinfo{journal}{\emph{Journal of Machine Learning Research}}
  \bibinfo{volume}{15}, \bibinfo{number}{56} (\bibinfo{year}{2014}),
  \bibinfo{pages}{1929--1958}.
\newblock
\urldef\tempurl%
\url{http://jmlr.org/papers/v15/srivastava14a.html}
\showURL{%
\tempurl}


\bibitem[\protect\citeauthoryear{Sun, Leigh, Johnson, and Lee}{Sun
  et~al\mbox{.}}{2010}]%
        {articulate}
\bibfield{author}{\bibinfo{person}{Yiwen Sun}, \bibinfo{person}{Jason Leigh},
  \bibinfo{person}{Andrew~E. Johnson}, {and} \bibinfo{person}{Sangyoon Lee}.}
  \bibinfo{year}{2010}\natexlab{}.
\newblock \showarticletitle{\emph{Articulate}: {A} Semi-automated Model for
  Translating Natural Language Queries into Meaningful Visualizations}. In
  \bibinfo{booktitle}{\emph{Smart Graphics, 10th International Symposium on
  Smart Graphics, Banff, Canada, June 24-26, 2010, Proceedings}}
  \emph{(\bibinfo{series}{Lecture Notes in Computer Science},
  Vol.~\bibinfo{volume}{6133})}, \bibfield{editor}{\bibinfo{person}{Robyn
  Taylor}, \bibinfo{person}{Pierre Boulanger}, \bibinfo{person}{Antonio
  Kr{\"{u}}ger}, {and} \bibinfo{person}{Patrick Olivier}} (Eds.).
  \bibinfo{publisher}{Springer}, \bibinfo{pages}{184--195}.
\newblock
\urldef\tempurl%
\url{https://doi.org/10.1007/978-3-642-13544-6\_18}
\showDOI{\tempurl}


\bibitem[\protect\citeauthoryear{Tur, Hakkani-Tür, and Heck}{Tur
  et~al\mbox{.}}{2010}]%
        {tur-et-al-intents}
\bibfield{author}{\bibinfo{person}{Gokhan Tur}, \bibinfo{person}{Dilek
  Hakkani-Tür}, {and} \bibinfo{person}{Larry Heck}.}
  \bibinfo{year}{2010}\natexlab{}.
\newblock \showarticletitle{{What is left to be understood in ATIS?}}. In
  \bibinfo{booktitle}{\emph{2010 IEEE Spoken Language Technology Workshop}}.
  \bibinfo{pages}{19--24}.
\newblock
\urldef\tempurl%
\url{https://doi.org/10.1109/SLT.2010.5700816}
\showDOI{\tempurl}


\bibitem[\protect\citeauthoryear{Vaswani, Shazeer, Parmar, Uszkoreit, Jones,
  Gomez, Kaiser, and Polosukhin}{Vaswani et~al\mbox{.}}{2017}]%
        {vaswani-et-al-2017}
\bibfield{author}{\bibinfo{person}{Ashish Vaswani}, \bibinfo{person}{Noam
  Shazeer}, \bibinfo{person}{Niki Parmar}, \bibinfo{person}{Jakob Uszkoreit},
  \bibinfo{person}{Llion Jones}, \bibinfo{person}{Aidan~N Gomez},
  \bibinfo{person}{\L~ukasz Kaiser}, {and} \bibinfo{person}{Illia Polosukhin}.}
  \bibinfo{year}{2017}\natexlab{}.
\newblock \showarticletitle{Attention is All you Need}. In
  \bibinfo{booktitle}{\emph{Advances in Neural Information Processing
  Systems}}, \bibfield{editor}{\bibinfo{person}{I.~Guyon},
  \bibinfo{person}{U.~V. Luxburg}, \bibinfo{person}{S.~Bengio},
  \bibinfo{person}{H.~Wallach}, \bibinfo{person}{R.~Fergus},
  \bibinfo{person}{S.~Vishwanathan}, {and} \bibinfo{person}{R.~Garnett}}
  (Eds.), Vol.~\bibinfo{volume}{30}. \bibinfo{publisher}{Curran Associates,
  Inc.}
\newblock
\urldef\tempurl%
\url{https://proceedings.neurips.cc/paper/2017/file/3f5ee243547dee91fbd053c1c4a845aa-Paper.pdf}
\showURL{%
\tempurl}


\bibitem[\protect\citeauthoryear{Wang, Shin, Liu, Polozov, and Richardson}{Wang
  et~al\mbox{.}}{2020}]%
        {wang-etal-2020-rat}
\bibfield{author}{\bibinfo{person}{Bailin Wang}, \bibinfo{person}{Richard
  Shin}, \bibinfo{person}{Xiaodong Liu}, \bibinfo{person}{Oleksandr Polozov},
  {and} \bibinfo{person}{Matthew Richardson}.} \bibinfo{year}{2020}\natexlab{}.
\newblock \showarticletitle{{RAT-SQL}: Relation-Aware Schema Encoding and
  Linking for Text-to-{SQL} Parsers}. In \bibinfo{booktitle}{\emph{Proceedings
  of the 58th Annual Meeting of the Association for Computational
  Linguistics}}. \bibinfo{publisher}{Association for Computational
  Linguistics}, \bibinfo{address}{Online}, \bibinfo{pages}{7567--7578}.
\newblock
\urldef\tempurl%
\url{https://doi.org/10.18653/v1/2020.acl-main.677}
\showDOI{\tempurl}


\bibitem[\protect\citeauthoryear{Wang, Feng, Bodik, Cheung, and Dillig}{Wang
  et~al\mbox{.}}{2019}]%
        {viser}
\bibfield{author}{\bibinfo{person}{Chenglong Wang}, \bibinfo{person}{Yu Feng},
  \bibinfo{person}{Rastislav Bodik}, \bibinfo{person}{Alvin Cheung}, {and}
  \bibinfo{person}{Isil Dillig}.} \bibinfo{year}{2019}\natexlab{}.
\newblock \showarticletitle{Visualization by Example}.
\newblock \bibinfo{journal}{\emph{Proc. ACM Program. Lang.}}
  \bibinfo{volume}{4}, \bibinfo{number}{POPL}, Article \bibinfo{articleno}{49}
  (\bibinfo{date}{dec} \bibinfo{year}{2019}), \bibinfo{numpages}{28}~pages.
\newblock
\urldef\tempurl%
\url{https://doi.org/10.1145/3371117}
\showDOI{\tempurl}


\bibitem[\protect\citeauthoryear{Wolf, Debut, Sanh, Chaumond, Delangue, Moi,
  Cistac, Rault, Louf, Funtowicz, Davison, Shleifer, von Platen, Ma, Jernite,
  Plu, Xu, Le~Scao, Gugger, Drame, Lhoest, and Rush}{Wolf
  et~al\mbox{.}}{2020}]%
        {wolf-etal-2020-transformers}
\bibfield{author}{\bibinfo{person}{Thomas Wolf}, \bibinfo{person}{Lysandre
  Debut}, \bibinfo{person}{Victor Sanh}, \bibinfo{person}{Julien Chaumond},
  \bibinfo{person}{Clement Delangue}, \bibinfo{person}{Anthony Moi},
  \bibinfo{person}{Pierric Cistac}, \bibinfo{person}{Tim Rault},
  \bibinfo{person}{Remi Louf}, \bibinfo{person}{Morgan Funtowicz},
  \bibinfo{person}{Joe Davison}, \bibinfo{person}{Sam Shleifer},
  \bibinfo{person}{Patrick von Platen}, \bibinfo{person}{Clara Ma},
  \bibinfo{person}{Yacine Jernite}, \bibinfo{person}{Julien Plu},
  \bibinfo{person}{Canwen Xu}, \bibinfo{person}{Teven Le~Scao},
  \bibinfo{person}{Sylvain Gugger}, \bibinfo{person}{Mariama Drame},
  \bibinfo{person}{Quentin Lhoest}, {and} \bibinfo{person}{Alexander Rush}.}
  \bibinfo{year}{2020}\natexlab{}.
\newblock \showarticletitle{{Transformers: State-of-the-Art Natural Language
  Processing}}. In \bibinfo{booktitle}{\emph{Proceedings of the 2020 Conference
  on Empirical Methods in Natural Language Processing: System Demonstrations}}.
  \bibinfo{publisher}{Association for Computational Linguistics},
  \bibinfo{address}{Online}, \bibinfo{pages}{38--45}.
\newblock
\urldef\tempurl%
\url{https://doi.org/10.18653/v1/2020.emnlp-demos.6}
\showDOI{\tempurl}


\bibitem[\protect\citeauthoryear{Wongsuphasawat, Moritz, Anand, Mackinlay,
  Howe, and Heer}{Wongsuphasawat et~al\mbox{.}}{2015}]%
        {voyager}
\bibfield{author}{\bibinfo{person}{Kanit Wongsuphasawat},
  \bibinfo{person}{Dominik Moritz}, \bibinfo{person}{Anushka Anand},
  \bibinfo{person}{Jock Mackinlay}, \bibinfo{person}{Bill Howe}, {and}
  \bibinfo{person}{Jeffrey Heer}.} \bibinfo{year}{2015}\natexlab{}.
\newblock \showarticletitle{Voyager: Exploratory analysis via faceted browsing
  of visualization recommendations}.
\newblock \bibinfo{journal}{\emph{IEEE transactions on visualization and
  computer graphics}} \bibinfo{volume}{22}, \bibinfo{number}{1}
  (\bibinfo{year}{2015}), \bibinfo{pages}{649--658}.
\newblock


\bibitem[\protect\citeauthoryear{Yaghmazadeh, Wang, Dillig, and
  Dillig}{Yaghmazadeh et~al\mbox{.}}{2017}]%
        {sqlizer}
\bibfield{author}{\bibinfo{person}{Navid Yaghmazadeh}, \bibinfo{person}{Yuepeng
  Wang}, \bibinfo{person}{Isil Dillig}, {and} \bibinfo{person}{Thomas Dillig}.}
  \bibinfo{year}{2017}\natexlab{}.
\newblock \showarticletitle{SQLizer: Query Synthesis from Natural Language}.
\newblock \bibinfo{journal}{\emph{Proc. ACM Program. Lang.}}
  \bibinfo{volume}{1}, \bibinfo{number}{OOPSLA}, Article
  \bibinfo{articleno}{63} (\bibinfo{date}{oct} \bibinfo{year}{2017}),
  \bibinfo{numpages}{26}~pages.
\newblock
\urldef\tempurl%
\url{https://doi.org/10.1145/3133887}
\showDOI{\tempurl}


\bibitem[\protect\citeauthoryear{Ye, Chen, Dillig, and Durrett}{Ye
  et~al\mbox{.}}{2021}]%
        {opsynth}
\bibfield{author}{\bibinfo{person}{Xi Ye}, \bibinfo{person}{Qiaochu Chen},
  \bibinfo{person}{Isil Dillig}, {and} \bibinfo{person}{Greg Durrett}.}
  \bibinfo{year}{2021}\natexlab{}.
\newblock \showarticletitle{Optimal Neural Program Synthesis from Multimodal
  Specifications}. In \bibinfo{booktitle}{\emph{Findings of the Association for
  Computational Linguistics: EMNLP 2021}}. \bibinfo{publisher}{Association for
  Computational Linguistics}, \bibinfo{address}{Punta Cana, Dominican
  Republic}, \bibinfo{pages}{1691--1704}.
\newblock
\urldef\tempurl%
\url{https://doi.org/10.18653/v1/2021.findings-emnlp.146}
\showDOI{\tempurl}


\bibitem[\protect\citeauthoryear{Yu and Silva}{Yu and Silva}{2020a}]%
        {flowsense}
\bibfield{author}{\bibinfo{person}{Bowen Yu} {and} \bibinfo{person}{Claudio~T.
  Silva}.} \bibinfo{year}{2020}\natexlab{a}.
\newblock \showarticletitle{FlowSense: A Natural Language Interface for Visual
  Data Exploration within a Dataflow System}.
\newblock \bibinfo{journal}{\emph{IEEE Transactions on Visualization and
  Computer Graphics}} \bibinfo{volume}{26}, \bibinfo{number}{1}
  (\bibinfo{date}{Jan} \bibinfo{year}{2020}), \bibinfo{pages}{1–11}.
\newblock
\showISSN{2160-9306}
\urldef\tempurl%
\url{https://doi.org/10.1109/tvcg.2019.2934668}
\showDOI{\tempurl}


\bibitem[\protect\citeauthoryear{Yu and Silva}{Yu and Silva}{2020b}]%
        {DBLP:journals/tvcg/YuS20}
\bibfield{author}{\bibinfo{person}{Bowen Yu} {and}
  \bibinfo{person}{Cl{\'{a}}udio~T. Silva}.} \bibinfo{year}{2020}\natexlab{b}.
\newblock \showarticletitle{FlowSense: {A} Natural Language Interface for
  Visual Data Exploration within a Dataflow System}.
\newblock \bibinfo{journal}{\emph{{IEEE} Trans. Vis. Comput. Graph.}}
  \bibinfo{volume}{26}, \bibinfo{number}{1} (\bibinfo{year}{2020}),
  \bibinfo{pages}{1--11}.
\newblock
\urldef\tempurl%
\url{https://doi.org/10.1109/TVCG.2019.2934668}
\showDOI{\tempurl}


\bibitem[\protect\citeauthoryear{Zelle and Mooney}{Zelle and Mooney}{1996}]%
        {zelle:aaai96}
\bibfield{author}{\bibinfo{person}{John~M. Zelle} {and}
  \bibinfo{person}{Raymond~J. Mooney}.} \bibinfo{year}{1996}\natexlab{}.
\newblock \showarticletitle{Learning to Parse Database Queries using Inductive
  Logic Programming}. In \bibinfo{booktitle}{\emph{AAAI/IAAI}}.
  \bibinfo{publisher}{AAAI Press/MIT Press}, \bibinfo{address}{Portland, OR},
  \bibinfo{pages}{1050--1055}.
\newblock
\urldef\tempurl%
\url{http://www.cs.utexas.edu/users/ai-lab?zelle:aaai96}
\showURL{%
\tempurl}


\bibitem[\protect\citeauthoryear{Zettlemoyer and Collins}{Zettlemoyer and
  Collins}{2005}]%
        {zettlemoyer-collins-05}
\bibfield{author}{\bibinfo{person}{Luke~S. Zettlemoyer} {and}
  \bibinfo{person}{Michael Collins}.} \bibinfo{year}{2005}\natexlab{}.
\newblock \showarticletitle{Learning to Map Sentences to Logical Form:
  Structured Classification with Probabilistic Categorial Grammars}. In
  \bibinfo{booktitle}{\emph{Proceedings of the Twenty-First Conference on
  Uncertainty in Artificial Intelligence}} (Edinburgh, Scotland)
  \emph{(\bibinfo{series}{UAI'05})}. \bibinfo{publisher}{AUAI Press},
  \bibinfo{address}{Arlington, Virginia, USA}, \bibinfo{pages}{658–666}.
\newblock
\showISBNx{0974903914}


\end{thebibliography}

\pagebreak
\appendix
\section{Proof}

%

\setcounter{theorem}{0}

\begin{theorem} {\bf (Soundness of {\sc TypeIncompatible})}
Let $P$ be a partial program with input type $\inputtype$ and top level goal-type $\outputtype$. If $\textsc{TypeIncompatible}(P)$ returns true, then for any completion $P'$ of $P$ with type $\ftype{\inputtype'}{\outputtype'}$, $ \vdash \ftype{\inputtype'}{\outputtype'} \not\sim  \ftype{\inputtype}{\outputtype}$. 
\label{theorem:typeincompatible}
\end{theorem}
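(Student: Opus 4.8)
The plan is to prove the contrapositive-flavored statement directly: assuming $\textsc{TypeIncompatible}(P)$ returns true, I exhibit the structural reason and propagate it to every completion. By inspecting the procedure in Figure~\ref{fig:typeinfeasible}, if it returns true then there is some node $\node \in {\sf Nodes}(P)$ such that ${\sf IsComplete}(P(\node))$ holds and $\vdash \rtsym \incomp {\sf GoalType}(\node)$, where $\rtsym = {\sf TypeOf}(P(\node))$. The key observation is that, since $P(\node)$ is already a complete subprogram, any completion $P'$ of $P$ contains $P(\node)$ verbatim as the subtree rooted at $\node$; hence ${\sf TypeOf}(P'(\node)) = \rtsym$ as well. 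So the incompatibility witnessed at $\node$ is preserved under completion.

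The remaining work is to argue that a local type incompatibility at an interior node forces global incompatibility of the whole program's type with its top-level goal type $x : \ftype{\inputtype}{\outputtype}$. First I would establish the invariant that goal-type propagation performed by $\textsf{Expand}$ is \emph{sound for compatibility}: if the actual type of some subterm rooted at $\node$ is \emph{not} compatible with ${\sf GoalType}(\node)$, then the actual type of the enclosing term cannot be compatible with \emph{its} goal type. This follows by induction on the path from $\node$ up to the root, using the congruence rules for compatibility in Figure~\ref{fig:incomp} (\textsc{Table}, \textsc{Func}, \textsc{Refinement-Comp}) together with the fact that the propagated goal types are derived from the premises of the typing rules in Section~\ref{sec:typingrules} — i.e., they are \emph{necessary} conditions for the parent's goal type to be met. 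Concretely, each typing rule has the shape ``premises about argument types $\Rightarrow$ conclusion about the result type,'' and $\textsf{Expand}$ sets each child's goal type to exactly the corresponding premise; so if a child's actual type contradicts that premise, no instantiation of the rule can yield a result type compatible with the parent's goal. Chaining this up the AST, the actual type $\ftype{\inputtype'}{\outputtype'}$ of any completion $P'$ satisfies $\vdash \ftype{\inputtype'}{\outputtype'} \not\sim \ftype{\inputtype}{\outputtype}$.

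The main obstacle is the inductive step across the typing rules: one must check, rule by rule (or by a suitably general lemma about the form of the rules), that the goal types $\textsf{Expand}$ attaches to children genuinely are necessary for compatibility — in particular for the rules involving the $\forget$ operator and the cardinality predicates (e.g.\ \textsc{Summ-Mean} in Figure~\ref{fig:rulemean}), where the output qualifier $\phi'$ is obtained by weakening. The subtlety is that compatibility is not transitive, so I cannot freely compose ``compatible with'' steps; instead the argument must stay within the single upward chain from $\node$ to the root, invoking only the congruence rules and the necessity of each propagated premise. I would factor this into an auxiliary lemma: \emph{if $\vdash \rtsym \incomp {\sf GoalType}(\node)$ for a node $\node$ in an AST $P'$, then ${\sf TypeOf}(P') \not\sim {\sf GoalType}({\sf root}(P'))$}, proved by induction on the depth of $\node$, with the base case ($\node$ is the root) immediate and the inductive case a case split on the production at $\node$'s parent using the compatibility congruence rules. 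Applying this lemma to the node witnessed by $\textsc{TypeIncompatible}$ in an arbitrary completion $P'$ yields the theorem.
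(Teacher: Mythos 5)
Your proposal is correct and follows essentially the same route as the paper's proof: exhibit the witnessing node from lines 3--5 of \textsc{TypeIncompatible}, observe that it persists verbatim in every completion, and conclude from the fact that propagated goal types are necessary conditions for compatibility. The paper simply asserts the final propagation step ("the subprogram rooted at $n$ does not satisfy its goal type, hence $P'$ does not satisfy its top-level goal type"), whereas you correctly identify that this relies on an upward induction over the goal-type propagation invariant of \textsf{Expand} and sketch the needed auxiliary lemma — a reasonable elaboration of what the paper leaves implicit.
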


\begin{proof}
 Suppose \textsc{TypeIncompatible}($\prog$) returns true. Then by lines 3-5 of Figure ~\ref{fig:typeinfeasible}, there exists a node $n \in$ \textsf{Nodes}($\prog$) such that the type of the complete program $P(n)$ is incompatible with $\textsf{Goal}(n)$. Since $\prog'$ is a completion of $\prog$, $\textsf{Nodes}(\prog) \subseteq \textsf{Nodes}(\prog')$. Hence $n \in \textsf{Nodes}(\prog')$. Given the subprogram rooted at $n$ does not satisfy its goal type, we know that $\prog'$ does not satisfy its top level goal type $\outputtype$, i.e. $\not\vdash \outputtype' \comp \outputtype$. Therefore, $\vdash \ftype{\inputtype'}{\outputtype'} \not\sim  \ftype{\inputtype}{\outputtype}$.

\end{proof}

\begin{theorem} {\bf (Soundness of {\sc ViolatesLemma})}  Let $\prog$ be a partial program whose top-level goal type is $\outputtype$, $D$ an input table, and $\Phi$ a set of learned lemmas. If $\textsc{ViolatesLemma}(\prog, \Phi)$ returns true, then for any completion $\prog'$ of $\prog$, $\prog'(D)$ is not an inhabitant of $\outputtype$.
\label{theorem:violateslemma}
\end{theorem}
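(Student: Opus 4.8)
The plan is to combine two ingredients: (i) the defining property of a synthesis lemma, which will let me conclude that a goal type that is simultaneously a subtype of some lemma guard and incompatible with the matching requirement is \emph{unrealizable} on $D$; and (ii) soundness of goal-type propagation --- the same necessary-condition property that underlies the proof of Theorem~\ref{theorem:typeincompatible} --- which will let me conclude that any completion realizing the top-level goal on $D$ must also realize the goal type annotating every hole.

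First I would unpack the hypothesis. If $\textsc{ViolatesLemma}(\prog, \Phi)$ returns $\mathsf{true}$, then by lines 3--9 of Figure~\ref{fig:violatelemma} there are a hole $h \in \textsf{Holes}(\prog)$ and a lemma $(\lemmaG, \lemmaR) \in \Phi$ with $\vdash {\sf GoalType}(h) \subtype \lemmaG$ and $\vdash {\sf GoalType}(h) \incomp \lemmaR$; write $\rho = {\sf GoalType}(h)$. Since every element of $\Phi$ is a synthesis lemma for $D$, I would then establish the key \emph{unrealizability} fact: no DSL program $Q$ satisfies $Q(D) \vDash \rho$. Indeed, were there such a $Q$, instantiating the synthesis-lemma property of $(\lemmaG, \lemmaR)$ with the program $Q$ and the type $\rho$ --- legitimate since $\rho \subtype \lemmaG$ --- would force $\rho \comp \lemmaR$, contradicting $\rho \incomp \lemmaR$.

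Next, suppose for contradiction that some completion $\prog'$ of $\prog$ satisfies $\prog'(D) \vDash \outputtype$. Since $\prog'$ arises from $\prog$ only by filling holes, the node $h$ and its annotation $\rho$ persist in $\prog'$, where the subtree rooted at $h$ is now a complete program $Q$. By soundness of goal-type propagation --- read semantically as the contrapositive of the property used in the proof of Theorem~\ref{theorem:typeincompatible}: if a completion satisfies its top-level goal on $D$ then the complete subtree at each node produces a value inhabiting that node's goal type --- we obtain $Q(D) \vDash \rho$, contradicting the unrealizability fact above. Hence no such $\prog'$ exists, which is the claim.

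The main obstacle is making the second ingredient precise. One must prove that goal types are genuinely sound \emph{semantic} necessary conditions, by induction on how \textsf{Expand} builds $\prog'$, using at each step that the goal type propagated to a child follows, via the relevant typing rule of Section~\ref{sec:typingrules}, from the parent's goal type holding. It is essential that this lemma is phrased at the level of inhabitation rather than the syntactic type-compatibility relation: because propagation is built around compatibility rather than subtyping, the inferred type of a subterm need only be \emph{compatible} with its goal type, which by itself would not supply the subtyping hypothesis that the synthesis-lemma definition requires --- only the semantic statement $Q(D) \vDash \rho$ does. A secondary obligation is the standing invariant that every $(\lemmaG, \lemmaR) \in \Phi$ is in fact a valid synthesis lemma for $D$, which is discharged by the soundness of \textsc{InferLemmas} and \textsc{GenReq}.
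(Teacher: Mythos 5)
Your proposal is correct and follows essentially the same route as the paper's proof: extract the violating hole $h$ and lemma $(\lemmaG,\lemmaR)$, use the definition of a synthesis lemma to conclude that ${\sf GoalType}(h)$ is unrealizable on $D$, and then lift this to the top-level goal via the fact that a completion satisfying $\outputtype$ must satisfy every node's goal type. The only difference is one of care: the paper asserts the goal-type-propagation step in a single sentence ("since $\prog'(n)$ is a subprogram of $\prog'$..."), whereas you correctly identify it as a semantic necessary-condition lemma that deserves its own induction over \textsf{Expand}, along with the standing invariant that every lemma in $\Phi$ is valid for $D$.
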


\begin{proof}
Suppose {\sc ViolatesLemma}($P$,$\Phi$) returns true. Then by lines 2-5 of Figure ~\ref{fig:violatelemma}, there exists some hole $h \in {\sf Holes }(P)$ and some $(G, R) \in \Phi$ such that $\vdash {\sf GoalType}(h) <: G \land \vdash {\sf GoalType}(h) \not \comp R$. 
Given a completion $\prog'$ of $\prog$, let node $n \in {\sf Nodes}(P')$ be the node instantiated from $h$ in $\prog$ with a terminal symbol. We note the goal type of $n$ to be $\textsf{GoalType}(n)$. Since node $n$ is instantiated from $h$, $\textsf{GoalType}(n) = \textsf{GoalType}(h)$.

Given $\vdash {\sf GoalType}(h) <: G$, ${\sf GoalType}(h) \not\sim R$, following the definition of the synthesis lemma, we know that $\not\vdash \prog'(n)(D) : {\sf GoalType}(h)$. 
Then $\prog'(n)$ is not an inhabitant of its goal type. Since $\prog'(n)$ is a subprogram of $\prog'$, this means that $\prog'$ is not an inhabitant of its top level goal type $\outputtype$.
\end{proof}

\begin{theorem}{\bf{(Soundness of {\sc SynthesizeVis})}}
Suppose $\textsc{SynthesizeVis}((\rtsym_p, \rtsym_t), D)$ returns a set of programs $\mathcal{S}$. Then for each visualization program $\prog_v = \prog_p \circ \prog_t \in \mathcal{S}$, $\prog_t(D) \vDash \rtsym_t$ and $\prog_v(D) \vDash \rtsym_p$.
\end{theorem}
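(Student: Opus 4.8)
The plan is to prove this by a direct reading of the \textsc{SynthesizeVis} algorithm in Figure~\ref{fig:synthesizevis}: soundness holds because a program is committed to the result set only after an explicit dynamic check that establishes exactly the two required properties, so no appeal to the type-level machinery is needed for this direction.

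First I would observe that $\res$ (which coincides with $\mathcal{S}$ on return) starts empty and is modified only on the single line $\res \assign \res \cup \{\prog_p \circ \prog_t\}$ inside the innermost loop, and that this line is guarded by the conditional $\prog_t(D) \vDash \rtsym_t \,\land\, \prog_p(\prog_t(D)) \vDash \rtsym_p$. Hence for every $\prog_v \in \mathcal{S}$ there exist $\prog_p \in \progs_p$ and $\prog_t \in \progs_t$ with $\prog_v = \prog_p \circ \prog_t$ such that both conjuncts of that guard held at the moment $\prog_v$ was inserted. The first conjunct is literally the first claim, $\prog_t(D) \vDash \rtsym_t$. For the second, I would unfold the semantics of composition from the visualization DSL of Figure~\ref{fig:dsl}: $\prog_v = \prog_p \circ \prog_t$ denotes $\lambda T_{in}.\, \prog_p(\prog_t(T_{in}))$, so $\prog_v(D) = \prog_p(\prog_t(D))$, and therefore the second conjunct $\prog_p(\prog_t(D)) \vDash \rtsym_p$ is precisely $\prog_v(D) \vDash \rtsym_p$. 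This completes the argument.

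I expect no genuine obstacle here; the proof is essentially bookkeeping. The point worth emphasizing is that soundness of \textsc{SynthesizeVis} does \emph{not} depend on the soundness of \textsc{SynthesizeGoal}, \textsc{TypeIncompatible}, or \textsc{ViolatesLemma}: those results are needed only for completeness, i.e.\ to guarantee that the type-level pruning---which uses merely \emph{necessary} type-compatibility conditions rather than subtyping---never discards a correct program. Soundness is instead secured operationally, by re-running each candidate on the concrete input table $D$ before accepting it, which is exactly what licenses the aggressive (and possibly incomplete) pruning during enumeration. The only mild care required is to confirm that the $\prog_p$ and $\prog_t$ bound by the two enclosing loops are the same components assembled into the $\prog_v$ added to $\mathcal{S}$, which is immediate from the form of the update on that line.
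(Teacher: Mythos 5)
Your proposal is correct and matches the paper's own proof, which likewise observes that a program is added to the result set only after the guard $\prog_t(D) \vDash \rtsym_t \land \prog_p(\prog_t(D)) \vDash \rtsym_p$ succeeds on line 8 of Figure~\ref{fig:synthesizevis}. Your additional remarks about unfolding the composition and the independence from the pruning lemmas are sound elaborations of the same one-line argument.
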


\begin{proof}
It follows from line 8 of Figure ~\ref{fig:synthesizevis} that a program $\prog_p \circ \prog_t$ is only appended to $R$ if $\prog_t(D) \vDash \rtsym_t$ and $\prog_p(\prog_t(D)) \vDash \rtsym_p$. 
\end{proof}

\begin{lemma}

{
Let $\mathcal{G}, \rtsym_{in}, \rtsym_{out}, \rhd, \lhd$ be inputs to $\textsc{SynthesizeGoal}$, and let $D$ be an input table with $D \vDash \inputtype$.
Let $\prog$ be a program such that there exists a complete program $\prog'$ with most precise type $\ftype{\rtsym_{in}'}{\rtsym_{out}'}$ that can be derived from $\prog$ with $\prog'(D) \vDash \outputtype$.Then \textsc{SynthesizeGoal} will add $\prog$ to the worklist $\worklist$.

}\label{lemma:synthgoal}

\end{lemma}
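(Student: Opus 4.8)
The plan is to prove the lemma by induction on the number $n$ of productions in a derivation $S_\grammar \Rightarrow^{*} \prog$. Because $\textsf{Expand}$ always expands one distinguished hole of a partial program (say, the leftmost), and because every partial program reachable from $S_\grammar$ admits a leftmost derivation, I may assume without loss of generality that the derivation witnessing $S_\grammar \Rightarrow^{*} \prog$ expands holes in exactly the order $\textsf{Expand}$ would. I write $\inputtype$ for $\rtsym_{in}$ and $\outputtype$ for $\rtsym_{out}$; note that the root node of every partial program built by \textsc{SynthesizeGoal} carries goal type $\outputtype$, so $\outputtype$ is the top-level goal type of $\prog$ and of all its ancestors. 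The operators $\rhd,\lhd$ are irrelevant here since they only gate insertion into the result set, not into $\worklist$.

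For the base case $n=0$ we have $\prog = \prog_0$, the partial program consisting of a single hole labeled $S_\grammar$ and annotated with goal type $\outputtype$; \textsc{SynthesizeGoal} inserts it into $\worklist$ at initialization (Figure~\ref{fig:goal_synthesis}), so there is nothing to prove. For the inductive step, let $\prog_{\mathrm{par}}$ be the partial program obtained from the first $n-1$ productions, so that $\prog_{\mathrm{par}} \Rightarrow \prog$ by expanding a single hole $h$. The complete program $\prog'$ is derivable from $\prog$ and hence also from $\prog_{\mathrm{par}}$; its most precise type $\ftype{\rtsym_{in}'}{\rtsym_{out}'}$ and the property $\prog'(D) \vDash \outputtype$ are unchanged, and $\prog_{\mathrm{par}}$ still has top-level goal type $\outputtype$. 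Thus the induction hypothesis applies to $\prog_{\mathrm{par}}$ and places it on $\worklist$. Under a fair worklist discipline (e.g.\ FIFO), $\prog_{\mathrm{par}}$ is eventually dequeued, since the loop cannot terminate while $\prog_{\mathrm{par}} \in \worklist$. Because $\prog_{\mathrm{par}} \Rightarrow \prog$ expands a hole, $\prog_{\mathrm{par}}$ is not complete, so control reaches the \textbf{for}-each loop over $\textsf{Expand}(\grammar, \prog_{\mathrm{par}})$; by the leftmost-derivation assumption $h$ is the hole $\textsf{Expand}$ acts on, and the production applied at $h$ is not discarded by the (sound) necessary-condition filter inside $\textsf{Expand}$, because it participates in the completion $\prog'$ whose evaluation inhabits $\outputtype$. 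Hence $\prog \in \textsf{Expand}(\grammar, \prog_{\mathrm{par}})$.

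It remains to show $\prog$ survives the two pruning checks when it is the current candidate $P'$. If $\textsc{ViolatesLemma}(\prog, \Phi)$ returned \textsf{true}, Theorem~\ref{theorem:violateslemma} would imply that no completion of $\prog$ evaluates on $D$ to an inhabitant of $\outputtype$, contradicting $\prog'(D) \vDash \outputtype$; so it returns \textsf{false}. If $\textsc{TypeIncompatible}(\prog)$ returned \textsf{true}, Theorem~\ref{theorem:typeincompatible} would imply that every completion of $\prog$ has a type incompatible with $\ftype{\inputtype}{\outputtype}$. But $\prog'$ is such a completion: $D$ inhabits both $\inputtype$ (by hypothesis) and $\rtsym_{in}'$ (since $\prog'(D)$ is a well-defined evaluation and $\rtsym_{in}'$ is $\prog'$'s most precise input type), while $\prog'(D)$ inhabits both $\outputtype$ (by hypothesis) and $\rtsym_{out}'$ (by soundness of the typing rules); standard reasoning about the type lattice then gives $\inputtype \sim \rtsym_{in}'$ and $\outputtype \sim \rtsym_{out}'$, so by rule \textsc{Func} of Figure~\ref{fig:incomp} we obtain $\ftype{\rtsym_{in}'}{\rtsym_{out}'} \sim \ftype{\inputtype}{\outputtype}$, a contradiction. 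Hence $\textsc{TypeIncompatible}(\prog)$ returns \textsf{false}, the \textbf{else} branch executes, and $\prog$ is added to $\worklist$, completing the induction.

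\textbf{The main obstacle} I anticipate is not the type-theoretic content --- the two non-pruning steps reduce directly to the already-proven Theorems~\ref{theorem:typeincompatible} and~\ref{theorem:violateslemma} together with the elementary fact that a common inhabitant of two refinement types witnesses their compatibility --- but rather the algorithmic bookkeeping: fixing $\textsf{Expand}$'s hole-selection strategy and showing every reachable partial program has a compatible derivation (or restating the lemma to quantify over hole orderings), stating a fairness assumption on $\worklist.\mathrm{remove}()$ so that ancestors are guaranteed to be processed, and arguing that $\textsf{Expand}$'s internal production filter is sound so that the production used along the witnessing derivation is never prematurely pruned.
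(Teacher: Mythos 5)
Your proof is correct and follows essentially the same route as the paper's: induction on how far the partial program has been expanded (the paper counts terminals in the AST, you count productions in the derivation, which is equivalent for this purpose), with the base case handled by the worklist initialization and the inductive step discharged via the contrapositives of the soundness theorems for \textsc{TypeIncompatible} and \textsc{ViolatesLemma}. The bookkeeping concerns you flag --- \textsf{Expand}'s hole-selection order and worklist fairness --- are real but are treated just as informally in the paper's own proof, which simply asserts that the ancestor is eventually dequeued and that \textsf{Expand} produces the desired expansion.
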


\begin{proof}

By induction on the number of terminals $m$ in the AST of program $P$.

\textbf{Base Case:} $m = 0$. The only such program $\prog_0$ with $0$ terminals  is a partial program with one hole that is annotated with the goal output type $\rtsym_{out}$. This program is added to $\worklist$ on line 3 of Figure ~\ref{fig:goal_synthesis}.

\textbf{Inductive Hypothesis:} Assume this lemma holds for all programs whose ASTs have less than $m$ terminals, where $m \geq 0$.

\textbf{Inductive Case:} Suppose $\prog_{m+1}$ has $m+1$ terminals. Then there is some program $P_{m'}$ with $m' \leq m$ terminals and some production $\alpha$ such that expanding $P_{m'}$ with $\alpha$ produces $\prog_{m+1}$ .

Since $\prog'$ can be derived from $\prog_{m+1}$ by $\prog_{m+1} \xRightarrow[]{*} \prog'$, $\prog'$ can also be derived from $\prog_{m'}$ by $\prog_{m'} \xRightarrow[]{\alpha} \prog_{m+1} \xRightarrow[]{*} \prog'$. Thus, by inductive hypothesis, $\prog_{m'}$ is added to $\worklist$. Then at some point $\prog_{m'}$ will be dequeued from $\worklist$ on line 5 of Figure ~\ref{fig:goal_synthesis}. The \textsc{Expand} procedure on line 8 will identify $\alpha$ as a possible production, and will expand $\prog_{m'}$ to  $\prog_{m+1}$. 

Note that $\prog'(D) \vDash \outputtype$ and $D \vDash \inputtype$ imply $\vdash \inputtype' \comp \inputtype \wedge \vdash \outputtype' \comp \outputtype$, meaning there exists a completion of $\prog_m$ such that $\vdash \ftype{\inputtype'}{\outputtype'} \comp (\ftype{\inputtype}{\outputtype})$. Then, by contrapositive of Theorem ~\ref{theorem:typeincompatible}, \textsc{TypeIncompatible}($\prog_{m+1}$) will return false.

Similarly, since there exists a completion of $\prog_{m+1}$ that inhabits its goal type, by contrapositive of Theorem ~\ref{theorem:violateslemma}, \textsc{ViolatesLemma}($\prog_{m+1}$, $\Phi$) will return false.
$\prog_{m+1}$ will thus be added to $\worklist$ on line 16.

\end{proof}

\begin{lemma}
{\bf (Completeness of } $\textsc{SynthesizeGoal}$)\label{lemma:synthesizegoal} 
Let $\mathcal{G}, \rtsym_{in}, \rtsym_{out}, \rhd, \lhd$ be inputs to $\textsc{SynthesizeGoal}$, let $D$ be an input table with type $D \vDash \inputtype$, and let $\mathcal{S}$ be the set of programs with respect to the typing environment $\Env$ returned by $\textsc{SynthesizeGoal}$. Then for any complete program $\prog$ with the most precise type $\ftype{\rtsym_{in}'}{\rtsym_{out}'} $ in the grammar $\mathcal{G}$ such that 
$\prog(D) \vDash \outputtype$
, $\prog \in \mathcal{S}$.

\end{lemma}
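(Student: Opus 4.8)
The plan is to derive this lemma as a short consequence of the preceding Lemma~\ref{lemma:synthgoal} plus a ``static/dynamic consistency'' argument about the final guard of \textsc{SynthesizeGoal}. Fix an arbitrary complete program $\prog$ of $\grammar$ whose most precise type is $\ftype{\rtsym_{in}'}{\rtsym_{out}'}$ and for which $\prog(D) \vDash \outputtype$. Since a complete program is derivable from itself, applying Lemma~\ref{lemma:synthgoal} with the witness $\prog' := \prog$ shows that \textsc{SynthesizeGoal} enqueues $\prog$ onto the worklist $\worklist$. Assuming a fair worklist discipline (every enqueued program is eventually dequeued), $\prog$ is removed at line~4 at some iteration; because $\prog$ is complete, the test \textsf{IsComplete}$(\prog)$ at line~7 succeeds, so control reaches the guard at line~8.

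It remains to show that this guard, $\vdash \inputtype \rhd \inputtype^P \wedge \vdash \outputtype^P \lhd \outputtype$ with $\inputtype^P = \rtsym_{in}'$ and $\outputtype^P = \rtsym_{out}'$, holds, so that $\prog$ is added to the accumulator at line~9 --- which is exactly the returned set $\mathcal{S}$. The key is that the hypotheses $D \vDash \inputtype$ and $\prog(D) \vDash \outputtype$ constrain the most precise type of $\prog$. On the output side, soundness of the type system (progress and preservation) applied to $\prog(D)$, together with $\prog(D) \vDash \outputtype$ and the definition of ``most precise type'', gives $\outputtype^P \subtype \outputtype$; since subtyping entails compatibility (a type is a common subtype of itself and any of its supertypes), this discharges $\lhd$ whether it is $\subtype$ (the plotting call of \textsc{SynthesizeVis}) or $\sim$ (the table-transformation call). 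On the input side, $D \vDash \inputtype$ together with the fact that $\prog$ type-checks when applied to $D$ makes $D$'s own most precise type a common subtype of $\inputtype$ and of $\prog$'s domain $\inputtype^P$, hence $\inputtype \sim \inputtype^P$; for the table-transformation call this can be sharpened to $\inputtype \subtype \inputtype^P$, since there the synthesizer fixes $\prog$'s domain to be exactly the type of the input table, which is what the operator $\rhd = \subtype$ requires at that call site. Hence the guard evaluates to true in both instantiations, $\prog$ is added to $\res$, and so $\prog \in \mathcal{S}$.

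The step I expect to be the main obstacle is precisely this middle one: carefully relating (i) the semantic inhabitation facts $D \vDash \inputtype$ and $\prog(D) \vDash \outputtype$, (ii) the static notion of ``most precise type $\ftype{\rtsym_{in}'}{\rtsym_{out}'}$ of $\prog$ in $\grammar$'' used in the statement, and (iii) the types $\inputtype^P$ and $\outputtype^P$ that \textsf{InputType} and \textsf{OutputType} actually compute along the derivation --- in particular being precise about how the input type is threaded through the derivation and about any dependency of $\rtsym_{out}'$ on the input in the dependent-function rule \textsc{Func}, and about invoking $\subtype \Rightarrow \sim$ in the correct direction so that a single argument uniformly handles both call sites $\textsc{SynthesizeGoal}(\vlang, \rtsym_t, \rtsym_p, \sim, \subtype)$ and $\textsc{SynthesizeGoal}(\tlang, \inputtype, \rtsym_s, \subtype, \sim)$ of \textsc{SynthesizeVis}. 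Once this consistency fact is pinned down the lemma is immediate, and combined with soundness of \textsc{SynthesizeGoal} and the explicit semantic re-check that \textsc{SynthesizeVis} performs on each synthesized candidate, it yields the overall Completeness theorem for \textsc{SynthesizeVis}.
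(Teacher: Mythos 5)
Your proposal is correct and follows essentially the same route as the paper's proof: invoke Lemma~\ref{lemma:synthgoal} to get $\prog$ onto the worklist, observe that the loop only terminates once the worklist is exhausted so $\prog$ is eventually dequeued, and then discharge the completeness guard by deriving $\rtsym_{out}' \subtype \rtsym_{out}$ and $\rtsym_{in} \subtype \rtsym_{in}'$ from $\prog(D) \vDash \outputtype$ and $D \vDash \inputtype$, using the fact that subtyping implies compatibility to cover both instantiations of $\rhd$ and $\lhd$. The step you flag as the main obstacle is in fact the one the paper glosses over in a single sentence, so your more careful treatment of it is an elaboration rather than a deviation.
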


\begin{proof}
By Lemma ~\ref{lemma:synthgoal}, $\prog$ is added to $\worklist$. Note that the only termination condition for the while loop on line 4 of Figure ~\ref{fig:goal_synthesis} is that we exhaust $\worklist$. Thus, $\prog$ will be dequeued on line 5 of Figure ~\ref{fig:goal_synthesis} at some point. Since $\prog$ is complete, we check if $\Env \vdash \inputtype \rhd \inputtype' \wedge \Env \vdash \outputtype' \lhd \outputtype$ (line 7). Since $\prog(D) \vDash \outputtype$, we have 
$\subty{\Env}{\rtsym_{out}'}{\rtsym_{out}}$ and $\subty{\Env}{\rtsym_{in}}{\rtsym_{in'}}$. It then follows that $\Env \vdash \outputtype' \lhd \outputtype$ and $\Env \vdash \inputtype \lhd \inputtype'$, so $\prog$ will be added to $\res$ on line 8 of Figure ~\ref{fig:goal_synthesis}.
\end{proof}

\begin{theorem}{\bf{(Completeness of {\sc SynthesizeVis})}}
Given a specification $(\rtsym_p, \rtsym_t)$ and input table $D$, if there is a visualization program $\prog_p \circ \prog_t$ such that $\prog_t(D) \vDash \rtsym_t$ and $\prog_p(\prog_t(D) \vDash \rtsym_p$, then $\prog_p \circ \prog_t \in \textsc{SynthesizeVis}((\rtsym_p, \rtsym_t), D)$
\end{theorem}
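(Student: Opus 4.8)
The plan is to reduce completeness of \textsc{SynthesizeVis} to two applications of Lemma~\ref{lemma:synthesizegoal} (completeness of \textsc{SynthesizeGoal}): one showing that the hypothesized plotting program is returned by the first synthesis call, and one showing that the hypothesized table transformation is returned by the second call, made on that program's loop iteration. Fix the witness $\prog_p \circ \prog_t$ with $\prog_t(D) \vDash \rtsym_t$ and $\prog_p(\prog_t(D)) \vDash \rtsym_p$, and let $\ftype{\rtsym^p_{in}}{\rtsym^p_{out}}$ and $\ftype{\rtsym^t_{in}}{\rtsym^t_{out}}$ be the most precise types assigned to $\prog_p$ and $\prog_t$ by the typing rules of Section~\ref{sec:typingrules} (well defined and unique since those rules are syntax-directed).

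First I would establish $\prog_p \in \progs_p$, where $\progs_p$ is the result of $\textsc{SynthesizeGoal}(\vlang, \rtsym_t, \rtsym_p, \sim, \subtype)$. I instantiate Lemma~\ref{lemma:synthesizegoal} with the intermediate table $T = \prog_t(D)$ in the role of its input table. Its hypotheses hold: $T \vDash \rtsym_t$ is exactly the first half of our assumption (so $T$ inhabits the lemma's $\inputtype$), and $\prog_p(T) \vDash \rtsym_p$ is the second half (so $\prog_p$ maps $T$ into the lemma's $\outputtype$); moreover $\prog_p$ is a complete program derivable in $\vlang$. Hence the lemma yields $\prog_p \in \progs_p$, and since the outer loop of Figure~\ref{fig:synthesizevis} iterates over all of $\progs_p$, it reaches the iteration for this $\prog_p$.

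Second, on that iteration the algorithm sets $\rtsym_s = \rtsym_t \wedge \rtsym^p_{in}$ and calls $\textsc{SynthesizeGoal}(\tlang, {\sf GetType}(D), \rtsym_s, \subtype, \sim)$; I would show its result $\progs_t$ contains $\prog_t$. Instantiating Lemma~\ref{lemma:synthesizegoal} with $D$ itself as input table, the hypothesis $D \vDash {\sf GetType}(D)$ is immediate, so it remains to show $\prog_t(D) \vDash \rtsym_s$. The table $\prog_t(D)$ inhabits $\rtsym_t$ by assumption, and it inhabits $\rtsym^p_{in}$ because the application $\prog_p(\prog_t(D))$ is well defined and produces an inhabitant of $\rtsym_p$, which forces $\prog_t(D)$ to be an admissible argument of $\prog_p$, i.e.\ to satisfy $\prog_p$'s input type $\rtsym^p_{in}$ (the operational/type-soundness content of the DSL, together with the observation that a table meeting the schema and cardinality premises of a plotting rule is precisely an inhabitant of that rule's input refinement type). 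Inhabiting both $\rtsym_t$ and $\rtsym^p_{in}$, the table $\prog_t(D)$ inhabits their intersection $\rtsym_s$. Lemma~\ref{lemma:synthesizegoal} then gives $\prog_t \in \progs_t$.

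Finally, since the inner loop iterates over all of $\progs_t$, it reaches $\prog_t$, and the guard $\prog_t(D) \vDash \rtsym_t \wedge \prog_p(\prog_t(D)) \vDash \rtsym_p$ holds verbatim by the initial hypothesis, so $\prog_p \circ \prog_t$ is added to $\res$ and hence belongs to $\textsc{SynthesizeVis}((\rtsym_p, \rtsym_t), D)$. I expect the main obstacle to be the semantic step $\prog_t(D) \vDash \rtsym^p_{in}$ together with closure of inhabitance under type intersection: both rest on properties of the DSL's operational semantics and of the $\wedge$ construction that are only informally alluded to in the main text (the intersection-type construction itself is deferred to the appendix), so they must be stated and invoked explicitly rather than taken for granted. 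A secondary subtlety is that Lemma~\ref{lemma:synthesizegoal} is phrased in terms of a program's most precise type, so one should check that applying it to $\prog_p$ and $\prog_t$ themselves really places those two programs in the returned sets, rather than merely some programs of compatible types.
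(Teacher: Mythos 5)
Your proposal is correct and follows essentially the same route as the paper's proof: two applications of the completeness lemma for \textsc{SynthesizeGoal} (one placing $\prog_p$ in the set returned by the first call, one placing $\prog_t$ in the set returned by the second), followed by the observation that the final guard on line 9 holds verbatim by hypothesis. You are in fact more careful than the paper, whose proof silently assumes $\prog_t(D) \vDash \rtsym_t \wedge \rtsym^p_{in}$ when invoking the lemma for the second call; your explicit argument that $\prog_t(D)$ inhabits $\prog_p$'s input type and that inhabitance is preserved by the intersection construction supplies exactly the step the paper elides.
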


\begin{proof}
Let $P_p$ and $P_t$ be programs such that $\prog_t(D) \vDash \rtsym_t$ and $\prog_p(\prog_t(D)) \vDash \rtsym_p$. From Lemma ~\ref{lemma:synthesizegoal}, we know that $\prog_p$ is in the set of programs returned by \textsc{SynthesizeGoal} on line 3 of Figure ~\ref{fig:synthesizevis}. Also by Lemma ~\ref{lemma:synthesizegoal}, we know that $\prog_t$ is in the set of programs returned by \textsc{SynthesizeGoal} on line 6 of Figure ~\ref{fig:synthesizevis}. Since $\prog_t(D) \vDash \rtsym_t$ and $\prog_v(D) \vDash \rtsym_p$, it follows from line 9 of Figure~\ref{fig:synthesizevis} that $\prog_p \circ \prog_t \in \textsc{SynthesizeVis}((\rtsym_p, \rtsym_t), D)$. 
\end{proof}

\section{Complete Typing Rules}


\subsection{Intersection of Types}
Figure \ref{fig:sqcap} presents inference rules that describe how we computing the intersection type of two refinement types. 
\begin{figure}
    \centering
    \small
    \begin{mathpar}
    \inferrule*[Left=Subtype]{\vdash \btsym_1 \subtype \btsym_2 }{\vdash \btsym_1 \intersects \btsym_2: \btsym_1} \and
    \inferrule*[Left=Symmetry]{\vdash \btsym_1 \intersects \btsym_2 : \btsym' }{\vdash \btsym_2 \intersects \btsym_1: \btsym'} \\
    \inferrule*[Left=Table]{\btsym_1 = {\sf Table}(\sigma_1) \ \ \ \btsym_2 = {\sf Table}(\sigma_2) \\\\
    \sigma_{\texttt{shared}} = \{c : \btsym_1 \wedge \btsym_2 \mid c : \btsym_1 \in \sigma_1, \  c : \btsym_2 \in \sigma_2 \} \\\\
    \tau = {\sf Table}(\sigma_{{\tt shared}} \cup (\sigma_1 \ \Delta \ \sigma_2))
    }{\vdash \btsym_1 \intersects \btsym_2 : \tau  } \\\\
    \inferrule*[Left=Refinement]{\vdash \btsym_1 \intersects \btsym_2: \btsym'}{\Env \vdash \rtype{\btsym_1}{\phi_1} \intersects \rtype{\btsym_2}{\phi_2}) : \rtype{\btsym'}{\phi_1 \wedge \phi_2}}
    \end{mathpar}
    \caption{Intersection of base and refinement types. $\sigma_1 \ \Delta \ \sigma_2$ is the "symmetric difference" between two schemas i.e., the union of different columns}
    \label{fig:sqcap}
\end{figure}

\subsection{Typing Rules}

Figures \ref{fig:tr1} and \ref{fig:tr2} present our complete set of typing rules.

\begin{figure}
    \centering
    \begin{mathpar}
    \inferrule*[Left=Sub]{\Env \vdash \rtsym_1 \subtype \rtsym_2 \\\\
    \Env \vdash e : \rtsym_1
    }{\Env \vdash e: \rtsym_2}\\
    
    \inferrule*[Left=Bar]{
    \Env(T) = \rtype{\tau_T}{\phi_T}\\\\
    \Env \vdash \tau_T : \mathsf{Table}(\{ \colx: \tau_x, \coly: \textsf{Quantitative}, \colco: \tau_{color}, \colsub: \tau_{subplot} \})\\\\
      \tau_x \neq \sf{Continuous} \ \ \  \tau_{color} \neq \sf{Continuous} \ \ \  \tau_{subplot} \neq \sf{Continuous}
    \\\\
    \textsf{Encode}(\Env) \land \textsf{Encode}(\phi_T) \Rightarrow |(\nu, \{\colx, \colco, \colsub\})| \geq |(\nu, \{\coly\})|)
    }{\tjudg{\Env}{{\sf Bar}(T, \colx, \coly, \colco, \colsub): \rtype{ \textsf{BarPlot}}{\bigwedge_{e \in \{{\sf x}, {\sf y}, {\sf color}, {\sf subplot}\} } \prov(\nu.e, T.c_{e})}}} \\

    \inferrule*[Left=Scatter]{
    \Env(T) = \rtype{\tau_T}{\phi_T}\\\\
    \Env \vdash \tau_T : \mathsf{Table}(\{ \colx: \tau_x, \coly: \tau_y, \colco: \top, \colsub: \tau_{subplot} \}) \\\\
      \tau_x \neq \sf{Nominal} \ \ \ \tau_y \neq \sf{Temporal} \ \ \ \tau_y \neq \sf{Nominal} \ \ \ \tau_{subplot} \neq \sf{Continuous}
    }{\tjudg{\Env}{{\sf Scatter}(T, \colx, \coly, \colco, \colsub): \rtype{ \textsf{ScatterPlot}}{\bigwedge_{e \in \{{\sf x}, {\sf y}, {\sf color}, {\sf subplot}\} } \prov(\nu.e, T.c_{e})}}} \\
    
    \inferrule*[Left=Line]{
    \Env(T) = \rtype{\tau_T}{\phi_T}\\\\
    \Env \vdash \tau_T : \mathsf{Table}(\{ \colx: \tau_x, \coly: \textsf{Quantitative}, \colco: \tau_{color},  \colsub: \tau_{subplot} \})\\\\
    \tau_x \neq \sf{Nominal} \ \ \ 
    \tau_{color} \neq \sf{Continuous} \ \ \ \tau_{subplot} \neq \sf{Continuous}
    \\\\
    \textsf{Encode}(\Env) \land \textsf{Encode}(\phi_T) \Rightarrow |\nu, \{\colx, \colco, \colsub\})| \geq |(\nu, \{\coly\})|
    }{\tjudg{\Env}{{\sf Line}(T, \colx, \coly, \colco, \colsub): \rtype{ \texttt{LinePlot}}{\bigwedge_{e \in \{{\tt x}, {\tt y}, {\tt color}, {\tt subplot}\} } \prov(\nu.e, T.c_{e})}}} \\

    \inferrule*[Left=Area]{
    \Env(T) = \rtype{\tau_T}{\phi_T}\\\\
    \Env \vdash \tau_T <: \mathsf{Table}(\{ \colx: \tau_x, \coly: \textsf{Quantitative}, \colco: \tau_{color},  \colsub: \tau_{subplot} \})\\\\
    \tau_x \neq \sf{Nominal}  \ \ \ 
    \tau_{color} \neq \sf{Continuous} \ \ \ \tau_{subplot} \neq \sf{Continuous}
    \\\\
    \textsf{Encode}(\Env) \land \textsf{Encode}(\phi_T) \Rightarrow |\nu, \{\colx, \colco, \colsub\})| \geq |(\nu, \{\coly\})|
    }{\tjudg{\Env}{{\sf Area}(T, \colx, \coly, \colco, \colsub): \rtype{ \textsf{AreaPlot}}{\bigwedge_{e \in \{{\tt x}, {\tt y}, {\tt color}, {\tt subplot}\} } \prov(\nu.e, T.c_{e})}}} \\

    \inferrule*[Left=Bin]{
    \Env \vdash e: \rtype{\tau_t}{\phi} \ \ \ \ \text{where $\tau_t = \textsf{Table}(\{\ldots, \coltarg: \tau_{\tt tgt}, \ldots\})$} \\\\
    \vdash \tau_{\tt tgt} <: \sf{Quantitative}\ \ \  \tau' = \tau_t[\coltarg \mapsto \sf{Discrete}] \\\\
    \phi_1 = \phi \ \forget \ \textsf{Terms}(\phi, \coltarg) \ \ \ \  \phi_2 = \phi_1 \  \forget \ \prov(\nu.\coltarg, \mathtt{bin}) \\\\
    \phi' = \phi_2 \land |(\nu, \{\coltarg\})| = n \land \prov(\nu.\coltarg, \mathsf{bin})
    }{\tjudg{\Env}{\textsf{bin}(e, n, \coltarg)} : \rtype{\tau'}{\phi'}}\\\\

    \end{mathpar}
    \vspace*{-1.0cm}
    
    \caption{Typing Rules }
    \label{fig:tr1}
\end{figure}
\begin{figure}
    \centering 
    \begin{mathpar}
    
    \inferrule*[Left=Filter]{
    \Env \vdash e: \rtype{\tau_t}{\phi} \ \ \ \ \text{where} \ \tau_t = \textsf{Table}(\{c_0: \tau_0, \ldots, c_n: \tau_n \}) \\\\
    \phi' = \phi \ \forget \ \textsf{Terms}(\phi, \{c_1, \ldots, c_n\})
    }{\tjudg{\Env}{\textsf{filter}(e, val_1 \ op \ val_2)} : \rtype{\tau}{\phi'}}\\\\

    \inferrule*[Left=Select]{
    \Env \vdash e: \rtype{\tau_t}{\phi} \ \ \ \ \text{where} \ \tau_t = \textsf{Table}(\{c_0: \tau_0, \ldots, c_n: \tau_n \})
    \\\\
    \textsf{size}(\overline{\colkey}) = k \ \ \ 
    \tau' = \textsf{Table}(\{c_0': \tau_0', \ldots, c_k': \tau_k' \}) , c_i' \in \overline{\colkey}
    }{\tjudg{\Env}{\textsf{select}(e, \overline{\colkey})} : \rtype{\tau'}{\phi}}\\\\

    \inferrule*[Left=Mutate]{
    \Env \vdash e: \rtype{\tau_t}{\phi} \ \ \ \ \text{where} \ \tau_t = \textsf{Table}(\{c_0: \tau_0, \ldots, c_n: \tau_n \}), \ 
    \forall \ 0 \leq i \leq n, \ c_i \neq \coltarg
    \\\\
    \coltarg \not\in \overline{\colkey} \ \ \ \tau' = {\sf Table}(\{c_0': \tau_0', \ldots, c_k': \tau_k', \coltarg : \top \}) \ \ c_i' \in \overline{\colkey} \\\\
    \phi_1 = \phi \ \forget \ \textsf{Terms}(\phi, \coltarg) \ \ \ \  \phi_2 = \phi_1 \  \forget \ \prov(\nu.\coltarg, \mathsf{mutate}) \\\\
    \phi' = \phi_2 \land |(\nu, \{\coltarg\})| \leq |(\nu, \overline{\colkey})|  \land \prov(\nu.\coltarg, \mathsf{mutate})
    }{\tjudg{\Env}{\textsf{mutate}(e, \coltarg, op, \overline{\colkey})} : \rtype{\tau'}{\phi'}}\\\\

    \inferrule*[Left=Summ-Mean]{
    \Env \vdash e: \rtype{\tau_t}{\phi} \ \ \ \ \text{where $\tau_t = \textsf{Table}(\{\ldots, \coltarg: \tau_{\sf tgt}, \ldots\})$} \\\\
    \coltarg \not\in \overline{\colkey} \ \ \ \tau' = {\sf Table}(\{c_0': \tau_0', \ldots, c_k': \tau_k', \coltarg : \btsym_{\sf tgt} \}) \ \ c_i' \in \overline{\colkey} \\\\
    \vdash \tau_{\sf tgt} : \textsf{Quantitative}\ \ \  \tau' = \tau'[\coltarg \mapsto \sf{Continuous}] \\\\
    \phi_1 = \phi \ \forget \ \textsf{Terms}(\phi, \coltarg) \ \ \ \  \phi_2 = \phi_1 \  \forget \ \prov(\nu.\coltarg, \mathsf{mean}) \\\\
    \phi' = \phi_2 \land |(\nu, \{\coltarg\})| \leq |(\nu, \overline{\colkey})| \land \prov(\nu.\coltarg, \mathsf{mean})
    }{\tjudg{\Env}{\textsf{summarize}(e, \overline{\colkey}, \mathsf{mean}, \coltarg)} : \rtype{\tau'}{\phi'}}\\\\
    
    \inferrule*[Left=Summ-Count]{
    \Env \vdash e: \rtype{\tau_t}{\phi} \ \ \ \ \text{where $\tau_t = \textsf{Table}(\{\ldots, \coltarg: \tau_{\sf tgt}, \ldots\})$} \\\\
    \coltarg \not\in \overline{\colkey} \ \ \ \tau' = {\sf Table}(\{c_0': \tau_0', \ldots, c_k': \tau_k', \coltarg : \btsym_{\tt tgt} \}) \ \ c_i' \in \overline{\colkey} \\\\
      \tau' = \tau'[\coltarg \mapsto \sf{Discrete}] \\\\
    \phi_1 = \phi \ \forget \ \textsf{Terms}(\phi, \coltarg) \ \ \ \  \phi_2 = \phi_1 \  \forget \ \prov(\nu.\coltarg, \mathsf{count}) \\\\
    \phi' = \phi_2 \land |(\nu, \{\coltarg\})| \leq  |(\nu, \overline{\colkey})| \land \prov(\nu.\coltarg, \mathsf{count})
    }{\tjudg{\Env}{\textsf{summarize}(e, \overline{\colkey}, \mathsf{count}, \coltarg)} : \rtype{\tau'}{\phi'}}\\\\
    
    \inferrule*[Left=Summ-Sum]{
    \Env \vdash e: \rtype{\tau_t}{\phi} \ \ \ \ \text{where $\tau_t = \sf{Table}(\{\ldots, \coltarg: \tau_{\sf tgt}, \ldots\})$} \\\\
    \coltarg \not\in \overline{\colkey} \ \ \ \tau' = {\sf Table}(\{c_0': \tau_0', \ldots, c_k': \tau_k', \coltarg : \btsym_{\sf tgt} \}) \ \ c_i' \in \overline{\colkey} \\\\
     \vdash \tau_{\tt tgt} : \sf{Quantitative} \ \ \  \tau' = \tau'[\coltarg \mapsto \sf{Continuous}]  \\\\
    \phi_1 = \phi \ \forget \ \textsf{Terms}(\phi, \coltarg) \ \ \ \  \phi_2 = \phi_1 \  \forget \ \prov(\nu.\coltarg, \mathsf{sum}) \\\\
    \phi' = \phi_2 \land |(\nu, \{\coltarg\})| \leq |(\nu, \overline{\colkey})| \land \prov(\nu.\coltarg, \mathtt{sum})
    }{\tjudg{\Env}{\textsf{summarize}(e, \overline{\colkey}, \mathsf{sum}, \coltarg)} : \rtype{\tau'}{\phi'}}\\\\
    
    \end{mathpar}
    \vspace*{-1.0cm}
    
    \caption{Typing Rules }
    \label{fig:tr2}
\end{figure}

\section{Formula Encoding} \label{sec:encoding}

In this section, we describe our {\sf Encode} procedure which encodes qualifiers in our refinement type system as formulas in the combined theory of equality, uninterpreted functions, and integers.

\paragraph{Formula Language} Figure \ref{fig:formulalang} presents the syntax of our encoded formulas as a context free grammar. Note that many of the terminals in our refinement type system also appear in this grammar but now have different semantics. The symbols $|\cdot|$, ${\sf Filter}$ and ${\sf Proj}$ refer to relational operators in our refinement type system, but correspond to uninterpreted functions in the formula language. We also import column names $c$ into our formula language, but they refer to object constants. Finally, in our refinement type system, ${\sf Proj}$ takes a list of column names as its second input, whereas in this formula language, ${\sf Proj}$ takes two inputs where the second argument is an object constant. 

We formalize our encoding procedure as inference rules incorporating judgments of the form.

\[
\encenv \vdash t \rightsquigarrow e, \encenv'
\]

where $\encenv$ is an environment that maps terms in our refinement type language to terms in the formula language. This judgment means:  given an environment $\encenv$ and a term $t$ in the refinement type language, {\sf Encode} returns the corresponding term $e$ in our formula language along with an updated environment $\encenv'$. This formalization is presented in Figure \ref{fig:encode}. Most of the rules are straightforward, so we highlight the most interesting. 
\begin{itemize}
    \item {\bf Syntactic.} Our encoding scheme associates each {syntactic} constraint $\pi(x.\enc, \provop)$  with a unique propositional variable ({\sc Syn-1}, {\sc Syn-2})
    \item {\bf Projection.} For each set of columns $\overline{c}$, we associate a fresh object constant $s$. In particular, $\{c_1, c_2\}$ and $\{c_2, c_1\}$ are associated with the same fresh constant as they represent the same sets ({\sc Proj-1}, {\sc Proj-2}). 
    \item {\bf Filter.} Every filter operator $op$ is assigned a fresh function constant $f_i$, and each value $val$ is assigned a fresh object constant $s$. Thus,  the filter operation $c \ op \ val$ is treated as the function application $f_i(c, s)$.
\end{itemize}

\paragraph{{\bf Correctness of {\sc SynthesizeVis}}} Since the combined theory of integers and equality with uninterpreted functions over-approximates the semantics of our qualifiers, one may wonder whether {\sc SynthesizeVis} is still complete i.e., doesn't prune correct programs. We now argue that the procedure is still complete. First, we note that our compatibility check will not prune feasible programs since (1) our encoding is the conjunction of formulas that over-approximate their corresponding qualifiers, and (2) our compatibility rule checks that the formula is satisfiable. Second, our subtyping checks are used to prune prune plotting programs whose output type is not a subtype of the goal type specification produced by the parser (line 3 {\sc SynthesizeVis}). In that case, the logical qualifiers in the goal type specification are only {syntactic} constraints, which are \emph{precisely encoded} as boolean constraints. As such, our subtyping check will not prune correct programs there.

\begin{figure}[!htb]
\centering
\small
\[
\begin{array}{r l l l }
    \textbf{Formula Language} \\ 
    {\sf Formula} \  F := & \oplus(F_1, \ldots, F_n) \ | \ E \between E \ | \  p  \\
    {\sf Expression} \  E := & |T| \ | \ {\sf Max}(T) \ | \ {\sf Min}(T) \ | \ x \ | \ S\\
    {\sf Table Function} \ T := & {\sf Proj}(T, S) \ | \ {\sf Filter}(T, G(c, S)) \ | \  x \\
    {\sf Object Constants} \ S := & a \ | \  b \ | \  \ldots \\
    {\sf Function Constants} \ G := & f_1 \ | \  f_2 \ | \  \ldots \\
\end{array}
\]
\caption{Our formula language where $\oplus \in \{\land, \lor, \neg\}$, and $\between \in \{=, \leq, >\}$ . Formulas in this language are in the combined theory of equality, uninterpreted functions, and integers. In particular, $|\cdot|$, {\sf Proj}, {\sf Filter}, {\sf Max}, and {\sf Min} are uninterpreted functions, and $p$ represents propositional variables.}
\label{fig:formulalang}
\end{figure}

\begin{figure}
    \centering 
    \begin{mathpar}
    \inferrule*[Left=Logical Operators]{
      \encenv \vdash \phi_1 \rightsquigarrow F_1, \encenv_1 \\\\
      \encenv_1 \vdash \phi_2 \rightsquigarrow F_2, \encenv_2 \\\\
      \ldots \\\\
      \encenv_{n-1} \vdash \phi_n \rightsquigarrow F_n, \encenv_n
    } {
     \encenv \vdash \oplus(\phi_1, \ldots, \phi_n) \rightsquigarrow \oplus(F_1, \ldots, F_n), \encenv_n
    } \\\\
    \inferrule*[Left=Semantic Term]{
        \encenv \vdash \theta_1 \rightsquigarrow E_1, \encenv_1 \ \ \ \ \encenv_1 \vdash \theta_2 \rightsquigarrow E_2, \encenv_2
    } {
     \encenv \vdash \theta_1 \between \theta_2 \rightsquigarrow E_1 \between E_2, \encenv_2
    } \ \ \ \ \ \ \ \ \ \
    \inferrule*[Left=Card]{
        \encenv \vdash \tablesym \rightsquigarrow T, \encenv'
    } {
     \encenv \vdash |\tablesym| \rightsquigarrow | T |, \encenv'
    } \\\\
    \inferrule*[Left=Syn-1]{
        \prov(x.\enc, \provop) \in dom(\encenv)
    } {
     \encenv \vdash \prov(x.\enc, \provop) \rightsquigarrow \encenv(\prov(x.\enc, \provop))
    } \ \ \ \ \ \ \ \ \ \ \ \ 
     \inferrule*[Left=Syn-2]{
        \prov(x.\enc, \provop) \not\in dom(\encenv) \ \ \ \ {\sf fresh}\ p' 
    } {
     \encenv \vdash \prov(x.\enc, \provop) \rightsquigarrow p', \encenv[\prov(x.\enc, \provop) \gets p']
    } \\\\
    \inferrule*[Left=Var-1]{
        x \in dom(\encenv)
    } {
     \encenv \vdash x \rightsquigarrow \encenv(x)
    } \ \ \ \ \ \ \ \ \ \ \ \ \ \ \ \ \ \ 
    \inferrule*[Left=Var-2]{
        x \not\in dom(\encenv) \ \ \ \ {\sf fresh} \ v
    } {
     \encenv \vdash x \rightsquigarrow v, \encenv[x \gets v]
    } \\\\
    \inferrule*[Left=Filter Op-1]{
        op \in dom(\encenv)
    } {
     \encenv \vdash op \rightsquigarrow \encenv(x)
    } \ \ \ \ \ \ \ \ \ \ \ \ \ \ \ \ \ \ \ \ 
     \inferrule*[Left=Filter Op-2]{
        op \not\in dom(\encenv), {\sf fresh} \ f_i
    } {
     \encenv \vdash op \rightsquigarrow f_i, \encenv[op \gets f_i]
    }
    \\ \\
    \inferrule*[Left=Max]{
        \encenv \vdash \tablesym \rightsquigarrow T, \encenv'
    } {
     \encenv \vdash {\tt max}(\tablesym) \rightsquigarrow {\sf Max}(T), \encenv'
    } \ \ \ \ \ \ \ \ \ \ \ \ 
    \inferrule*[Left=Min]{
        \encenv \vdash \tablesym \rightsquigarrow T, \encenv'
    } {
     \encenv \vdash {\tt min}(\tablesym) \rightsquigarrow {\sf Min}(T), \encenv'
    } \\ \\
    \inferrule*[Left=Proj-1]{
        \encenv \vdash \tablesym \rightsquigarrow T, \encenv' \ \ \ \ \overline{c} \not\in dom(\encenv') \ \ \ \ {\sf fresh} \  s
    } {
     \encenv \vdash {\sf Proj}(\tablesym, \overline{c}) \rightsquigarrow {\sf Proj}(T, s), \encenv'[\overline{c} \gets s]
    } \\ \\
    \inferrule*[Left=Proj-2]{
        \encenv \vdash \tablesym \rightsquigarrow T, \encenv' \ \ \ \ \overline{c} \in dom(\encenv')
    } {
     \encenv \vdash {\sf Proj}(\tablesym, \overline{c}) \rightsquigarrow {\sf Proj}(T, \encenv'(\overline{c})), \encenv'
    }
    \\ \\
    \inferrule*[Left=Filter-1]{
        \encenv \vdash \tablesym \rightsquigarrow T, \encenv_1 \ \ \ \ \encenv_1 \vdash op \rightsquigarrow f_i, \encenv_2 \ \ \ val \in dom(\encenv)
    } {
     \vdash {\sf Filter}(\tablesym, c \ op \ val) \rightsquigarrow {\sf Filter}(T, f_i(c, \encenv(val))), \encenv_2
    } \\ \\
    \inferrule*[Left=Filter-2]{
        \encenv \vdash \tablesym \rightsquigarrow T, \encenv_1 \ \ \ \ \encenv_1 \vdash op \rightsquigarrow f_i, \encenv_2 \ \ \ val \not\in dom(\encenv) \ \ \ {\sf fresh} \ s
    } {
     \vdash {\sf Filter}(\tablesym, c \ op \ val) \rightsquigarrow {\sf Filter}(T, f_i(c, s)), \encenv_2[val \gets s]
    }
    \end{mathpar}
    \caption{Inference rules describing the {\sf Encode} procedure}
    \label{fig:encode}
\end{figure}

\section{Semantics of $\forget$ Operator}

In this section, we describe the $\forget$ operator introduced in Section \ref{sec:typingrules} in more detail. Here we assume the operator takes qualifiers of the form $\phi_s \land \phi_p$, where $\phi_s$ (resp $\phi_p$) is a boolean combination of semantic (resp. syntactic) constraints. We give the semantics of $\forget$ as a procedure shown in Figure \ref{fig:forget}. Given a qualifier $\phi_s \land \phi_p$  along with a set of terms $T$ to remove, we have $\phi_s \land \phi_p \forget T \equiv \textsc{Remove}(\phi_s \land \phi_p, T)$. 

We now describe this procedure in more detail. In line 2 we call {\sf Encode} (described in Section \ref{sec:encoding}) to encode our qualifiers as logical formulas. Then in lines 4-10, we iterate over all the terms $t$ in $T$. If $t$ is a semantic term (line 5), we generate a fresh variable $x$ and replace all occurrences of $t$ in $\Phi_s$ with $x$. We then update $\Phi_s$ to be its existential generalization (line 7). On the other hand, if $t$ is a syntactic term we perform a similar procedure, except $\Phi_p$ becomes a QBF formula. Finally, since the fresh variables we introduced represent the terms we want to forget, we compute the strongest formulas entailed by $\phi_s$ (resp. $\phi_p$) that don't contain the fresh variables. Since, QBF admits quantifier elimination \cite{quantelimqbf}, we can derive the strongest QFF formula entailed by $\phi_p$ by applying quantifier elimination (line 11). However, since our semantic constraints are expressed in the combined theory of equality, uninterpreted functions, and integers, which does not admit quantifier elimination, we instead the Cover algorithm \cite{cover} to compute the strongest formula (line 12). Finally, we convert $\Phi_s$ and $\Phi_p$ back into qualifiers and return their conjunction (line 13).

\paragraph{Optimization} In our implementation of {\sc Remove}, we apply two optimizations based on the following observations. First, we observe that nearly all our logical qualifiers are conjunctions of literals, and so we represent our qualifiers as sets of literals. As such, when removing a {syntactic} constraint, we simply remove all corresponding literals from the set. Second, as all our semantic constraints are linear inequalities, we encode our semantic constraints as formulas in Presburger Arithemtic. We then apply Fourier-Motzkin variable elimination when removing semantic terms. We illustrate these optimizations in the examples below.

\begin{example} (Forgetting {Syntactic Constraints})
Suppose we call $\phi \forget \prov(\nu.\coltarg, {\sf mean})$ where $\phi$ is  $|\nu, \{c_1\}| = 30 \land \neg \prov(\nu.\coltarg, {\sf mean}) \land \prov(\nu.c_2, {\sf count})$. Since this qualifier is a conjunction of literals, {\sc Remove} maintains a set of constraints $\{|\nu, {c_1}| = 30, \neg \prov(\nu.\coltarg, {\sf mean}), \prov(\nu.c_2, {\sf count})\}$. The only literal that corresponds to $\prov(\nu.\coltarg, {\sf mean})$ is $\neg \prov(\nu.\coltarg, {\sf mean})$ and so we drop that literal from the set. Thus, the formula returned by {\sc Remove} is $|\nu, {c_1}| = 30 \land \prov(\nu.c_2, {\sf count})$
\end{example}

\begin{example} (Forgetting Semantic Terms)
Suppose we call $\phi \forget |\nu, \{c_1\}|$ where $\phi$ is  $|\nu, \{c_1\}| \leq  |\nu, \{c_2\}| \land |\nu, \{c_1\}| = 30$. {\sc Remove} internally constructs an equisat formula $x_1 \leq x_2 \land x_1 = 30$ in the theory of integers where $x_1$, and $x_2$ are fresh variables that occur freely and $x_1 \to |\nu, \{c_1\}|$, $x_2 \to |\nu, \{c_2\}|$. It then applies Fourier-Motzkin variable elimination on $x_1$ to get the constraint $30 \leq x_2$, and then decodes the formula back to the qualifier $30 \leq |\nu, \{c_2\}|$.
\end{example}
\begin{figure}[!t]
\small
\vspace{-10pt}
\begin{algorithm}[H]
\begin{algorithmic}[1]
\Procedure{Remove}{$\phi_s \land \phi_p$, $T$}
\vspace{0.05in}
\Statex\Input{Logical qualifier of the form $\phi_s \land \phi_p$ where $\phi_s$ is a boolean combination of semantic constraints, and $\phi_p$ is a boolean combination of syntactic constraints. }
\Statex\Input{A set of terms $T$ to forget}
\Statex\Output{A qualifier $\phi$ that does not contain any terms in $T$ and whose encoding is the strongest QFF formula that is entailed by the encoding of $\phi_s \land \phi_p$.}
\vspace{0.05in}
\State $\Phi_s \assign {\sf Encode}(\phi_s)$; $\Phi_p \assign {\sf Encode}(\phi_p)$
\State $V_s = \{\}$; $V_p = \{\}$
\ForAll{$t \in T$}
\If{{\sf IsSemantic}($t$)}
\State $x \assign {\sf GetFreshVar}()$
\State $\Phi_s \assign \exists x. \Phi_s[x/{\sf Encode}(t)]$; $V_s = V_s \cup \{x\}$
\ElsIf{{\sf IsSyntactic}($t$)}
\State $p \assign {\sf GetFreshPropVar}()$
\State $\Phi_p \assign \exists p. \Phi_p[p/{\sf Encode}(t)]$; $V_p = V_p \cup \{p\}$
\EndIf
\EndFor
\State $\Phi_p \assign {\sf EliminateBoolVars}(\Phi_p, V_p)$
\State $\Phi_s \assign {\sf ComputeCover}(\Phi_s, V_s)$
\State \Return ${\sf Decode}(\Phi_s) \land {\sf Decode}(\Phi_p)$
\EndProcedure
\end{algorithmic}
\end{algorithm}
\vspace*{-1.0cm}
\caption{Procedure encoding semantics of $\forget$.}
\label{fig:forget}
\end{figure}

\section{Training Parameters for the Parser}

Each intent-and-slot-filling model in the parser is trained for 20 epochs using the AdamW optimizer \cite{loshchilov2018decoupled} with a batch size of 16. We use a learning rate of $2e-05$ for BERT, which is one of the standard suggested learning rates \cite{bert}, a warm-up ratio of $0.05$, and an input dropout rate of 0.2 to reduce overfitting \cite{dropout}. We trained the models on one NVIDIA Quadro RTX 8000 with 48GB of memory. Each training run of the model took around 10 minutes. 

\section{User Study Procedure}
In this section, we describe our user-study protocol in more detail. 

\paragraph{User study sessions} Our user study was completed in 12 sessions, one for each participant. The participants used the same laptop, which had Excel and \toolname installed, across all sessions. 

\paragraph{Participant Introduction} We started each user study session by first describing the task that the participant needed to accomplish. In particular, we asked them to reproduce two plots shown in Figure~\ref{fig:user_study} using both Excel and \toolname. We chose Excel as the baseline because it is a common data visualization tool that is designed to be accessible to non-expert users. In order to minimize the effect of knowledge transfer, we randomly determined whether a participant was first given access to \toolname or to Excel. 

\paragraph{Plot Selection.} To avoid biasing the study in \toolname's favor, we randomly selected two plots of different types from the Cars domain in {\sc NLVCorpus} for the participants to reproduce. To ensure that the plots were reasonably challenging we only selected among plots that required data aggregation operations in the table transformation. 
\begin{figure}
    \centering
    \includegraphics[width=0.45\textwidth, trim= 30 550 350 0, clip]{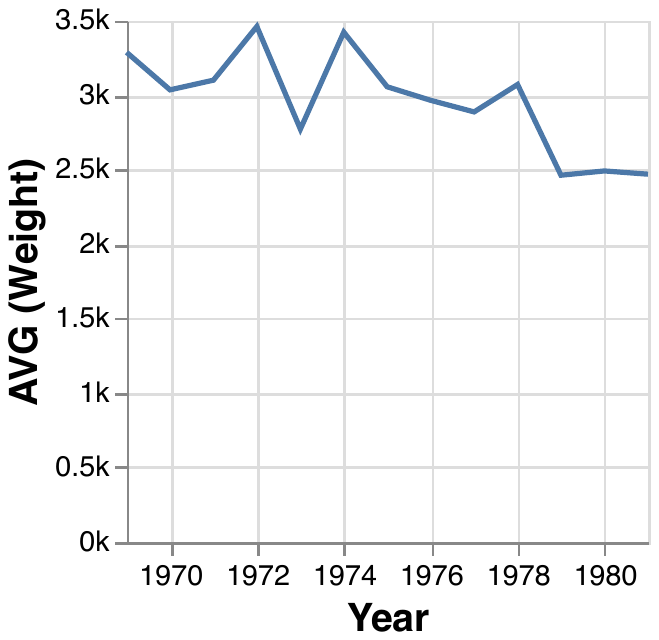}
    \includegraphics[width=0.45\textwidth, trim= 30 550 350 0, clip]{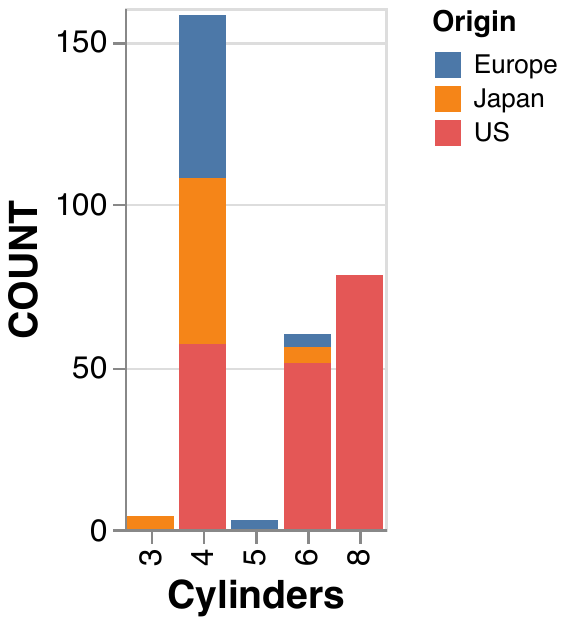}
    \caption{Two plots used in the user study}
    \label{fig:user_study}
\end{figure}

\paragraph{Dataset Introduction} After instructing the participants on what they needed to do, we showed them the relational table {\sc Cars} that the plots are based on. We gave them 2 minutes to get familiar with the data set and ask any questions related about it. After they were familiarized, we gave each user a training session for each tool.

\paragraph{Excel Training} We first introduced the Excel spreadsheet interface and showed the participants (1) how to make a scatter plot, and (2) how to do data aggregation using PivotTable, a feature in Excel that enables users to do data aggregation without any coding knowledge. Afterwards, we gave the users 10 minutes to play around with the tool. We encouraged them to try and produce a line chart and a bar chart. In addition to the training session, we also provided a ``cheat-sheet'' that included Excel documentation that we thought would be helpful to the user when performing the task. Finally, we allowed them to search online for help during the study.

\begin{figure}
    \centering
    \fbox{\includegraphics[width=1.0\textwidth]{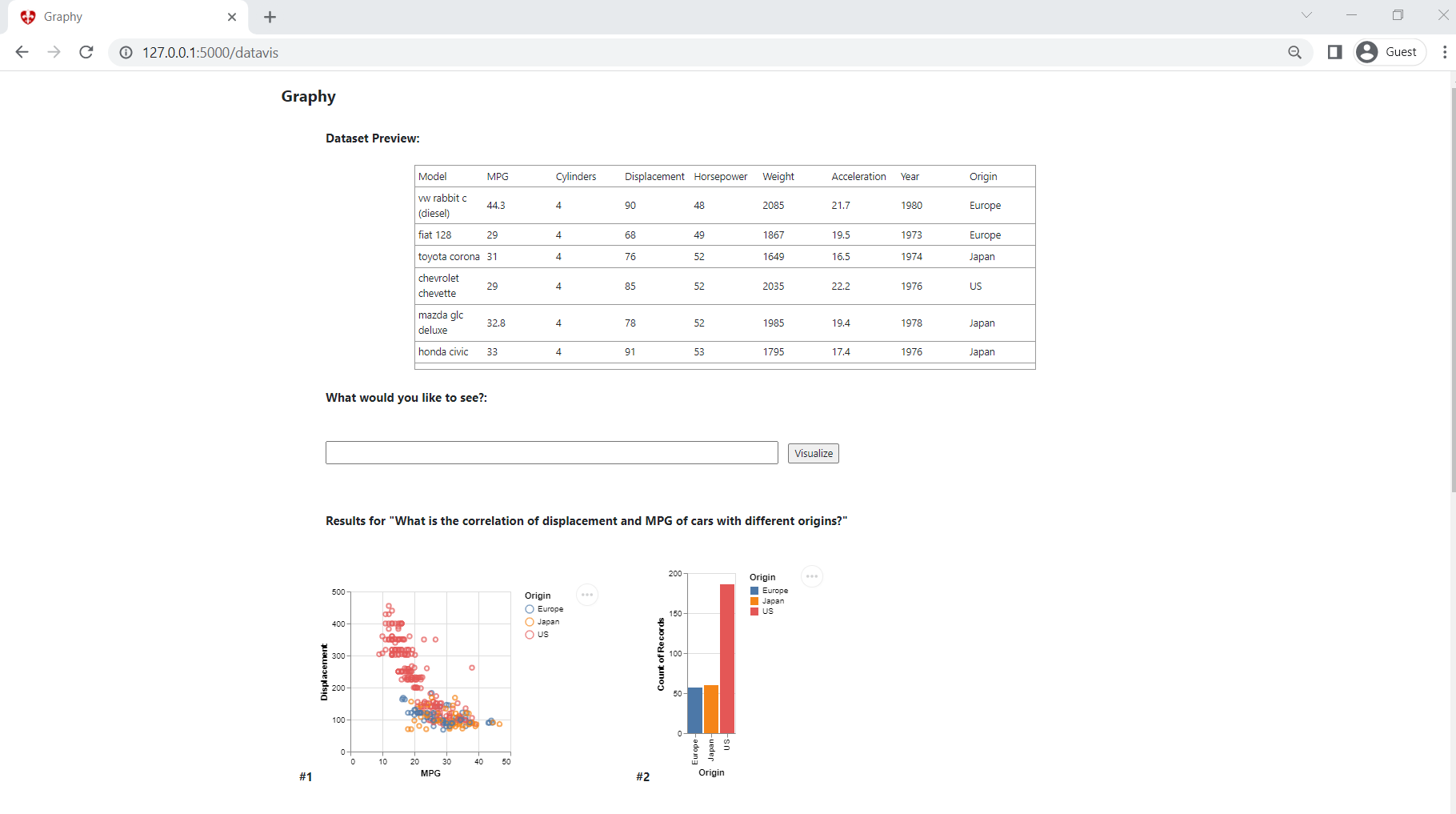}}
    \caption{\toolname Interface}
    \label{fig:graphy_interface}
\end{figure}

\paragraph{\toolname Training} Like with Excel, we started the training by introducing the participant to \toolname's UI, shown in Figure \ref{fig:graphy_interface}.  To avoid biasing the user in any way, we did not present any examples about how to use \toolname but simply asked the user to try it themselves for 5 minutes.

\paragraph{User Study Workflow} Once the participant was familiar with the data set and the tools, we gave them 2 minutes to familiarize themselves with the plots they needed to reproduce. We told the user to let us know when we could start timing and when they thought they had finished the task. In total, each participant had an hour to complete all the tasks using all the tools (15 minutes per task per tool). We made it clear to the participants that they were not required to reproduce \emph{exactly} the same plot as shown in the ground truth, and they could consider themselves to be finished as long as they thought they had produced a plot that conveys the same meaning. 

When using \toolname, users would enter a natural language query, and \toolname would return the top-$10$ results as visualizations back to the user. The user would then skim through the graphs and choose a visualization if they thought was equivalent to the ground truth. If the participant decided none of the visualizations shown was the one they wanted, they could try again by entering a different query. For Excel, we provided the user a spreadsheet that contained the table to be visualized so they did not need to import the data to Excel. During their time working on the plot, the user was allowed to use any resources such as searching the Internet or using the ``cheat-sheet'' we provide.

At the end of the session, we went over the participants' solutions and collected data on how many of the tasks they successfully solved, as well as the time it took to solve them with Excel and \toolname.

\end{document}